\documentclass[twoside]{article}

%

\usepackage{jmlr2e}
\usepackage{hyperref}
\hypersetup{
    colorlinks=true,
    linkcolor=blue,
    filecolor=red,      
    urlcolor=blue,
    citecolor=blue,
    linktocpage
}

\usepackage{comment}



\usepackage{indentfirst}

\usepackage{bigints}
\usepackage{nccmath} 
\usepackage{xpatch}
\xpatchcmd{\NCC@ignorepar}{%
\abovedisplayskip\abovedisplayshortskip}
{%
\abovedisplayskip\abovedisplayshortskip%
\belowdisplayskip\belowdisplayshortskip}
{}{}

\usepackage{showidx}
\usepackage{tikz}
\usepackage{optidef}
\usepackage{graphicx}
\usepackage{xcolor}
\usepackage{hyperref}
\usepackage[english]{babel}

\usepackage{hyperref}

\usepackage[linesnumbered,ruled,vlined, commentsnumbered]{algorithm2e}

\usepackage{tabularx, floatrow, ragged2e, booktabs, caption}
\usepackage{verbatim}
\usepgflibrary{arrows}
\usepackage{verbdef}
\usepackage{color}
\usepackage{graphicx}
\usepackage[european]{circuitikz}
\usepackage{arydshln}
\usepackage[T1]{fontenc}
\usepackage[thinlines]{easytable}

\usepackage{mathtools}
\usepackage[amsmath]{empheq}

\usepackage{doi}


\tikzset{>=latex}

\tikzset{%
block/.style    = {draw, thick, rectangle, minimum height = 3em,
    minimum width = 3em},
  block1/.style    = {draw, thick, rectangle, minimum height = 1.7em,
    minimum width = 1.7em,fill=gray!70},
      block2/.style    = {draw, thick, rectangle, minimum height = 1.7em,
    minimum width = 1.7em,fill=gray!30},
  sum/.style      = {draw, circle, node distance = 1.8cm}, 
}





\firstpageno{1}

\renewcommand{\tilde}{\widetilde}

\renewcommand{\hat}{\widehat}
\newcommand{\FF}{{{\rm I \kern -0.2em R}}}
\newcommand{\RR}{{{\rm I \kern -0.2em R}}}

\newcommand{\CC}{{{\mbox{\rm \hspace*{0.05ex}
\rule[.18ex]{.18ex}{1.24ex} \kern -.65em C}}}} 

\newcommand{\bea}{\begin{eqnarray}}
\newcommand{\eea}{\end{eqnarray}}

\newtheorem{thm}{Theorem}[section]
\newtheorem{rem}{Remark}
\newtheorem{prop}[thm]{Proposition}
\newtheorem{lem}[thm]{Lemma}

\newtheorem{defn}[thm]{Definition}
\newtheorem{assumption}{Assumption}

\newcommand{\ba}{\left[ \begin{array}}
\newcommand{\baa}{\begin{array}}
\newcommand{\ea}{\end{array} \right]}
\newcommand{\eaa}{\end{array}}

\newcommand{\be}{\begin{equation}}
\newcommand{\ee}{\end{equation}}
\newcommand{\bb}{\begin{equation}\label}




\newcommand{\thmref}[1]{Theorem~\ref{#1}}

\newcommand{\lemref}[1]{Lemma~\ref{#1}}

\newcommand{\remref}[1]{Remark~\ref{#1}}
\newcommand{\defref}[1]{Definition~\ref{#1}}
\newcommand{\propref}[1]{Proposition~\ref{#1}}

\newcommand{\figref}[1]{Figure~\ref{#1}}
\def\math#1{\ifmmode{#1} \else {$#1$}\fi}

\newcommand{\sg}{\ifmmode \Sigma \else $\Sigma$ \fi}


\date{}

\begin{document}

\title{Sample Complexity of the Robust LQG Regulator \\  with Coprime Factors Uncertainty}

\author{Yifei Zhang$^\sharp$,\thanks{$^\sharp$Yifei Zhang
is with the Electrical and Computer Engineering Deptartment,
Stevens Institute of Technology, Hoboken, NJ 07030 USA.
email: yzhang133@stevens.edu}
Sourav Kumar Ukil$^\ddag$ \thanks{$^\sharp$Sourav Kumar Ukil
is with the Electrical and Computer Engineering Deptartment,
Stevens Institute of Technology, Hoboken, NJ 07030 USA.
email: sukil@stevens.edu}
and Serban Sabau$^\flat$ \thanks{$^\sharp$\c{S}erban Sab\u{a}u
is with the Electrical and Computer Engineering Deptartment,
Stevens Institute of Technology, Hoboken, NJ 07030 USA.
email: ssabau@stevens.edu}}

\author{\name Yifei Zhang \email yzhang133@stevens.edu \\
       \addr Department of Electrical and Computer Engineering\\
       Stevens Institute of Technology,
       NJ 07030, USA
       \AND
       \name Sourav Kumar Ukil \email sukil@stevens.edu \\
       \addr Department of Electrical and Computer Engineering\\
       Stevens Institute of Technology,
       NJ 07030, USA
       \AND
       \name Ephraim Neimand \email eneimand@stevens.edu \\
       \addr Department of Electrical and Computer Engineering\\
       Stevens Institute of Technology,
       NJ 07030, USA
       \AND
       \name \c{S}erban Sab\u{a}u \email ssabau@stevens.edu \\
       \addr Department of Electrical and Computer Engineering\\
       Stevens Institute of Technology, 
       NJ 07030, USA
       \AND
       \name Myron E. Hohil \email myron.e.hohil.civ@mail.mil \\
       \addr 
U.S. Army Combat Capabilities Development Command \\ (CCDC)
Armaments Center,  Picatinny Arsenal, NJ 07806, USA}


\maketitle

\begin{abstract}
This paper addresses the end-to-end sample complexity bound for learning the $\mathcal{H}_2$ optimal controller (the Linear Quadratic Gaussian ({\bf LQG}) problem) with unknown dynamics, for potentially {\em unstable} Linear Time Invariant ({\bf LTI}) systems. The robust LQG synthesis procedure is performed by considering bounded additive model uncertainty on the coprime factors of the plant. The closed-loop identification of the nominal model of the true plant is performed  by constructing a Hankel-like matrix from a single time-series of noisy finite length input-output data, using the ordinary least squares algorithm from \cite{Dahleh2020}. Next, an $\mathcal{H}_{\infty}$ bound on the estimated model error is provided and the robust controller is designed via convex optimization, much in the spirit of  \cite{Boczar2018} and \cite{furieri2020}, while  allowing for bounded additive uncertainty on the coprime factors of the model. 
Our conclusions are consistent with previous results on learning the LQG and LQR controllers.
\end{abstract}
\begin{keywords}
Robust LQG Control, Coprime Factorization, LTI Systems, Sample Complexity.
\end{keywords}
$ \fontdimen14\textfont2=6pt
\fontdimen16\textfont2=2.5pt$
\section{Introduction}

Considerable research efforts have been spent within the last few years towards approaching classical control problems with modern statistical and optimization tools from the Machine Learning framework, envisaging  practical applications, see for example \cite{Dean2018}, \cite{mania2019}, \cite{dean2020}, \cite{furieri2020}.
The starting point of the aforementioned research efforts has been the classical LQG control problem,  which deals with partially observed linear and time-invariant dynamical systems driven by Gaussian noise and where the problem is finding the optimal output feedback law that minimizes the expected value of a quadratic cost.

In this paper an end-to-end sample-complexity bound on learning LQG controllers that stabilize the true system with  high probability is established by incorporating  recent advances in  finite time (non-asymptotic) system identification (\cite{Dahleh2020}). The contribution resides in the development of a tractable robust control synthesis procedure, that allows for   bounded additive model uncertainty on the coprime factors of the model of the plant, thus circumventing both the need for state feedback and the restrictive assumption on the plant's open loop stability.  The resulted sub-optimality gap is bounded as a function of the level of the model uncertainty. The  end-to-end sample complexity bound for learning robust LQG controllers is $\mathcal{O}(\sqrt{{logT}/{T}})$, where $T$ is the time horizon for learning. For open-loop stable systems, \cite{furieri2020} recently proved that the performance for LQG controllers deteriorates linearly with the model estimation error, starting from the original analysis of \cite{Dean2018} from the case of learning fully observed LQR controllers. The  robust control synthesis proposed here achieves the same scaling for the sub-optimality gap as \cite{Dean2018}, namely $\mathcal{O}(\gamma^2)$, where $\gamma$ is the model uncertainty level.


\subsection{The Linear Quadratic Gaussian Problem}

Within the last few years, modern statistical and algorithmic methods led to new solutions for classical control problems, such as the Linear Quadratic Gaussian problem. For a discrete-time {\bf  LTI} (Linear and Time Invariant) systems driven by Gaussian process and sensor noise:
\useshortskip
\begin{equation}\label{stateeq}
\begin{aligned}
      x_{k+1} & = {A} x_k + {B} u_k + {\delta x}_k,\\  y_k & = {C} x_k + {D} u_k + \nu _k,
\end{aligned}
\end{equation}
\noindent
where $x_k \in \mathbb{R}^n$ is the state of the system, $u_k \in \mathbb{R}^m$ is the control input and $y_k \in \mathbb{R}^p$ is the measurement output with ${\delta x}_k \in \mathbb{R}^m$, $\nu_k \in \mathbb{R}^p$ are Gaussian noise with zero mean, covariance $\sigma_{\delta x}^2 I$ and $\sigma_\nu^2 I$ 
respectively, the classical LQG control problem is defined as:
\noindent
\useshortskip
\begin{equation}\label{LQGCostOriginal}
\begin{aligned}
\min_{u_0, u_1, ...}  \quad  \lim\limits_{T \rightarrow \infty} &  \mathbb{E} \bigg[ \dfrac{1}{T} \sum \limits_{t=0}^{T} (y^T_t P_1 y_t + u^T_t P_2 u_t)  \bigg] \\
\textrm{subject to} \quad & \eqref{stateeq},
\end{aligned}
\end{equation}
where, $P_1, P_2$ is positive definite. Without loss of generality, it is assumed that $P_1 = I_p$, $P_2 = I_m$, $\sigma_{\delta x} = 1$, $\sigma_\nu = 1$.

In a nutshell, the problem can be stated as {\em learning}  with high probability and in finite time the model of an unknown LTI system and subsequently designing its optimal LQG controller, while accounting for the inherent model uncertainty incurred at the {\em learning} stage.

\subsection{The Main Technical Ingredient}

Identifying LTI models from input-output data has been the focus of time-domain identification. 
Using coprime factors instead of the state space realization of the system has a great advantage as it ensures to work on unstable system identification. The Transfer Function Matrix (\textbf{TFM}) of the plant is written as: 
\begin{equation}\label{plantequation}
    \textbf{G}(z) = \tilde{\textbf{M}}^{-1}(z) {\bf \tilde{N}}(z) = \textbf{N}(z)\textbf{M}^{-1}(z) 
\end{equation}
\noindent
where $\tilde{\textbf{M}}(z)$, $\tilde{\textbf{N}}(z)$, $\textbf{M}(z)$ and $\textbf{N}(z)$ are stable TFM. 
Doubly coprime factorization of the TFM of a LTI system plays a key role in many sectors of the factorization approach to filter synthesis and multi-variable control systems analysis. A doubly coprime factorization of a given LTI plant is closely related to the Parameterization of all stabilizing controllers for this plant.
With a realization of the TFM, various formulas to compute doubly coprime factorizations over the ring of stable and proper Rational Matrix Functions (\textbf{RMF}) have been proposed both for standard (proper) and for generalized (improper, singular, or descriptor) systems. 
The formulas are expressed either in terms of a stabilizable and detectable realization of the underlying TFM or make additional use of a realization of a full or reduced order observer based stabilizing controller.

\subsection{Contributions}

Recently, LQG control has been studied  in a  model based Reinforcement Learning framework (\cite{furieri2020}) and the sub-optimality performance degradation of the robust LQG controller was proved to scale as a function of the modeling error. However, the results in \cite{furieri2020} are valid only for open-loop stable systems, thus excluding many situations of practical interest. This paper shows how to remove the stability assumption on the unknown system, while at the same time streamlining the equivalent optimization problem, by reducing the size of the subsequent linear constraints.  The proposed algorithm  is consistent with previous results, while allowing for a much stronger description of the modeling error as bounded additive uncertainty on the coprime factors of the model of the plant (without restriction on the McMillan degree of the true plant or on its number of unstable poles). As expected, the presence of additive, norm-bounded factors on the coprimes of the plant renders the cost functional  non-convex, therefore the  derivation of an upper-bound on the cost functional is needed. This is subsequently exploited to derive a quasi-convex approximation of the robust LQG problem. An inner approximation of the quasi convex problems via FIR truncation is employed. Previous results (\cite{mania2019}, \cite{furieri2020}) show  that indeed the {\em certainty equivalent controller} may achieve superior sub-optimality scaling than our result, but only  for the fully observed LQR settings (\cite{mania2019}), in the setup of a  stricter requirement on admissible uncertainty.  Given the lack of prior gain margin  for the optimal LQG controller, which is known to be notoriously fragile, even under small model uncertainty the stabilizability of the resulted controller may be lost, thus the availability of a more general framework for modelling of uncertainty is important.

Existing non-asymptotic identification methods (\cite{Dahleh2020}) have been adapted in order to yield a comparable end-to-end sample complexity. The identification of the unstable plant is performed in closed loop, directly on its the coprime factors via the dual Youla Parameterization (\cite{Anderson1998}). The  algorithm employed for system identification doesn't require the knowledge of the model's order (\cite{Oymak1}), which is the common scenario in many applications.  
Pursuing the identification of the plant a $\mathcal{H}_2$ bound for Hankel matrix estimation with high probability is derived, followed by a $\mathcal{H}_{\infty}$ bound on the uncertainty on the coprime factors, which quantifies the modeling error. The robust controller design is recast as convex optimization for estimated nominal model within a {\em worst case scenario} on the uncertainty. For the output feedback of potentially unstable plants, the resulted  sample complexity result is matched to the same level as that obtained in recent papers (\cite{Boczar2018}, \cite{dean2020}, \cite{furieri2020}), where the robust so-called {\em SLP} or {\em IOP} procedures (\cite{Zheng2020a}) are used for  design. 




The paper is organized as follows: the general setup  is given in \hyperref[generalsetupandpreliminaries]{Section II}. The robust controller synthesis with uncertainty on the coprime factors is included in \hyperref[robustcontrollersynthesis]{Section III}. The sub-optimality guarantees are discussed in \hyperref[endtoendanalysis]{Section IV}. A brief discussion on the closed-loop system identification is provided in \hyperref[clsysid]{Section V} with end-to-end sample complexity results. Conclusions and future possible directions are given in \hyperref[conclusion]{Section V}. All the proofs are postponed to the \hyperref[appendix]{Appendices}, where literature review, mathematical preliminaries and few remarks also have been discussed briefly.

\section{General Setup and Technical Preliminaries}\label{generalsetupandpreliminaries}
The notation used in this paper is fairly common in control systems. 
Upper and lower case boldface letters (e.g. ${\bf z} $ and ${\bf G}$) are used to denote signals and transfer function matrices, and lower and upper case letters (e.g. $z$ and $A$) are used to denote vectors and matrices. The enclosed results are valid for  discrete-time linear systems, therefore $z$ denotes the complex variable associated  with the $\mathbf{Z}$-transform for discrete-time systems. 
A LTI system is {\em stable} if all the poles of its TFM are situated 
inside the unit circle for discrete time systems. The TFM of a LTI system is called {\em unimodular} if it is square, stable and has a stable inverse. For the sake of brevity the $z$ argument after a transfer function may be omitted. Some frequently used notation is listed in the next page.

 \begin {table}[h!]
\begin{tabular}{ >{\centering\arraybackslash}m{0.8in}  >{\arraybackslash}m{5in} }
\toprule[1.25pt] 
& {\bf Nomenclature of  Basic Notation}  \\
\midrule
 LTI & Linear and Time Invariant \\
 TFM & Transfer Function Matrix \\ 
 DCF & Doubly Coprime Factorization \\
 LCF & Left Coprime Factorization\\
 RCF &  Right Coprime Factorization\\
$x\overset{def}{=} y$ &  $x$ is by definition equal to $y$ \\ 
 $\mathbb{R}(z) $ & Set  of all real--rational  transfer functions \\
 $\mathbb{R}(z)^{p \times m} $ & Set of $p \times m$ matrices having all entries in $\mathbb{R} (z)$ \\

${\bf T}^{ \ell \varepsilon}$ & The TFM of the (closed-loop) map having $\varepsilon$ as  input   and $\ell$  as output \\
${\bf T}^{ \ell \varepsilon}_{\bf Q}$ & The TFM of the (closed-loop) map  from the exogenous signal $\varepsilon$ to the  signal $\ell$ inside the feedback loop, as a function of the Youla parameter ${\bf Q}$\\
$\| {\bf G} \|_{{F}}$  & Frobenius norm, Schur norm or $l_2$ norm of ${\bf G} \in \mathbb{R}(z)$, defined as $|tr(\bf GG^*)|^{1/2}$\\
$\|{\bf G}\|_{\infty}$ &  $\mathcal{H}_{\infty}$-norm of ${\bf G} \in \mathbb{R}(z)$, defined as 
$\sup_{\omega} \sigma_{\max} \big({\bf G}(e^{j \omega})\big)$\\
 $\| {\bf G} \|_{\mathcal{H}_{2}}$  & $\mathcal{H}_{2}$-norm of ${\bf G} \in \mathbb{R}(z)$, defined as 
 $ {{{\dfrac{1}{2\pi} \Big( \bigintsss \limits_{-\pi}^{\pi} tr \big({\bf G}^* (e^{j\omega}){\bf G}(e^{j\omega})\big) d\omega }}\Big)}^{1/2}$\\
$\texttt{pt}$ & Notations for true plant (e.g. ${\bf G}^{\mathtt{pt}}$, ${\bf K}^{\mathtt{pt}}$) \\
$\texttt{md}$ & Notations for nominal/estimated model (e.g. ${\bf G}^{\mathtt{md}}$, ${\bf K}^{\mathtt{md}}$)\\
${\bf G}^{\mathtt{pt}}$, ${\bf K}^{\mathtt{opt}}$ & True Plant, Optimal Controller\\
${\bf G}^{\mathtt{md}}$, ${\bf K}^{\mathtt{md}}_{\bf Q}$ & Estimated Model, Nominal stabilizing controller for any stable Youla parameter ${\bf Q}$\\

\bottomrule[1.25pt]
\end {tabular}
\end {table}

\subsection{Standard Unity Feedback} 


A standard unity feedback configuration is depicted in Figure~\ref{2Block}, where ${\bf G}\in \mathbb{R}(z)^{p \times m}$ is a multi-variable LTI plant and ${\bf K}\in \mathbb{R}(z)^{m \times p}$ is an LTI controller. Here $w$, $\nu$ and $r$  are the input disturbance, sensor noise and reference signal respectively while $u$, $z$ and $y$ are the controls, regulated signals  and measurements vectors, respectively.  
If all the closed--loop maps  from the exogenous signals $\displaystyle [ r^T\; \: w ^T \; \: \nu^T \;]^T$ to 
any point inside the feedback loop of Figure~\ref{2Block} are stable, then ${\bf K}$ is said to be an (internally) stabilizing controller of ${\bf G}$ or equivalently  that ${\bf K}$  stabilizes ${\bf G}$. 
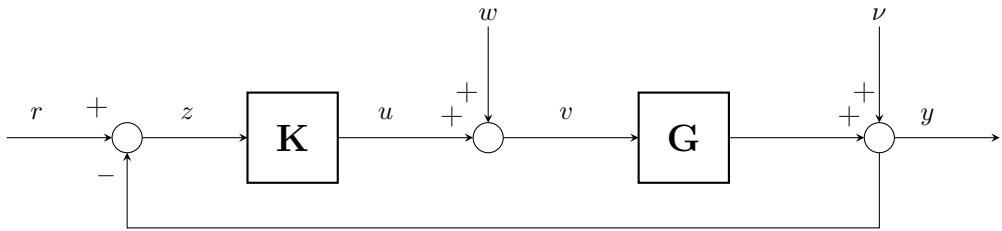
\begin{figure}[h]
\begin{tikzpicture}[scale=0.4]
\draw[xshift=0.1cm, >=stealth ] [->] (0,0) -- (3.5,0);
\draw[ xshift=0.1cm ]  (4,0) circle(0.5);
\draw[xshift=0.1cm] (3,1)   node {\bf{+}} (1,0.8) node {$r$};
\draw [xshift=0.1cm](6,0.8)   node {$z$} ;
\draw[ xshift=0.1cm,  >=stealth] [->] (4.5,0) -- (8,0);
\draw[ thick, xshift=0.1cm]  (8,-1.5) rectangle +(3,3);
\draw [xshift=0.1cm](9.5,0)   node {\Large{${\bf K}$}} ;
\draw[ xshift=0.1cm,  >=stealth] [->] (11,0) -- (15.5,0);
\draw[ xshift=0.1cm ]  (16,0) circle(0.5cm);
\draw [xshift=0.1cm](12.6,0.8)   node {$u$} ;
\draw [xshift=0.1cm](18.6,0.8)   node {$v$} ;
\draw [xshift=0.1cm] (14.8,0.7)   node {\bf{+}};
\draw[  xshift=0.1cm,  >=stealth] [->] (16,3.7) -- (16,0.5);
\draw [xshift=0.1cm] (16,3.6)  node[anchor=south] {$w$}  (15.3,1.5)  node {\bf{+}};
\draw[  xshift=0.1cm,  >=stealth] [->] (16.5,0) -- (21,0);
\draw[ thick, xshift=0.1cm ]  (21,-1.5) rectangle +(3,3) ;
\draw [xshift=0.1cm] (22.5,0)   node {\Large{${\bf G}$}} ;
\draw[ xshift=0.1cm,  >=stealth] [->] (24,0) -- (28.5,0);
\draw[ xshift=0.1cm ] (29,0)  circle(0.5);
\draw [xshift=0.1cm] (29,3.6)  node[anchor=south] {$\nu$}  (28.5,1.5)  node {\bf{+}};
\draw [xshift=0.1cm] (28,0.7)   node {\bf{+}};
\draw [xshift=0.1cm] (30.6,0.7)   node {$y$};
\draw[  xshift=0.1cm,  >=stealth] [->] (29.5,0) -- (33,0);
\draw[  xshift=0.1cm,  >=stealth] [->] (29,3.7) -- (29,0.5);
\draw[ xshift=0.1cm,  >=stealth] [->] (29,-0.5) -- (29,-3) -- (4,-3)-- (4, -0.5);
\draw [xshift=0.1cm] (3.3,-1.3)   node {\bf{--}};
\useasboundingbox (0,0.1);
\end{tikzpicture}
\caption{Standard unity feedback loop of the plant $\bf G$ with the controller $\bf K$}
\label{2Block}
\end{figure}

\noindent
The notation  ${\bf T}^{ \ell \varepsilon}$ is used to indicate  the
 mapping  from signal $\varepsilon$ to signal $\ell$  after combining 
all the ways in which $\ell$ is a function of $\varepsilon$ and solving any
feedback loops that may exist. For example, ${\bf T}^{zw}$ in Figure~\ref{2Block} is the mapping from the disturbances $w$ to the regulated measurements $z$.

\subsection{The Youla-Ku\c{c}era Parameterization}\label{2}

\begin{prop} \label{14Martie2019} 
Given a TFM ${\bf K} \in \mathbb{R}(z)^{m \times p}$, a fractional representation of the form ${\bf K}={\bf R}^{-1}{\bf S}$ with  ${\bf R}\in \mathbb{R}(z)^{m \times m}$, ${\bf S} \in \mathbb{R}(z)^{m \times p}$ is called a {\em left factorization} of {\bf K}. If ${\bf K}={\bf Y}^{-1}{\bf X}$ is a left factorization of ${\bf K}$ then  any other left factorization of ${\bf K}$ such as  ${\bf K}={\bf R}^{-1}{\bf S}$ is of the form ${\bf R}={\bf UY}$, ${\bf S}={\bf UX}$, for some invertible TFM ${\bf U}$.
\end{prop}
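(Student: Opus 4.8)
The plan is to exhibit the invertible TFM ${\bf U}$ explicitly and then verify that it satisfies both required identities. The natural starting point is the equality of the two left factorizations of the same ${\bf K}$, namely ${\bf Y}^{-1}{\bf X} = {\bf R}^{-1}{\bf S}$, which holds by hypothesis since both expressions equal ${\bf K}$.

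First I would define ${\bf U} := {\bf R}{\bf Y}^{-1}$. This is a well-defined $m \times m$ rational matrix because both ${\bf R}$ and ${\bf Y}$ occur as the inverted factors in their respective left factorizations, hence are invertible square TFMs in $\mathbb{R}(z)^{m \times m}$; consequently ${\bf U}$, being a product of two invertible square TFMs, is itself invertible over $\mathbb{R}(z)$. Next I would verify the first identity directly, by associativity of TFM multiplication: ${\bf U}{\bf Y} = {\bf R}{\bf Y}^{-1}{\bf Y} = {\bf R}$. For the second identity I would exploit the factorization equality: writing ${\bf Y}^{-1}{\bf X} = {\bf K}$, one computes ${\bf U}{\bf X} = {\bf R}{\bf Y}^{-1}{\bf X} = {\bf R}{\bf K} = {\bf R}({\bf R}^{-1}{\bf S}) = {\bf S}$, where the final step substitutes ${\bf K} = {\bf R}^{-1}{\bf S}$. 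This establishes ${\bf R} = {\bf U}{\bf Y}$ and ${\bf S} = {\bf U}{\bf X}$ with ${\bf U}$ invertible, exactly as claimed.

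There is essentially no serious obstacle here: the result is a routine consequence of associativity and the invertibility of the denominator factors, and in particular one cannot attempt to define ${\bf U}$ through ${\bf X}$ or ${\bf S}$, since these are the non-square numerator blocks in $\mathbb{R}(z)^{m\times p}$ and need not admit inverses. The only point requiring a moment's care is confirming that ${\bf U}$ is invertible as a rational matrix, i.e.\ that neither ${\bf R}$ nor ${\bf Y}$ is singular; this is immediate from their appearance as the inverted terms ${\bf R}^{-1}$ and ${\bf Y}^{-1}$ in the very definition of a left factorization. I would close by noting that the construction is reversible, so that conversely every invertible ${\bf U}$ produces, via ${\bf R} = {\bf U}{\bf Y}$ and ${\bf S} = {\bf U}{\bf X}$, another left factorization of the same ${\bf K}$, which pins down the equivalence class of left factorizations precisely up to left multiplication by an invertible TFM.
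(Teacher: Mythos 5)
Your proof is correct: the paper states Proposition~\ref{14Martie2019} without proof (it is a standard fact about fractional representations, cf.\ \cite{vidyasagar1985}), and your argument --- setting ${\bf U} = {\bf R}{\bf Y}^{-1}$, checking ${\bf U}{\bf Y}={\bf R}$ and ${\bf U}{\bf X}={\bf R}{\bf K}={\bf S}$, and noting that ${\bf U}$ is invertible because the denominator factors ${\bf R}$ and ${\bf Y}$ are --- is exactly the canonical one. Your closing remark about the converse direction is also accurate and harmless, though not required by the statement.
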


\noindent
Given a plant ${\bf G} \in \mathbb{R}(z)^{p \times m}$, a left  coprime factorization of ${\bf G}$  is defined by ${\bf G} = \tilde{\bf M}^{-1}\tilde{\bf N}$,  with $\tilde{\bf N}\in \mathbb{R}(z)^{p \times m}$, $ \tilde {\bf M} \in \mathbb{R}(z)^{p \times p} $ both stable and satisfying $\tilde{\bf M} \tilde{\bf Y} + \tilde{\bf N}\tilde{\bf X} =I_p$, for certain stable TFMs $\tilde{ \bf X} \in \mathbb{R}(z)^{m \times p}$, $\tilde {\bf Y} \in \mathbb{R}(z)^{p \times p}$. \\

\noindent
Analogously, a  right coprime factorization of ${\bf G}$  is defined by ${\bf G} = {\bf NM}^{-1}$ with  both factors ${\bf N}\in \mathbb{R}(z)^{p \times m}$, ${\bf M} \in \mathbb{R}(z)^{m \times m}$ being stable and for which there exist ${\bf X}\in \mathbb{R}(z)^{m \times p}$, ${\bf Y} \in \mathbb{R}(z)^{m \times m} $ also stable, satisfying  ${\bf YM} + {\bf XN} = I_m$  \cite[Ch.~4, Corollary~17]{vidyasagar1985}, with $I_m$ being the identity matrix.\\  


\begin{defn} \cite[Ch.4, Remark pp. 79]{vidyasagar1985} 
A collection of eight  stable TFMs $\big({\bf M}, {\bf  N}$, $\tilde {\bf  M}, \tilde {\bf  N}$, ${\bf X}, {\bf  Y}$, $\tilde {\bf  X}, \tilde {\bf  Y}\big)$ is called a  {\em doubly coprime factorization}  of ${\bf  G}$  if   $\tilde {\bf  M}$ and ${\bf M}$ are invertible, yield the
factorizations
${\bf G}=\tilde{\bf M}^{-1}\tilde{\bf N}={\bf NM}^{-1}$, and satisfy the following equality  (B\'{e}zout's identity):
\begin{equation}\label{bezoutidentity}
\ba{cc}   \tilde {\bf M} &  \tilde {\bf N} \\ - {\bf X} &  {\bf Y} \ea
\ba{cc}  \tilde {\bf Y} & -{\bf N} \\   \tilde {\bf X} &  {\bf M} \ea = I_{p+m},
\ba{cc}  \tilde {\bf Y} & -{\bf N} \\   \tilde {\bf X} &  {\bf M} \ea 
\ba{cc}   \tilde {\bf M} &  \tilde {\bf N} \\ - {\bf X} &  {\bf Y} \ea
 =  I_{p+m}.
\end{equation}
\end{defn}

\vspace{3pt}
\begin{thm} \label{Youlaaa}
{\bf (Youla-Ku\u{c}era)} \cite[Ch.5, theorem 1]{vidyasagar1985} Let  $\big({\bf M}, {\bf  N}$, $\tilde {\bf  M}, \tilde {\bf  N}$, ${\bf X}, {\bf  Y}$, $\tilde {\bf  X}, \tilde {\bf  Y}\big)$ be a doubly coprime factorization of ${\bf G}$. Any controller ${\bf K}_{\bf Q}$ stabilizing the plant ${\bf G}$, in the feedback interconnection of Figure~\ref{2Block}, can be written as
\begin{equation}
\label{YoulaEq}
{\bf K_Q}={\bf Y}_{\bf Q}^{-1}{\bf X_Q} = {\bf \tilde{X}}_{\bf Q}  {\bf \tilde{Y}}_{\bf Q}^{-1},
 \end{equation}
where
${\bf X_Q}$, ${\bf \tilde{X}_Q}$, ${\bf Y_Q}$ and ${\bf \tilde{Y}_Q}$ are defined as:
\begin{equation} \label{5}
\begin{split}
{\bf X_Q}  \overset{def}{=}  {\bf X}+{\bf Q} \tilde{\bf M}, \quad
&{\bf \tilde{X}_Q}  \overset{def}{=}  \tilde{\bf X}+{\bf MQ}, \quad \\
{\bf Y_Q}  \overset{def}{=}  {\bf Y} - {\bf Q} \tilde{\bf N}, \quad 
&{\bf \tilde{Y}_Q}  \overset{def}{=}  \tilde{\bf Y}-{\bf NQ}  
\end{split}
\end{equation}
for some stable ${\bf Q}$ in $\mathbb{R}(z)^{m\times p}$. It also holds that ${\bf K_Q}$ from (\ref{YoulaEq}) stabilizes ${\bf G}$, for any stable ${\bf Q}$ in $\mathbb{R}(z)^{m\times p}$.
\end{thm}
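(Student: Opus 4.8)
The plan is to establish the two halves of the statement in turn: \emph{sufficiency}, that every ${\bf K_Q}$ built from (\ref{5}) with stable ${\bf Q}$ stabilizes ${\bf G}$, and \emph{necessity}, that every controller stabilizing ${\bf G}$ equals ${\bf K_Q}$ for some stable ${\bf Q}$. The organizing idea is that the update (\ref{5}) is nothing but an elementary (hence unimodular) operation on the two square arrays of the B\'{e}zout identity (\ref{bezoutidentity}), so the doubly coprime structure is carried over intact. Throughout I will use the two block identities packaged in (\ref{bezoutidentity}) together with the standard characterization that, for the loop of Figure~\ref{2Block}, a controller admitting the right coprime factorization ${\bf K}={\bf \tilde{X}}_c {\bf \tilde{Y}}_c^{-1}$ stabilizes ${\bf G}=\tilde{\bf M}^{-1}\tilde{\bf N}$ if and only if $\tilde{\bf M}{\bf \tilde{Y}}_c + \tilde{\bf N}{\bf \tilde{X}}_c$ is unimodular.

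For sufficiency I would first record, directly from (\ref{5}), the factorizations
\[
\ba{cc} \tilde{\bf M} & \tilde{\bf N} \\ -{\bf X_Q} & {\bf Y_Q} \ea
= \ba{cc} I & 0 \\ -{\bf Q} & I \ea
\ba{cc} \tilde{\bf M} & \tilde{\bf N} \\ -{\bf X} & {\bf Y} \ea,
\qquad
\ba{cc} {\bf \tilde{Y}_Q} & -{\bf N} \\ {\bf \tilde{X}_Q} & {\bf M} \ea
= \ba{cc} \tilde{\bf Y} & -{\bf N} \\ \tilde{\bf X} & {\bf M} \ea
\ba{cc} I & 0 \\ {\bf Q} & I \ea .
\]
Multiplying the two right-hand sides and collapsing the middle product by (\ref{bezoutidentity}) leaves $\ba{cc} I & 0 \\ -{\bf Q} & I \ea \ba{cc} I & 0 \\ {\bf Q} & I \ea = I_{p+m}$, so the octuple $\big({\bf M},{\bf N},\tilde{\bf M},\tilde{\bf N},{\bf X_Q},{\bf Y_Q},{\bf \tilde{X}_Q},{\bf \tilde{Y}_Q}\big)$ again satisfies the full B\'{e}zout identity (\ref{bezoutidentity}). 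Its $(2,1)$ block reads ${\bf Y_Q}{\bf \tilde{X}_Q}={\bf X_Q}{\bf \tilde{Y}_Q}$, which yields the claimed equality ${\bf Y_Q}^{-1}{\bf X_Q}={\bf \tilde{X}_Q}{\bf \tilde{Y}_Q}^{-1}$ in (\ref{YoulaEq}) whenever ${\bf Y_Q}$ and ${\bf \tilde{Y}_Q}$ are invertible; and since the plant factors are untouched while the new controller factors now close the B\'{e}zout identity, ${\bf K_Q}$ stabilizes ${\bf G}$.

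For necessity, let ${\bf K}$ stabilize ${\bf G}$ and fix any right coprime factorization ${\bf K}={\bf \tilde{X}}_c{\bf \tilde{Y}}_c^{-1}$ with ${\bf \tilde{X}}_c,{\bf \tilde{Y}}_c$ stable. By the stability characterization, $\tilde{\bf M}{\bf \tilde{Y}}_c+\tilde{\bf N}{\bf \tilde{X}}_c$ is unimodular; right-multiplying the pair $({\bf \tilde{X}}_c,{\bf \tilde{Y}}_c)$ by its stable inverse changes neither the ratio ${\bf K}$ nor the coprimeness, so I may assume $\tilde{\bf M}{\bf \tilde{Y}}_c+\tilde{\bf N}{\bf \tilde{X}}_c=I_p$. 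Applying the left B\'{e}zout factor $\ba{cc}\tilde{\bf M} & \tilde{\bf N} \\ -{\bf X} & {\bf Y}\ea$ to the column $\ba{c}{\bf \tilde{Y}}_c \\ {\bf \tilde{X}}_c\ea$ then returns $I_p$ in its first block and, in its second block, the quantity ${\bf Q}\overset{def}{=}{\bf Y}{\bf \tilde{X}}_c-{\bf X}{\bf \tilde{Y}}_c$, which is stable because ${\bf X},{\bf Y},{\bf \tilde{X}}_c,{\bf \tilde{Y}}_c$ all are. Multiplying the column $\ba{c}I_p \\ {\bf Q}\ea$ on the left by the inverse array $\ba{cc}\tilde{\bf Y} & -{\bf N} \\ \tilde{\bf X} & {\bf M}\ea$ recovers ${\bf \tilde{Y}}_c=\tilde{\bf Y}-{\bf NQ}={\bf \tilde{Y}_Q}$ and ${\bf \tilde{X}}_c=\tilde{\bf X}+{\bf MQ}={\bf \tilde{X}_Q}$, whence ${\bf K}={\bf \tilde{X}_Q}{\bf \tilde{Y}_Q}^{-1}={\bf K_Q}$ for this stable ${\bf Q}$.

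The algebra above is pure bookkeeping with (\ref{bezoutidentity}); the one genuinely analytic ingredient, and thus the main obstacle, is the stability characterization invoked in the necessity step. Justifying that internal stability of the negative unity feedback interconnection of Figure~\ref{2Block} is \emph{exactly} equivalent to unimodularity of $\tilde{\bf M}{\bf \tilde{Y}}_c+\tilde{\bf N}{\bf \tilde{X}}_c$ requires the closed-loop stability theory for coprime factorizations, and it is here that the sign conventions of the loop and the well-posedness caveat $\det({\bf Y_Q})\neq 0$ must be carried consistently. Everything else follows mechanically from the preserved doubly coprime structure.
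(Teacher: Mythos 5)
The paper does not prove this theorem; it is quoted verbatim from the cited reference \cite[Ch.5, Theorem~1]{vidyasagar1985}. Your argument is correct and is essentially the classical proof from that source: the observation that (\ref{5}) amounts to pre/post-multiplication of the B\'{e}zout arrays by the unimodular blocks $\left[\begin{smallmatrix} I & 0 \\ \mp{\bf Q} & I \end{smallmatrix}\right]$ gives sufficiency, and normalizing a right coprime factorization of an arbitrary stabilizing ${\bf K}$ so that $\tilde{\bf M}{\bf \tilde{Y}}_c+\tilde{\bf N}{\bf \tilde{X}}_c=I_p$ and reading off ${\bf Q}={\bf Y}{\bf \tilde{X}}_c-{\bf X}{\bf \tilde{Y}}_c$ gives necessity. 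You correctly isolate the only nontrivial external ingredient, namely the equivalence of internal stability of the loop of Figure~\ref{2Block} with unimodularity of $\tilde{\bf M}{\bf \tilde{Y}}_c+\tilde{\bf N}{\bf \tilde{X}}_c$, which is exactly the lemma supplied by the cited text.
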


\begin{prop} \label{2018April29}
\label{rem:232pm27feb2012}
Starting from any doubly coprime factorization (\ref{bezoutidentity}), the following identity  
\begin{equation}\label{dcrelQ}
\! \ba{cl} \! {\bf U}_1 \tilde {\bf M} &  {\bf U}_1 \tilde {\bf N} \\  - {\bf U}_2  {\bf X_Q}&   {\bf U}_2 {\bf Y_Q} \! \ea \!
\ba{lr} \!   \tilde {\bf Y}_{\bf Q} {\bf U}^{-1}_1 & -{\bf N} {\bf U}^{-1}_2  \\   \tilde {\bf X}_{\bf Q} {\bf  U}^{-1}_1  & {\bf M} {\bf U}^{-1}_2 \!  \ea = I_{p+m}.
\end{equation}
provides the class of {\em all} doubly coprime factorizations of ${\bf G}$, where ${\bf Q}$ is  stable  in $\mathbb{R}(z)^{m\times p}$ and ${\bf U}_1 \in \mathbb{R}(z)^{p\times p}$, ${\bf U}_2 \in \mathbb{R}(z)^{m \times m}$ are both unimodular.  
\end{prop}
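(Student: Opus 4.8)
The plan is to prove the two inclusions separately: first that every matrix product of the form (\ref{dcrelQ}) is a doubly coprime factorization (DCF) of ${\bf G}$, and then that every DCF of ${\bf G}$ can be written in this form.

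For the first (soundness) direction I would factor the left-hand side of (\ref{dcrelQ}) as $\diag({\bf U}_1,{\bf U}_2)$ times the core product $\left[\begin{smallmatrix}\tilde{\bf M} & \tilde{\bf N}\\ -{\bf X_Q} & {\bf Y_Q}\end{smallmatrix}\right]\left[\begin{smallmatrix}\tilde{\bf Y}_{\bf Q} & -{\bf N}\\ \tilde{\bf X}_{\bf Q} & {\bf M}\end{smallmatrix}\right]$ times $\diag({\bf U}_1^{-1},{\bf U}_2^{-1})$, so that the unimodular factors cancel once the core product is shown to equal $I_{p+m}$. That core identity is a direct block computation: the off-diagonal blocks vanish because the two factorizations of the plant agree, $\tilde{\bf N}{\bf M}=\tilde{\bf M}{\bf N}$; the diagonal blocks reduce to $I_p$ and $I_m$ after substituting (\ref{5}), since every cross term carries a factor of $\tilde{\bf M}{\bf N}-\tilde{\bf N}{\bf M}=0$ and the surviving terms collapse through the original B\'ezout relations $\tilde{\bf M}\tilde{\bf Y}+\tilde{\bf N}\tilde{\bf X}=I_p$ and ${\bf X}{\bf N}+{\bf Y}{\bf M}=I_m$ of (\ref{bezoutidentity}). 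I would also note that the resulting factors stay stable (unimodular times stable) and reproduce ${\bf G}$, e.g.\ $({\bf U}_1\tilde{\bf M})^{-1}({\bf U}_1\tilde{\bf N})=\tilde{\bf M}^{-1}\tilde{\bf N}={\bf G}$ and $({\bf N}{\bf U}_2^{-1})({\bf M}{\bf U}_2^{-1})^{-1}={\bf N}{\bf M}^{-1}={\bf G}$.

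The harder (completeness) direction is where the work lies. Given an arbitrary DCF with factors $\tilde{\bf M}',\tilde{\bf N}',{\bf N}',{\bf M}',{\bf X}',{\bf Y}',\tilde{\bf X}',\tilde{\bf Y}'$, I observe that $(\tilde{\bf M}',\tilde{\bf N}')$ and $(\tilde{\bf M},\tilde{\bf N})$ are two left coprime factorizations of the same ${\bf G}$, so by Proposition~\ref{14Martie2019} they differ by an invertible left factor ${\bf U}_1$. I would then upgrade ${\bf U}_1$ to \emph{unimodular}: writing ${\bf U}_1=\tilde{\bf M}'\tilde{\bf Y}+\tilde{\bf N}'\tilde{\bf X}$ exhibits ${\bf U}_1$ as stable, and the new factorization's own B\'ezout certificate gives ${\bf U}_1^{-1}=\tilde{\bf M}\tilde{\bf Y}'+\tilde{\bf N}\tilde{\bf X}'$ as stable. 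The symmetric argument on the right coprime factors yields a unimodular ${\bf U}_2$ with ${\bf N}'={\bf N}{\bf U}_2^{-1}$ and ${\bf M}'={\bf M}{\bf U}_2^{-1}$. Conjugating the given factorization by $\diag({\bf U}_1^{-1},{\bf U}_2^{-1})$ on the left and $\diag({\bf U}_1,{\bf U}_2)$ on the right produces a normalized DCF that shares the original $\tilde{\bf M},\tilde{\bf N},{\bf N},{\bf M}$ and differs only in its ${\bf X},{\bf Y}$ blocks $\hat{\bf X},\hat{\bf Y},\tilde{\hat{\bf X}},\tilde{\hat{\bf Y}}$.

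It then remains to identify these normalized blocks with the Youla-shifted factors (\ref{5}). The key step is to set ${\bf Q}\overset{def}{=}(\hat{\bf X}-{\bf X})\tilde{\bf Y}+({\bf Y}-\hat{\bf Y})\tilde{\bf X}$, which is manifestly stable, and to verify ${\bf Q}\tilde{\bf M}=\hat{\bf X}-{\bf X}$ and ${\bf Q}\tilde{\bf N}={\bf Y}-\hat{\bf Y}$ by substituting the mixed B\'ezout relations $\tilde{\bf Y}\tilde{\bf M}=I_p-{\bf N}{\bf X}$, $\tilde{\bf X}\tilde{\bf M}={\bf M}{\bf X}$, $\tilde{\bf Y}\tilde{\bf N}={\bf N}{\bf Y}$, $\tilde{\bf X}\tilde{\bf N}=I_m-{\bf M}{\bf Y}$ and collapsing the remainder via $\hat{\bf X}{\bf N}+\hat{\bf Y}{\bf M}=I_m={\bf X}{\bf N}+{\bf Y}{\bf M}$; this gives $\hat{\bf X}={\bf X_Q}$ and $\hat{\bf Y}={\bf Y_Q}$. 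For the tilde blocks I would avoid constructing a second parameter and argue by uniqueness instead: with $\tilde{\bf M},\tilde{\bf N},\hat{\bf X}={\bf X_Q},\hat{\bf Y}={\bf Y_Q}$ now fixed, the column $[\tilde{\hat{\bf Y}};\,\tilde{\hat{\bf X}}]$ is the unique preimage of $[I_p;\,0]$ under the (invertible) left factor of the normalized DCF, and the soundness computation already shows $[\tilde{\bf Y}_{\bf Q};\,\tilde{\bf X}_{\bf Q}]$ to be such a preimage, so the two coincide. Undoing the normalization recovers exactly the factors displayed in (\ref{dcrelQ}). I expect the main obstacle to be the bookkeeping of this last step --- keeping the left- and right-sided B\'ezout relations straight and ensuring that the \emph{same} ${\bf Q}$ serves both the plain and the tilde blocks --- rather than any conceptual difficulty.
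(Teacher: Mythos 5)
The paper never actually proves this proposition: it appears in Section~2 as background to the Youla--Ku\v{c}era machinery (alongside results quoted from \cite{vidyasagar1985}) and none of the appendices return to it, so there is no in-paper argument to compare yours against. Judged on its own, your proof is correct and complete. In the soundness direction, the core block product does collapse to $I_{p+m}$; the only place your description undersells the computation is the $(2,1)$ block, where besides the factor $\tilde{\bf M}{\bf N}-\tilde{\bf N}{\bf M}=0$ and the two diagonal B\'ezout relations you cite, you also need the off-diagonal relation $-{\bf X}\tilde{\bf Y}+{\bf Y}\tilde{\bf X}=0$ from (\ref{bezoutidentity}); the reverse-order product required by the definition of a DCF then comes for free since these are square matrices over the field $\mathbb{R}(z)$. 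In the completeness direction, everything checks out: ${\bf U}_1=\tilde{\bf M}'\tilde{\bf Y}+\tilde{\bf N}'\tilde{\bf X}$ and ${\bf U}_1^{-1}=\tilde{\bf M}\tilde{\bf Y}'+\tilde{\bf N}\tilde{\bf X}'$ do exhibit unimodularity (Proposition~\ref{14Martie2019} is stated for controllers but applies verbatim to ${\bf G}$, and transposed to the right factors), the normalization by $\diag({\bf U}_1^{-1},{\bf U}_2^{-1})$ on the left and $\diag({\bf U}_1,{\bf U}_2)$ on the right does produce a DCF sharing $\tilde{\bf M},\tilde{\bf N},{\bf N},{\bf M}$, the choice ${\bf Q}=(\hat{\bf X}-{\bf X})\tilde{\bf Y}+({\bf Y}-\hat{\bf Y})\tilde{\bf X}$ satisfies ${\bf Q}\tilde{\bf M}=\hat{\bf X}-{\bf X}$ and ${\bf Q}\tilde{\bf N}={\bf Y}-\hat{\bf Y}$ exactly via the mixed relations you list together with $\hat{\bf X}{\bf N}+\hat{\bf Y}{\bf M}=I_m={\bf X}{\bf N}+{\bf Y}{\bf M}$, and the uniqueness-of-preimage argument cleanly forces the tilde blocks to equal $\tilde{\bf Y}_{\bf Q},\tilde{\bf X}_{\bf Q}$ from (\ref{5}) with the \emph{same} ${\bf Q}$. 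This is a legitimate, self-contained proof of a statement the paper leaves unproved.
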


\begin{defn} \label{klap} $\mathbb{H}({\bf G}, {\bf K}_{{\bf Q}})$ denotes the TFM whose entries are the closed-loop maps from $\displaystyle [ r^T\; \: w ^T \; \: \nu^T \; ]^T$ to $\displaystyle [y^T\;\; u^T \;\; z^T \;\; v^T ]^T$ achievable via the stabilizing controllers (\ref{YoulaEq}), namely equation (\ref{klap1}) here
\begin{equation*} \label{klap3}
\ba{c} y\\u\\z\\v\\ \ea = \mathbb{H}({\bf G}, {\bf K}_{{\bf Q}}) \ba{c} r\\w\\ \nu\\ \ea,
\end{equation*}
{\begin{equation} \label{klap1}
\mathbb{H}({\bf G}, {\bf K}_{{\bf Q}})\overset{def}{=}\ba{ccc}  (I_p+{\bf GK_Q})^{-1}{\bf GK_Q} & (I_p+{\bf GK_Q})^{-1}{\bf G} &(I_p+{\bf GK_Q})^{-1}  \\ 
(I_m+{\bf K_Q}{\bf G})^{-1}{\bf K_Q} & -(I_m+{\bf K_QG})^{-1}{\bf K_QG} &-(I_m+{\bf K_Q}{\bf G})^{-1}{\bf K_Q} \\
(I_p+{\bf GK_Q})^{-1} & -(I_p+{\bf GK_Q})^{-1}{\bf G} & -(I_p+{\bf GK_Q})^{-1} \\
(I_m+{\bf K_QG})^{-1} {\bf K_Q}& (I_m+{\bf K_QG})^{-1} & - (I_m+{\bf K_QG})^{-1}{\bf K_Q} \\
\ea 
\end{equation}
}
\end{defn}


\begin{prop} \label{afinecloop} \cite[(7)/~pp.101]{vidyasagar1985} ${\bf T}^{ \ell \varepsilon}_{\bf Q}$ denotes the dependency on the Youla parameter ${\bf Q}$ of the closed loop map from the exogenous signal $\varepsilon$ to the signal $\ell$ inside the feedback loop. The set of {\em all} closed loop maps (\ref{klap1}) achievable via stabilizing controllers (\ref{YoulaEq}) depends on the plant ${\bf G}$  alone and not on the particular doubly coprime factorization (\ref{bezoutidentity})  in which the Youla parameterization is formulated. 
Furthermore, the parameterization of the closed loop maps (\ref{klap1}) is affine in the Youla parameter $\bf{Q}$.
\begin{equation} \label{finallly2}
\small
\begin{tabular}{|c |  c|  c|  c|} %
\hline
${} $ & ${r}$ & ${w}$ & ${v}$ \\
\hline
& & & \\[-2.35ex]
${y} $  & $I_p- \tilde {\bf Y}_{\bf Q} \tilde {\bf M}$ & $\tilde {\bf Y}_{\bf Q} \tilde {\bf N}$  & $\tilde {\bf Y}_{\bf Q} \tilde {\bf M}$ \\ [0.75ex]
\hline
& & & \\[-2.35ex]
${ u} $ & $\tilde {\bf X}_{\bf Q}  \tilde {\bf M}$ & $-\tilde {\bf X}_{\bf Q}  \tilde {\bf N}$ & $-\tilde {\bf X}_{\bf Q}  \tilde {\bf M}$  \\[0.75ex]
\hline
& & & \\[-2.35ex]
${z} $  & $ \tilde {\bf Y}_{\bf Q} \tilde {\bf M}$ & $- \tilde {\bf Y}_{\bf Q} \tilde {\bf N}$ &$- \tilde {\bf Y}_{\bf Q} \tilde {\bf M}$ \\[0.75ex]
\hline
& & & \\[-2.35ex]
${v} $  & $\tilde {\bf X}_{\bf Q}  \tilde {\bf M}$ & $I_m -\tilde {\bf X}_{\bf Q}  \tilde {\bf N}$ & $ {-\tilde {\bf X}_{\bf Q}  \tilde {\bf M}}$ \\[0.75ex]
\hline
\end{tabular}\textcolor{white}{,} 
\begin{tabular}{|c |  c|  c|  c|} %
\hline
${} $ & ${r}$ & ${w}$ & ${v}$ \\
\hline
& & & \\[-2ex]
${y} $  & ${{\bf N}{\bf X}_{\bf Q}}$ & $ {{\bf N}{\bf Y}_{\bf Q}}  $  & $I_p- {\bf N}{\bf X}_{\bf Q}  $ \\ [0.75ex]
\hline
& & & \\[-2ex]
${ u} $ & $ {{\bf M} {\bf X}_{\bf Q}}   $ & $ {- (I_m - {\bf M}{\bf Y}_{\bf Q})}   $ & $- {\bf M} {\bf X}_{\bf Q}   $  \\[0.75ex]
\hline
& & & \\[-2ex]
${z} $  & $  {I_p - {\bf N}{\bf X}_{\bf Q}}  $ & $-  {\bf N}{\bf Y}_{\bf Q}  $ &$-(  I_p - {\bf N}{\bf X}_{\bf Q} ) $ \\[0.75ex]
\hline
& & & \\[-2ex]
${v} $  & $ {\bf M}{\bf X}_{\bf Q}   $ & $ {{\bf M}{\bf Y}_{\bf Q}} $ & $ {- {\bf M}{\bf X}_{\bf Q} }  $ \\[0.75ex]
\hline
\end{tabular}
\end{equation}
\end{prop}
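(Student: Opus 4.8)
The plan is to prove the proposition in two stages: first establishing the explicit closed forms in \eqref{finallly2}, from which affineness in ${\bf Q}$ is immediate, and then arguing that the resulting \emph{set} of closed-loop maps is intrinsic to ${\bf G}$. For the first stage I would specialize the parameterized doubly coprime identity \eqref{dcrelQ} to ${\bf U}_1 = I_p$, ${\bf U}_2 = I_m$, which furnishes the Bézout relations carrying the Youla parameter: from the first product $\tilde{\bf M}\tilde{\bf Y}_{\bf Q} + \tilde{\bf N}\tilde{\bf X}_{\bf Q} = I_p$ and ${\bf X}_{\bf Q}{\bf N} + {\bf Y}_{\bf Q}{\bf M} = I_m$, from the reversed product $\tilde{\bf Y}_{\bf Q}\tilde{\bf M} + {\bf N}{\bf X}_{\bf Q} = I_p$ and $\tilde{\bf X}_{\bf Q}\tilde{\bf N} + {\bf M}{\bf Y}_{\bf Q} = I_m$, together with the commutation identity $\tilde{\bf M}{\bf N} = \tilde{\bf N}{\bf M}$. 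Substituting the factorizations ${\bf G} = \tilde{\bf M}^{-1}\tilde{\bf N} = {\bf N}{\bf M}^{-1}$ and ${\bf K}_{\bf Q} = {\bf Y}_{\bf Q}^{-1}{\bf X}_{\bf Q} = \tilde{\bf X}_{\bf Q}\tilde{\bf Y}_{\bf Q}^{-1}$ from Theorem~\ref{Youlaaa} into the sensitivity operator yields, for instance,
\[
I_p + {\bf G}{\bf K}_{\bf Q} = \tilde{\bf M}^{-1}\bigl(\tilde{\bf M}\tilde{\bf Y}_{\bf Q} + \tilde{\bf N}\tilde{\bf X}_{\bf Q}\bigr)\tilde{\bf Y}_{\bf Q}^{-1} = \tilde{\bf M}^{-1}\tilde{\bf Y}_{\bf Q}^{-1},
\]
so that $(I_p + {\bf G}{\bf K}_{\bf Q})^{-1} = \tilde{\bf Y}_{\bf Q}\tilde{\bf M}$ and all inverses disappear.

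The remaining entries of \eqref{klap1} reduce in the same manner: right-multiplying by ${\bf G}$ turns $\tilde{\bf M}$ into $\tilde{\bf N}$, the identity $(I_p+{\bf G}{\bf K}_{\bf Q})^{-1}{\bf G}{\bf K}_{\bf Q} = I_p - (I_p+{\bf G}{\bf K}_{\bf Q})^{-1}$ dispatches the complementary sensitivity, and the dual (right-coprime) relations produce the second, non-tilde table; the agreement of the two tables is exactly the content of the four Bézout relations above, e.g. $I_p - \tilde{\bf Y}_{\bf Q}\tilde{\bf M} = {\bf N}{\bf X}_{\bf Q}$. Once these closed forms are in hand, affineness is immediate: by \eqref{5} each of ${\bf X}_{\bf Q}, {\bf Y}_{\bf Q}, \tilde{\bf X}_{\bf Q}, \tilde{\bf Y}_{\bf Q}$ is affine in ${\bf Q}$, and every entry of \eqref{finallly2} is a product of one such term with a \emph{fixed} coprime factor carrying no further inverse, hence affine in ${\bf Q}$.

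For the second stage, the independence of the doubly coprime factorization, I would observe that the map $\mathbb{H}({\bf G},{\bf K})$ of \eqref{klap1} is defined intrinsically from ${\bf G}$ and a stabilizing ${\bf K}$ alone, with no reference to any factorization. By Theorem~\ref{Youlaaa}, for \emph{any} chosen doubly coprime factorization the family $\{{\bf K}_{\bf Q} : {\bf Q}\text{ stable}\}$ is precisely the set of all internally stabilizing controllers of ${\bf G}$, which is itself a property of ${\bf G}$. Consequently the achievable set $\{\mathbb{H}({\bf G},{\bf K}_{\bf Q}) : {\bf Q}\text{ stable}\}$ equals $\{\mathbb{H}({\bf G},{\bf K}) : {\bf K}\text{ stabilizes }{\bf G}\}$ for every factorization, proving that it depends on ${\bf G}$ alone. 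If a fully constructive argument is preferred, one may instead transport a parameter ${\bf Q}$ across two factorizations via Proposition~\ref{2018April29} and verify that the induced reparameterization ${\bf Q}\mapsto{\bf Q}'$ is a bijection of the stable TFMs, leaving the image unchanged.

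The main obstacle I anticipate is not conceptual but the sign- and side-consistent bookkeeping in the first stage: each of the twenty-four entries across the two tables must be matched against \eqref{klap1}, and one must repeatedly decide whether a given Bézout relation is applied on the left or the right, and in its original or reversed (left- versus right-coprime) form. Keeping the block $2\times 2$ identities \eqref{dcrelQ} in view throughout, and propagating whole rows of the sensitivity matrix at once rather than manipulating scalar entries individually, is what makes this reduction manageable.
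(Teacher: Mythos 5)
Your proof is correct: the paper itself does not prove Proposition~\ref{afinecloop} (it is cited from \cite{vidyasagar1985}), but your derivation via the ${\bf Q}$-dependent B\'{e}zout blocks of \eqref{dcrelQ} is exactly the standard argument, and it coincides with the paper's own Appendix~C computation for Proposition~\ref{closedloopresponses11} specialized to the unperturbed case ${\bf \Phi}_{11}=I_p$, e.g. $(I_p+{\bf G}{\bf K}_{\bf Q})^{-1}=\tilde{\bf Y}_{\bf Q}\tilde{\bf M}$ and the push-through identity for the remaining entries. The factorization-independence argument via Theorem~\ref{Youlaaa} (or, constructively, via Proposition~\ref{2018April29}) is likewise the intended one.
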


\subsection{Dual Youla-Ku\c{c}era Parameterization}

\begin{thm} \label{DualYoulaaa}
{\bf (Dual Youla-Ku\c{c}era)}(\cite{VanHof1992})  Let  $\big({\bf M}, {\bf  N}$, $\tilde {\bf  M}, \tilde {\bf  N}$, ${\bf X}, {\bf  Y}$, $\tilde {\bf  X}, \tilde {\bf  Y}\big)$ be a doubly coprime factorization of ${\bf G}$. Any plant ${\bf G}_R$ stabilized by a fixed controller ${\bf K}$, can be written as
\begin{equation}
\label{YoulaEqR}
{\bf G_R} = {\bf \tilde{M}_R}^{-1}{\bf \tilde{N}_R} = {\bf N_R}{\bf M_R}^{-1},
 \end{equation}
where
${\bf M_R}$, ${\bf \tilde{M}_R}$, ${\bf N_R}$ and ${\bf \tilde{N}_R}$ are defined as:
\begin{equation} \label{DualEqYoula4}
\begin{split}
{\bf M_R}  \overset{def}{=}  {\bf M}- \tilde{\bf X}{\bf R}, \quad
&{\bf \tilde{M}_R}  \overset{def}{=}  \tilde{\bf M}-{\bf RX}, \quad \\
{\bf N_R}  \overset{def}{=}  {\bf N} +  \tilde{\bf Y}{\bf R}, \quad 
&{\bf \tilde{N}_R}  \overset{def}{=}  \tilde{\bf N}+{\bf RY}  
\end{split}
\end{equation}
for some stable ${\bf R}$ in $\mathbb{R}(z)^{p\times m}$. 
\end{thm}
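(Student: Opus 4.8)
The plan is to read this statement as the exact mirror image of the Youla--Ku\c{c}era parameterization (\thmref{Youlaaa}): there one fixes the plant ${\bf G}$ and sweeps out all stabilizing controllers, whereas here one fixes the controller and sweeps out all stabilized plants. The fixed controller is the nominal one attached to the given factorization, namely ${\bf K}={\bf Y}^{-1}{\bf X}=\tilde{\bf X}\tilde{\bf Y}^{-1}$ (i.e. ${\bf K_Q}$ at ${\bf Q}=0$), and the two halves of the B\'ezout identity \eqref{bezoutidentity} trade places. Accordingly I would establish the two inclusions separately: (sufficiency) every ${\bf G_R}$ built from \eqref{DualEqYoula4} is stabilized by ${\bf K}$, and (necessity) every plant stabilized by ${\bf K}$ arises this way for a stable ${\bf R}$. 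The conceptual backbone is that the eight TFMs of \eqref{bezoutidentity}, after interchanging the two block rows and columns, simultaneously furnish a doubly coprime factorization of ${\bf K}$ regarded as a plant (the self-dual nature of the B\'ezout identity), so the whole statement is \thmref{Youlaaa} reapplied to ${\bf K}$ with ${\bf R}$ in the role of ${\bf Q}$.

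For sufficiency I would substitute the perturbed factors \eqref{DualEqYoula4} into a B\'ezout identity that leaves ${\bf X},{\bf Y},\tilde{\bf X},\tilde{\bf Y}$ untouched and verify
\[
\ba{cc} {\bf \tilde{M}_R} & {\bf \tilde{N}_R} \\ -{\bf X} & {\bf Y} \ea \ba{cc} \tilde{\bf Y} & -{\bf N_R} \\ \tilde{\bf X} & {\bf M_R} \ea = I_{p+m}.
\]
Expanding each block and collecting terms, the ${\bf R}$-free part reproduces \eqref{bezoutidentity} exactly, while every ${\bf R}$-dependent contribution cancels by the three cross relations ${\bf X N}+{\bf Y M}=I_m$, $\tilde{\bf M}\tilde{\bf Y}+\tilde{\bf N}\tilde{\bf X}=I_p$, and ${\bf X}\tilde{\bf Y}={\bf Y}\tilde{\bf X}$, all read directly off \eqref{bezoutidentity}. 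The $(1,2)$ block of this identity yields ${\bf \tilde{M}_R}{\bf N_R}={\bf \tilde{N}_R}{\bf M_R}$, hence ${\bf \tilde{M}_R}^{-1}{\bf \tilde{N}_R}={\bf N_R}{\bf M_R}^{-1}$, so \eqref{YoulaEqR} is a genuine (doubly coprime) factorization; and the full identity certifies that ${\bf K}$ internally stabilizes ${\bf G_R}$, since the closed-loop maps are then products of the eight stable factors (cf. \eqref{finallly2}).

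For necessity, suppose ${\bf K}$ stabilizes ${\bf G_R}={\bf N_R}{\bf M_R}^{-1}$. Because ${\bf K}={\bf Y}^{-1}{\bf X}$ is held fixed, stabilization lets me choose the stable coprime factors of ${\bf G_R}$ normalized so that the B\'ezout identity holds against the fixed controller factors, in particular ${\bf X}{\bf N_R}+{\bf Y}{\bf M_R}=I_m$ (the $(2,2)$ block verified above, which only requires ${\bf Y}{\bf M_R}+{\bf X}{\bf N_R}$ unimodular). I then set ${\bf R}\overset{def}{=}\tilde{\bf M}{\bf N_R}-\tilde{\bf N}{\bf M_R}$, which is stable as a combination of stable TFMs and lies in $\mathbb{R}(z)^{p\times m}$. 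Using $\tilde{\bf Y}\tilde{\bf M}=I_p-{\bf N X}$ and $\tilde{\bf Y}\tilde{\bf N}={\bf N Y}$ together with the normalization gives ${\bf N}+\tilde{\bf Y}{\bf R}={\bf N}+{\bf N_R}-{\bf N}({\bf X}{\bf N_R}+{\bf Y}{\bf M_R})={\bf N_R}$, while the companion relations $\tilde{\bf X}\tilde{\bf M}={\bf M X}$, $\tilde{\bf X}\tilde{\bf N}=I_m-{\bf M Y}$ give ${\bf M}-\tilde{\bf X}{\bf R}={\bf M_R}$; the left factors ${\bf \tilde{M}_R},{\bf \tilde{N}_R}$ follow symmetrically, so the factors of ${\bf G_R}$ are precisely \eqref{DualEqYoula4}.

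The main obstacle is this normalization step in the necessity argument: one must argue that internal stabilization of ${\bf G_R}$ by the prescribed ${\bf K}$ genuinely permits a choice of coprime factors satisfying the full B\'ezout identity with the fixed ${\bf X},{\bf Y},\tilde{\bf X},\tilde{\bf Y}$, and that the extracted ${\bf R}$ is stable rather than merely rational. The cleanest way to dispatch it is the duality remark above: since \eqref{bezoutidentity} is itself a doubly coprime factorization of ${\bf K}$, the admissibility of the normalization and the stability of ${\bf R}$ are inherited verbatim from \thmref{Youlaaa} applied to ${\bf K}$, with ${\bf R}$ as its Youla parameter. The remaining block computations are routine and symmetric to those already displayed.
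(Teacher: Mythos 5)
The paper itself offers no proof of this theorem to compare against: it is imported verbatim from \cite{VanHof1992} and none of the appendices contain an argument for it. Judged on its own terms, your proof is correct and is the standard one. The sufficiency computation checks out: substituting \eqref{DualEqYoula4} into $\ba{cc} {\bf \tilde{M}_R} & {\bf \tilde{N}_R} \\ -{\bf X} & {\bf Y} \ea \ba{cc} \tilde{\bf Y} & -{\bf N_R} \\ \tilde{\bf X} & {\bf M_R} \ea$, the ${\bf R}$-linear terms cancel via ${\bf Y}\tilde{\bf X}={\bf X}\tilde{\bf Y}$ together with $\tilde{\bf M}\tilde{\bf Y}+\tilde{\bf N}\tilde{\bf X}=I_p$ and ${\bf XN}+{\bf YM}=I_m$, and the ${\bf R}$-quadratic term in the $(1,2)$ block vanishes again by ${\bf X}\tilde{\bf Y}-{\bf Y}\tilde{\bf X}=0$, so the identity holds and ${\bf K}$ stabilizes ${\bf G_R}$. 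For necessity you correctly isolate the only nontrivial step — rescaling an arbitrary RCF of ${\bf G_R}$ by the unimodular matrix ${\bf X}{\bf N_R'}+{\bf Y}{\bf M_R'}$, whose unimodularity is exactly the standard characterization of internal stability — and the extraction ${\bf R}=\tilde{\bf M}{\bf N_R}-\tilde{\bf N}{\bf M_R}$ is manifestly stable and recovers \eqref{DualEqYoula4} as you compute. The only caveat you leave implicit (as does the theorem statement itself) is well-posedness in the sufficiency direction: for an arbitrary stable ${\bf R}$ one must additionally require $\det({\bf M}-\tilde{\bf X}{\bf R})\not\equiv 0$ for ${\bf G_R}$ to be a genuine plant; this is a standard qualification of the dual parameterization and not a flaw in your argument.
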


\begin{prop}\label{dualyoula2}(\cite{VanHof1992})
  Let 
  $\bf{G}$ with LCF 
  ${\bf \tilde{M}}^{-1}{\bf \tilde{N}}$ be any system that is stabilized by a controller $\bf{K}$ with LCF ${\bf Y}^{-1}{\bf X}$.
Then, the plant $\bf{G_R}$ is stabilized by controller $\bf{K}$ iff there exists a stable ${\bf R} \in \mathbb{R}(z)^{p\times m}$, such that $\bf{G_R}$ = $(\tilde{\bf M}-{\bf RX})^{-1}(\tilde{\bf N}+{\bf RY})$. Similarly, for RCF of controller ${\bf K}$ = ${\bf \tilde{X}}{\bf \tilde{Y}}^{-1}$ and plant $\bf{G}$ = ${\bf N}{\bf M}^{-1}$, the plant $\bf{G}$ is stabilized by controller $\bf{K}$ iff there exists a stable ${\bf R} \in \mathbb{R}(z)^{p\times m}$, such that $\bf{G_R}$ = $( {\bf N} +  \tilde{\bf Y}{\bf R}) ({\bf M}- \tilde{\bf X}{\bf R})^{-1}$.
\end{prop}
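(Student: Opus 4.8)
The plan is to recognize this proposition as the exact dual of the primal Youla parameterization (Theorem~\ref{Youlaaa}), obtained by interchanging the roles of plant and controller. The starting observation is that internal stability of the feedback interconnection in Figure~\ref{2Block} is a symmetric relation: since the closed-loop map $\mathbb{H}({\bf G},{\bf K})$ of Definition~\ref{klap} is preserved, up to a block permutation of its rows and columns, under the swap ${\bf G}\leftrightarrow{\bf K}$ (its four defining blocks are exchanged, e.g. $(I_p+{\bf G}{\bf K})^{-1}\leftrightarrow(I_m+{\bf K}{\bf G})^{-1}$), the controller ${\bf K}$ stabilizes ${\bf G_R}$ if and only if ${\bf G_R}$ stabilizes ${\bf K}$ when ${\bf K}$ is regarded as the plant. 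Hence characterizing all ${\bf G_R}$ stabilized by ${\bf K}$ is the same problem as characterizing all stabilizing controllers of the plant ${\bf K}$, to which Theorem~\ref{Youlaaa} applies.

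To invoke Theorem~\ref{Youlaaa} I first need a doubly coprime factorization of ${\bf K}$. I would obtain it by reshuffling the eight TFMs of the given DCF \eqref{bezoutidentity} of ${\bf G}$: reading ${\bf K}={\bf Y}^{-1}{\bf X}=\tilde{\bf X}\tilde{\bf Y}^{-1}$ as the LCF and RCF of ${\bf K}$, the natural candidate for its B\'{e}zout identity is
\begin{equation*}
\ba{cc} {\bf Y} & {\bf X} \\ -\tilde{\bf N} & \tilde{\bf M} \ea
\ba{cc} {\bf M} & -\tilde{\bf X} \\ {\bf N} & \tilde{\bf Y} \ea = I_{m+p},
\end{equation*}
which I would verify, together with the reversed product, block by block: each of the four resulting equations is one of the scalar identities already packaged in the two products of \eqref{bezoutidentity}. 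Concretely, the diagonal blocks reproduce ${\bf Y}{\bf M}+{\bf X}{\bf N}=I_m$ and $\tilde{\bf N}\tilde{\bf X}+\tilde{\bf M}\tilde{\bf Y}=I_p$, while the off-diagonal blocks vanish by the consistency relations ${\bf Y}\tilde{\bf X}={\bf X}\tilde{\bf Y}$ and $\tilde{\bf M}{\bf N}=\tilde{\bf N}{\bf M}$. This exhibits a DCF of ${\bf K}$ in which the complementary factors and the coprime factors of ${\bf G}$ have traded places.

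With this DCF in hand I would apply Theorem~\ref{Youlaaa} to the plant ${\bf K}$, with a stable Youla parameter ${\bf R}\in\mathbb{R}(z)^{p\times m}$ playing the role of ${\bf Q}$. Substituting the reshuffled factors into \eqref{5} turns the denominator/numerator pairs into $\tilde{\bf M}-{\bf R}{\bf X}$, $\tilde{\bf N}+{\bf R}{\bf Y}$ on the left-coprime side and ${\bf M}-\tilde{\bf X}{\bf R}$, ${\bf N}+\tilde{\bf Y}{\bf R}$ on the right-coprime side, so that \eqref{YoulaEq} reads
\[
{\bf G_R}=(\tilde{\bf M}-{\bf R}{\bf X})^{-1}(\tilde{\bf N}+{\bf R}{\bf Y})=({\bf N}+\tilde{\bf Y}{\bf R})({\bf M}-\tilde{\bf X}{\bf R})^{-1},
\]
whose building blocks are precisely the factors ${\bf M_R}, {\bf N_R}, {\bf \tilde{M}_R}, {\bf \tilde{N}_R}$ of \eqref{DualEqYoula4}. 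The two expressions are the two claimed forms. The ``only if'' direction is then the necessity half of Theorem~\ref{Youlaaa} (every controller stabilizing ${\bf K}$ has this form), and the ``if'' direction is its sufficiency half (every stable ${\bf R}$ yields a controller stabilizing ${\bf K}$, hence, by the symmetry above, a plant ${\bf G_R}$ stabilized by ${\bf K}$); the two assertions of the proposition correspond to reading off the left-coprime and right-coprime sides of \eqref{YoulaEq}.

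I expect the main obstacle to be the symmetry step rather than the algebra. One must justify carefully that stabilization is a symmetric relation between ${\bf G_R}$ and ${\bf K}$ under the sign convention of Figure~\ref{2Block}, so that Theorem~\ref{Youlaaa} is literally applicable with no stray signs; it is exactly the placement of the minus signs in the reshuffled DCF of ${\bf K}$ that makes the Youla formulas \eqref{5} reproduce the dual formulas \eqref{DualEqYoula4} verbatim. Once the correct sign arrangement is guessed, the block-by-block verification of the B\'{e}zout identity for ${\bf K}$ is routine.
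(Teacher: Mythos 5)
Your proposal is correct, but note that the paper itself offers no proof of Proposition~\ref{dualyoula2}: it is stated as a cited result from \cite{VanHof1992}, so there is no in-paper argument to compare against. Your self-contained derivation is the standard one and it checks out: the reshuffled B\'{e}zout identity
$\ba{cc} {\bf Y} & {\bf X} \\ -\tilde{\bf N} & \tilde{\bf M} \ea
\ba{cc} {\bf M} & -\tilde{\bf X} \\ {\bf N} & \tilde{\bf Y} \ea = I_{m+p}$
(and its reversed product) follows block by block from the eight scalar identities packed into \eqref{bezoutidentity}, it exhibits ${\bf K}={\bf Y}^{-1}{\bf X}=\tilde{\bf X}\tilde{\bf Y}^{-1}$ as the ``plant'' with ${\bf G}$ as its ``central controller'', and feeding it through \eqref{5} with parameter ${\bf R}\in\mathbb{R}(z)^{p\times m}$ reproduces exactly the four factors of \eqref{DualEqYoula4}. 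The one step you rightly flag as delicate, the symmetry of internal stability under ${\bf G}\leftrightarrow{\bf K}$, is sound: the two closed-loop block matrices $\ba{cc} I & {\bf G} \\ -{\bf K} & I \ea$ and $\ba{cc} I & {\bf K} \\ -{\bf G} & I \ea$ are related by left and right multiplication with constant invertible (hence unimodular) permutation-type matrices, so one has a stable inverse iff the other does, and Theorem~\ref{Youlaaa} applies verbatim to the swapped loop. No gap.
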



\section{Robust Controller Synthesis}\label{robustcontrollersynthesis}

Given a DCF of the  nominal model of the plant ${\bf G}^{\mathtt{md}} = (\tilde{\bf M}^{\mathtt{md}})^{-1} \tilde{\bf N}^{\mathtt{md}} = {\bf N}^{\mathtt{md}} ({\bf M}^{\mathtt{md}})^{-1}$, we can write the Bezout identity that incorporates the corresponding Youla parameterization of all stabilizing controller  for the nominal model ${{\bf K}^{\mathtt{md}}_{\bf Q}} = ({\bf Y}^{\mathtt{md}}_{\bf Q})^{-1} {\bf X}^{\mathtt{md}}_{\bf Q} = {\bf {\tilde{X}}}^{\mathtt{md}}_{\bf Q}  ({\bf \tilde{Y}}^{\mathtt{md}}_{\bf Q})^{-1}$ as:

\begin{equation}\label{bezout2}
\ba{cc}   {\tilde {\bf M}}^{\mathtt{md}} &  {\tilde {\bf N}}^{\mathtt{md}} \\ - {{\bf X}}^{\mathtt{md}}_{\bf Q} &  {{\bf Y}}^{\mathtt{md}}_{\bf Q} \ea
\ba{cc}  {\tilde {\bf Y}}^{\mathtt{md}}_{\bf Q} & -{{\bf N}}^{\mathtt{md}} \\   {\tilde {\bf X}}^{\mathtt{md}}_{\bf Q} &  {{\bf M}}^{\mathtt{md}} \ea = \ba{cc}   { I_p} &   {0} \\  {0} &  { I_m} \ea,
\end{equation} 
where {\bf Q} denotes as usually the Youla parameter.

\begin{defn}[Model Uncertainty Set] \label{modeluncertaintyset}
The $\gamma$-radius {\em model uncertainty} set (for the nominal plant ${\bf G}^{\mathtt{md}}$ \textrm{with} \;
$\Delta_{\bf \tilde{M}}$, $\Delta_{\bf \tilde{N}}$ {both stable}) is defined as:
\begin{equation} \label{Youlamd}
    \mathcal{G}_\gamma \overset{\mathrm{def}}{=} \{ {\bf G} = {\tilde {\bf M}}^{-1} {\tilde {\bf N}} \hspace{2pt} \big | \hspace{2pt} {\tilde {\bf M}} = ({\tilde {\bf M}^{\mathtt{md}}}+\Delta_{\bf {\tilde{M}}}),  {\tilde {\bf N}} = ({\tilde {\bf N}}^{\mathtt{md}}+\Delta_{\bf \tilde{N}});  \; \;  \; \Big\| \ba{cc}  \Delta_{\bf \tilde{M}} &  \Delta_{\bf \tilde{N}} \ea \Big\|_{\infty}  < \gamma \}
\end{equation}
\end{defn}

\begin{defn}[$\gamma$-Robustly Stabilizing]\label{perturbedplant} A stabilizing controller ${\bf K}^{\mathtt{md}}$ of the nominal plant  
  is said to be $\gamma$-robustly stabilizing iff ${\bf K}^{\mathtt{md}}$ stabilizes not only ${\bf G}^{\mathtt{md}}$ but also all plants  ${\bf G} \in \mathcal{G}_\gamma$. 
\end{defn}

\begin{assumption}
 It is assumed that the  true plant, denoted by ${\bf G}^{\mathtt{pt}}$, belongs to the   model uncertainty set introduced in \defref{modeluncertaintyset}, i.e. that there exist stable $\Delta_{\bf \tilde{M}}$, $\Delta_{\bf \tilde{N}}$ with  $ \Big\| \ba{cc}  \Delta_{\bf \tilde{M}} &  \Delta_{\bf \tilde{N}} \ea \Big\|_{\infty}  < \gamma$ for which ${\bf G}^{\mathtt{pt}} = ({\tilde {\bf M}}^{\mathtt{md}}+\Delta_{\bf \tilde{M}})^{-1}  ({\tilde {\bf N}}^{\mathtt{md}}+\Delta_{\bf \tilde{N}})$.
\end{assumption}

\noindent In the presence of additive uncertainty on the coprime factors the Bezout identity in \eqref{bezout2} no longer holds, however, the following holds for {\em certain stable}  ${\Delta_{\bf M}}$, ${\Delta_{\bf N}}$ {\em factors}:
\begin{equation}\label{PhiMatrix}
\ba{cc}    ({\tilde {\bf M}}^{\mathtt{md}}+\Delta_{\bf \tilde{M}})  &   ({\tilde {\bf N}}^{\mathtt{md}}+\Delta_{\bf \tilde{N}}) \\ - {{\bf X}}^{\mathtt{md}}_{\bf Q}  &  {{\bf Y}}^{\mathtt{md}}_{\bf Q}  \ea
\ba{cc}  {\tilde {\bf Y}}^{\mathtt{md}}_{\bf Q} & -({{\bf N}}^{\mathtt{md}}+\Delta_{\bf {N}}) \\   {\tilde {\bf X}}^{\mathtt{md}}_{\bf Q} &  ({{\bf M}}^{\mathtt{md}}+\Delta_{\bf {M}}) \ea = \ba{cc}    {\bf \Phi}_{11} &   O \\  O &  {\bf \Phi}_{22} \ea.
\end{equation} 
The block diagonal structure of the right hand side term in \eqref{PhiMatrix} is due to the fact that ${\bf G}^{\mathtt{pt}} = ({\tilde {\bf M}}^{\mathtt{md}}+\Delta_{\bf \tilde{M}})^{-1}  ({\tilde {\bf N}}^{\mathtt{md}}+\Delta_{\bf \tilde{N}}) =({{\bf N}}^{\mathtt{md}}+\Delta_{\bf {N}}) ({{\bf M}}^{\mathtt{md}}+\Delta_{\bf {M}})^{-1} $ for the aforementioned {\em certain stable}  ${\Delta_{\bf M}}$, ${\Delta_{\bf N}}$ {\em factors}.

\begin{lem}\label{phi11phi22}
 A stabilizing controller of the nominal plant ${{\bf K}^{\mathtt{md}}_{\bf Q}} = ({\bf Y}^{\mathtt{md}}_{\bf Q})^{-1} {\bf X}^{\mathtt{md}}_{\bf Q} = {\bf {\tilde{X}}}^{\mathtt{md}}_{\bf Q}  ({\bf \tilde{Y}}^{\mathtt{md}}_{\bf Q})^{-1}$ is  $\gamma$-robustly stabilizing iff  for any stable model perturbations  $\Delta_{\bf \tilde{M}}, \Delta_{\bf \tilde{N}}$ with  $ \Big\| \ba{@{}cc@{}}  \Delta_{\bf \tilde{M}} &  \Delta_{\bf \tilde{N}} \ea \Big\|_{\infty}  < \gamma$ the TFM 
  \begin{equation} \label{August10_7am}
         {\bf \Phi}_{11} = { I_p} + \ba{cc}  \Delta_{\bf \tilde{M}} &  \Delta_{\bf \tilde{N}} \ea \ba{c}  {\tilde {\bf Y}}^{\mathtt{md}}_{\bf Q} \\  {\tilde {\bf X}}^{\mathtt{md}}_{\bf Q} \ea,
 \end{equation}
  is unimodular ({\em i.e.}  it is  square, stable and has an inverse  ${\bf \Phi}_{11}^{-1}$ which is also stable) from \eqref{PhiMatrix}. A similar condition for $\gamma$-robust stabilizability can be formulated in terms of ${\bf \Phi}_{22}$ TFM, whereas 
 \begin{equation}
 {\bf \Phi}_{22} = { I_m} + \ba{cc}  {\bf X}^{\mathtt{md}}_{\bf Q} &  {\bf Y}^{\mathtt{md}}_{\bf Q} \ea \ba{c} \Delta_{\bf {N}} \\  \Delta_{\bf {M}} \ea.
 \end{equation}
 
\end{lem}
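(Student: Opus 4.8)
The plan is to reduce the robust‑stabilization question to the unimodularity of a single block matrix and then to strip off the known nominal Bézout factor by an elementary block‑triangular operation. First I would fix arbitrary stable perturbations $\Delta_{\bf \tilde M},\Delta_{\bf \tilde N}$ with $\|[\,\Delta_{\bf \tilde M}\ \Delta_{\bf \tilde N}\,]\|_{\infty}<\gamma$ and recall the standard characterization of internal stability: writing the perturbed plant ${\bf G}^{\mathtt{pt}}$ through its left factors and the nominal controller ${\bf K}^{\mathtt{md}}_{\bf Q}=({\bf Y}^{\mathtt{md}}_{\bf Q})^{-1}{\bf X}^{\mathtt{md}}_{\bf Q}$ through its left factors, the loop is internally stable if and only if the closed‑loop characteristic matrix
\[
{\bf L}_\Delta:=\ba{cc} \tilde{\bf M}^{\mathtt{md}}+\Delta_{\bf \tilde M} & \tilde{\bf N}^{\mathtt{md}}+\Delta_{\bf \tilde N} \\ -{\bf X}^{\mathtt{md}}_{\bf Q} & {\bf Y}^{\mathtt{md}}_{\bf Q}\ea
\]
is unimodular. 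This ${\bf L}_\Delta$ is exactly the left factor appearing in \eqref{PhiMatrix}.

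The algebraic core is then a triangularization driven by the nominal Bézout identity \eqref{bezout2}, which says that
\[
{\bf L}:=\ba{cc} \tilde{\bf M}^{\mathtt{md}} & \tilde{\bf N}^{\mathtt{md}} \\ -{\bf X}^{\mathtt{md}}_{\bf Q} & {\bf Y}^{\mathtt{md}}_{\bf Q}\ea, \qquad {\bf R}:=\ba{cc} \tilde{\bf Y}^{\mathtt{md}}_{\bf Q} & -{\bf N}^{\mathtt{md}} \\ \tilde{\bf X}^{\mathtt{md}}_{\bf Q} & {\bf M}^{\mathtt{md}}\ea
\]
are mutually inverse, hence both unimodular. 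I would right‑multiply ${\bf L}_\Delta$ by ${\bf R}={\bf L}^{-1}$ and use ${\bf L}{\bf R}=I_{p+m}$ to obtain the block‑triangular form
\[
{\bf L}_\Delta\,{\bf R}=\ba{cc} {\bf \Phi}_{11} & {\bf W} \\ O & I_m\ea, \qquad {\bf \Phi}_{11}=I_p+\ba{cc}\Delta_{\bf \tilde M} & \Delta_{\bf \tilde N}\ea\ba{c}\tilde{\bf Y}^{\mathtt{md}}_{\bf Q}\\ \tilde{\bf X}^{\mathtt{md}}_{\bf Q}\ea,
\]
where the top‑left block is precisely \eqref{August10_7am} (the $I_p$ coming from $\tilde{\bf M}^{\mathtt{md}}\tilde{\bf Y}^{\mathtt{md}}_{\bf Q}+\tilde{\bf N}^{\mathtt{md}}\tilde{\bf X}^{\mathtt{md}}_{\bf Q}=I_p$) and ${\bf W}=-\Delta_{\bf \tilde M}{\bf N}^{\mathtt{md}}+\Delta_{\bf \tilde N}{\bf M}^{\mathtt{md}}$ is stable. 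Since ${\bf R}$ is unimodular and a stable block‑triangular matrix with identity diagonal block has a stable inverse exactly when its other diagonal block does, I obtain the chain ${\bf K}^{\mathtt{md}}_{\bf Q}$ stabilizes ${\bf G}^{\mathtt{pt}}$ $\iff$ ${\bf L}_\Delta$ unimodular $\iff$ ${\bf \Phi}_{11}$ unimodular; quantifying over all admissible $\Delta_{\bf \tilde M},\Delta_{\bf \tilde N}$ then yields the equivalence with $\gamma$‑robust stabilizability.

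To make the stabilization step self‑contained I would add a dual‑Youla cross‑check for the \emph{if} direction: when ${\bf \Phi}_{11}$ is unimodular, reading the top block‑row of ${\bf L}_\Delta$ from the triangular identity gives $\tilde{\bf M}^{\mathtt{md}}+\Delta_{\bf \tilde M}={\bf \Phi}_{11}\tilde{\bf M}^{\mathtt{md}}-{\bf W}{\bf X}^{\mathtt{md}}_{\bf Q}$ and $\tilde{\bf N}^{\mathtt{md}}+\Delta_{\bf \tilde N}={\bf \Phi}_{11}\tilde{\bf N}^{\mathtt{md}}+{\bf W}{\bf Y}^{\mathtt{md}}_{\bf Q}$, so factoring out ${\bf \Phi}_{11}$ and setting ${\bf R}:={\bf \Phi}_{11}^{-1}{\bf W}$ (stable) exhibits ${\bf G}^{\mathtt{pt}}=(\tilde{\bf M}^{\mathtt{md}}-{\bf R}{\bf X}^{\mathtt{md}}_{\bf Q})^{-1}(\tilde{\bf N}^{\mathtt{md}}+{\bf R}{\bf Y}^{\mathtt{md}}_{\bf Q})$, which is stabilized by ${\bf K}^{\mathtt{md}}_{\bf Q}$ thanks to \propref{dualyoula2}. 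The ${\bf \Phi}_{22}$ statement is entirely symmetric: left‑multiplying the perturbed right factor ${\bf R}_\Delta$ of \eqref{PhiMatrix} by ${\bf L}={\bf R}^{-1}$ produces an upper‑triangular matrix with $I_p$ in the top‑left and ${\bf \Phi}_{22}$ in the bottom‑right, so ${\bf R}_\Delta$ (the characteristic matrix of the right‑coprime description) is unimodular iff ${\bf \Phi}_{22}$ is; since ${\bf L}_\Delta$ and ${\bf R}_\Delta$ being unimodular both express the same internal stability, ${\bf \Phi}_{11}$ and ${\bf \Phi}_{22}$ are equivalent.

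I expect the main obstacle to be the stabilization $\Rightarrow$ unimodularity direction, specifically justifying that the perturbed left factors $(\tilde{\bf M}^{\mathtt{md}}+\Delta_{\bf \tilde M},\tilde{\bf N}^{\mathtt{md}}+\Delta_{\bf \tilde N})$ may be treated as a genuine left‑\emph{coprime} factorization of ${\bf G}^{\mathtt{pt}}$, so that the internal‑stability characterization applies and the connecting factor ${\bf U}$ supplied by \propref{14Martie2019} is forced to be unimodular rather than merely stable. I would resolve this by noting that left coprimeness is an open condition preserved under the norm bound (or by reading it into \defref{modeluncertaintyset}, where the members are written as factorizations $\tilde{\bf M}^{-1}\tilde{\bf N}$), and by checking that equating the perturbed left factorization with the dual‑Youla one and using the nominal identities together with the two representations of ${\bf K}^{\mathtt{md}}_{\bf Q}$ forces ${\bf U}={\bf \Phi}_{11}$; internal stability makes the dual‑Youla factors coprime, and coprimeness of the perturbed factors then makes ${\bf U}^{-1}={\bf \Phi}_{11}^{-1}$ stable, closing the equivalence.
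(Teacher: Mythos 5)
Your algebraic core is the same as the paper's: the paper's proof of Lemma~\ref{phi11phi22} simply expands ${\bf \Phi}_{11}=({\tilde {\bf M}^{\mathtt{md}}}+\Delta_{\bf {\tilde{M}}})\,{\tilde {\bf Y}}^{\mathtt{md}}_{\bf Q}+({\tilde {\bf N}^{\mathtt{md}}}+\Delta_{\bf {\tilde{N}}})\,{\tilde {\bf X}}^{\mathtt{md}}_{\bf Q}$, invokes the nominal B\'ezout identity \eqref{bezout2} to peel off $I_p$, and repeats the dual computation for ${\bf \Phi}_{22}$ --- and that is essentially all it does; the equivalence between unimodularity of ${\bf \Phi}_{11}$ and internal stability of the perturbed loop is read off from the structure of \eqref{PhiMatrix} without further argument. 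You supply exactly that missing link. Your triangularization of ${\bf L}_\Delta$ against the \emph{unperturbed} right B\'ezout factor is a harmless variant of \eqref{PhiMatrix} (which multiplies by the perturbed right factor and lands on a block-diagonal matrix instead of a block-triangular one); the reduction of unimodularity of ${\bf L}_\Delta$ to that of ${\bf \Phi}_{11}$ via the $I_m$ diagonal block is correct; and the explicit dual-Youla parameter ${\bf R}={\bf \Phi}_{11}^{-1}{\bf W}$ feeding Proposition~\ref{dualyoula2} gives a genuinely self-contained sufficiency argument that the paper does not spell out. What your route buys is that the lemma becomes a theorem about stabilization rather than an identity about matrix products; what the paper's route buys is brevity, at the cost of leaving the ``iff'' to standard coprime-factorization folklore.

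The one residual soft spot is the one you name yourself: in the necessity direction you must know that $({\tilde {\bf M}}^{\mathtt{md}}+\Delta_{\bf \tilde{M}},\ {\tilde {\bf N}}^{\mathtt{md}}+\Delta_{\bf \tilde{N}})$ is a genuine left-\emph{coprime} factorization of ${\bf G}^{\mathtt{pt}}$ before the characteristic-matrix criterion and the unimodularity of the connecting factor ${\bf U}$ from Proposition~\ref{14Martie2019} can be invoked. ``Coprimeness is an open condition'' does not settle this for an arbitrary stable perturbation in the $\gamma$-ball: coprimeness of the perturbed pair is itself a B\'ezout-solvability statement of the same flavour as the unimodularity of ${\bf \Phi}_{11}$ you are trying to prove, so it must either be built into Definition~\ref{modeluncertaintyset} (reading ${\bf G}=\tilde{\bf M}^{-1}\tilde{\bf N}$ there as a bona fide LCF) or obtained from a norm bound such as the one in Theorem~\ref{oarasthm1}, which is not a hypothesis of the lemma. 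Since the paper's own proof never confronts the ``iff'' at all, this is a defect you inherit from the statement rather than introduce, and your proposed Definition-level fix is the reasonable resolution.
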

Since $  {\bf \Phi}_{11}$  in \eqref{August10_7am} clearly depends on the Youla parameter (via the right coprime factors of the controller), the condition for the $\gamma$-robust stabilizability  of the controller can be recast in the following particular form, which will be instrumental in the sequel:

\begin{thm}\label{oarasthm1} The Youla parameterization yields a $\gamma$-robustly stabilizing controller iff its corresponding Youla parameter satisfies 
$\Bigg\| \ba{c} {\tilde {\bf Y}}^{\mathtt{md}}_{\bf Q} \\  {\tilde {\bf X}}^{\mathtt{md}}_{\bf Q} \ea \Bigg\|_{\infty}  \leq \dfrac{1}{\gamma}$.
\end{thm}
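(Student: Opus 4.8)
The plan is to read the statement as a small-gain theorem, invoking \lemref{phi11phi22} to convert ``$\gamma$-robustly stabilizing'' into a unimodularity condition on ${\bf \Phi}_{11}$, and then to show that this condition holds for \emph{all} admissible perturbations exactly when the stacked factor has $\mathcal{H}_\infty$ norm at most $1/\gamma$. Write $\Delta = \ba{cc} \Delta_{\bf \tilde{M}} & \Delta_{\bf \tilde{N}} \ea$ for the combined (stable) perturbation and $\mathbf{W}_{\bf Q} = \ba{c} {\tilde {\bf Y}}^{\mathtt{md}}_{\bf Q} \\ {\tilde {\bf X}}^{\mathtt{md}}_{\bf Q} \ea$ for the stacked factor appearing in \eqref{August10_7am}, so that ${\bf \Phi}_{11} = I_p + \Delta\,\mathbf{W}_{\bf Q}$. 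The factors ${\tilde {\bf Y}}^{\mathtt{md}}_{\bf Q}, {\tilde {\bf X}}^{\mathtt{md}}_{\bf Q}$ are stable, hence so is $\Delta\,\mathbf{W}_{\bf Q}$, and by \lemref{phi11phi22} the controller is $\gamma$-robustly stabilizing iff $I_p + \Delta\,\mathbf{W}_{\bf Q}$ is unimodular for every stable $\Delta$ with $\|\Delta\|_\infty < \gamma$, i.e. iff $\det\big(I_p + \Delta(e^{j\omega})\mathbf{W}_{\bf Q}(e^{j\omega})\big)$ never vanishes on the unit circle.

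For sufficiency, suppose $\|\mathbf{W}_{\bf Q}\|_\infty \leq 1/\gamma$. For any admissible $\Delta$ with $\|\Delta\|_\infty < \gamma$, submultiplicativity of the $\mathcal{H}_\infty$ norm gives $\|\Delta\,\mathbf{W}_{\bf Q}\|_\infty \leq \|\Delta\|_\infty \|\mathbf{W}_{\bf Q}\|_\infty < \gamma\cdot\tfrac{1}{\gamma} = 1$. Since $\Delta\,\mathbf{W}_{\bf Q}$ is stable with $\mathcal{H}_\infty$ norm strictly below $1$, the Neumann series $\sum_{k\geq 0}(-\Delta\,\mathbf{W}_{\bf Q})^k$ converges in $\mathcal{H}_\infty$ to a stable inverse of ${\bf \Phi}_{11}$; thus ${\bf \Phi}_{11}$ is square, stable, and has a stable inverse, i.e. unimodular. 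As this holds for every admissible $\Delta$, \lemref{phi11phi22} yields $\gamma$-robust stabilization.

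For necessity I argue the contrapositive. Suppose $\|\mathbf{W}_{\bf Q}\|_\infty = \sigma > 1/\gamma$. Because $\mathbf{W}_{\bf Q}$ is a stable rational TFM, $\omega\mapsto \sigma_{\max}\big(\mathbf{W}_{\bf Q}(e^{j\omega})\big)$ is continuous on the compact unit circle, so there is a frequency $\omega_0$ with $\sigma_{\max}\big(\mathbf{W}_{\bf Q}(e^{j\omega_0})\big) = \sigma_0 \in (1/\gamma,\sigma]$; let $v\in\mathbb{C}^p$ and $w\in\mathbb{C}^{p+m}$ be the corresponding unit singular vectors, so that $\mathbf{W}_{\bf Q}(e^{j\omega_0})\,v = \sigma_0\,w$. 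The target value $\Delta(e^{j\omega_0}) = -\tfrac{1}{\sigma_0}\,v\,w^{*}$ is a rank-one matrix of spectral norm $1/\sigma_0 < \gamma$ satisfying $\Delta(e^{j\omega_0})\,w = -\tfrac{1}{\sigma_0}v$, whence $\big(I_p + \Delta\,\mathbf{W}_{\bf Q}\big)(e^{j\omega_0})\,v = v + \Delta(e^{j\omega_0})\,\sigma_0 w = 0$ and $\det{\bf \Phi}_{11}(e^{j\omega_0}) = 0$. It remains to realize this constant complex matrix as the value at $\omega_0$ of a genuine real-rational stable $\Delta$ whose $\mathcal{H}_\infty$ norm equals $1/\sigma_0 < \gamma$: this is the classical interpolation step, carried out by replacing the constant scalar entries with all-pass (Blaschke) factors matching the prescribed phases at $e^{j\omega_0}$ while preserving magnitudes uniformly, and by pairing $\omega_0$ with $-\omega_0$ to keep the coefficients real. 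For such an admissible $\Delta$, the TFM ${\bf \Phi}_{11}$ loses invertibility on the unit circle and is therefore not unimodular, so by \lemref{phi11phi22} the controller is not $\gamma$-robustly stabilizing.

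I expect the main obstacle to be this interpolation/realizability step in the necessity direction: promoting the frequency-wise worst-case perturbation $-\tfrac{1}{\sigma_0}\,v\,w^{*}$ at the single critical frequency to a globally stable, real-rational $\Delta$ whose $\mathcal{H}_\infty$ norm does not exceed $1/\sigma_0$ (so that it remains strictly below $\gamma$), while ensuring conjugate symmetry of the response so that $\Delta$ has real coefficients. The sufficiency direction, by contrast, is an immediate small-gain estimate together with a Neumann-series inverse.
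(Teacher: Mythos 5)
Your proposal is correct, and its overall architecture coincides with the paper's: both directions pass through \lemref{phi11phi22} to reduce $\gamma$-robust stabilization to unimodularity of ${\bf \Phi}_{11}= I_p+\Delta\,\mathbf{W}_{\bf Q}$, and the sufficiency half is the identical small-gain estimate $\|\Delta\,\mathbf{W}_{\bf Q}\|_\infty<\gamma\cdot\tfrac1\gamma=1$ followed by a Neumann-series (stable) inverse. The two arguments part ways only in the necessity direction. The paper disposes of it in one line: assuming $\|\mathbf{W}_{\bf Q}\|_\infty>1/\gamma$, it cites the Spectral Mapping Theorem to assert that some admissible $\Delta$ with $\|\Delta\|_\infty<\gamma$ destroys unimodularity. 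You instead carry out the standard constructive argument from the converse small-gain theorem: locate a frequency $\omega_0$ where $\sigma_{\max}\bigl(\mathbf{W}_{\bf Q}(e^{j\omega_0})\bigr)=\sigma_0>1/\gamma$, build the rank-one worst case $\Delta(e^{j\omega_0})=-\sigma_0^{-1}vw^{*}$ of norm $1/\sigma_0<\gamma$ that annihilates $v$ in $(I_p+\Delta\,\mathbf{W}_{\bf Q})(e^{j\omega_0})$, and then realize this frequency-domain value by a stable, real-rational, all-pass-scaled $\Delta$ whose $\mathcal{H}_\infty$ norm does not exceed $1/\sigma_0$. Your dimension bookkeeping ($v\in\mathbb{C}^p$, $w\in\mathbb{C}^{p+m}$, $\Delta$ of size $p\times(p+m)$) is right, and you correctly identify the Blaschke/conjugate-symmetry interpolation as the only nontrivial technical content of the direction the paper compresses into a citation. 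What your route buys is a self-contained, verifiable necessity proof that exhibits the destabilizing perturbation explicitly; what the paper's route buys is brevity at the cost of leaving the interpolation step implicit. Neither version addresses the minor shared subtlety that the constructed $\Delta_{\bf \tilde M}$ should leave $\tilde{\bf M}^{\mathtt{md}}+\Delta_{\bf \tilde M}$ invertible so that the perturbed plant actually lies in $\mathcal{G}_\gamma$, but since both arguments work at the level of \lemref{phi11phi22} this does not affect the comparison.
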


\noindent
The proofs for \lemref{phi11phi22} and \thmref{oarasthm1} are given on \hyperref[appendixC]{Appendix C}.
As an intermediary result, by employing \thmref{oarasthm1} and the standard inequality \eqref{submultiplicativeinfty} from \hyperref[appendixB]{Appendix B} it is concluded that:
\begin{center}
$\Bigg\| \ba{cc}  -\Delta_{\bf \tilde{M}} &  -\Delta_{\bf \tilde{N}} \ea
\ba{c} {\tilde {\bf Y}}^{\mathtt{md}}_{\bf Q} \\  {\tilde {\bf X}}^{\mathtt{md}}_{\bf Q} \ea \Bigg\|_{\infty}\leq \Bigg\| \ba{cc}  \Delta_{\bf \tilde{M}} &  \Delta_{\bf \tilde{N}} \ea \Bigg\|_{\infty} \Bigg\|
\ba{c}  {\tilde {\bf Y}}^{\mathtt{md}}_{\bf Q} \\  {\tilde {\bf X}}^{\mathtt{md}}_{\bf Q} \ea \Bigg\|_{\infty} < \gamma \times \dfrac{1}{
\gamma} = 1$.
\end{center}

\noindent

\noindent



\begin{prop}\label{closedloopresponses11}
The square root of the LQG cost function from \eqref{LQGCostOriginal} has the form:
\begin{equation}\label{cost2}
\begin{split}
    \mathcal{H}({\bf G}^{\mathtt{pt}}, {{\bf K}^{\mathtt{md}}_{\bf Q}}) \overset{def}{=} & \Bigg\|  \ba{c}    {\tilde {\bf Y}}^{\mathtt{md}}_{\bf Q} \\  {\tilde {\bf X}}^{\mathtt{md}}_{\bf Q} \ea {\bf \Phi}_{11}^{-1} \ba{cc}  ({\tilde {\bf M}^{\mathtt{md}}}+\Delta_{\bf \tilde{M}}) &  ({\tilde {\bf N}^{\mathtt{md}}}+\Delta_{\bf \tilde{N}})  \ea \Bigg\|_{\mathcal{H}_{2}}\\
\end{split}
\end{equation}
representing the following closed loop responses:
\begin{equation}\label{closedloopresponses1}
    \ba{cc} {{\bf T}^{y \nu}_{\bf Q}} & {{\bf T}^{yw}_{\bf Q}} \\ {{\bf T}^{u \nu}_{\bf Q}} & {{\bf T}^{u w}_{\bf Q}} \ea 
    = \ba{cc} {(I_p+{\bf G}^{\mathtt{pt}} {{\bf K}^{\mathtt{md}}_{\bf Q}})^{-1}} & { {(I_p+{\bf G}^{\mathtt{pt}} {{\bf K}^{\mathtt{md}}_{\bf Q}})^{-1}}{{\bf G}^{\mathtt{pt}}}} \\
{{{\bf K}^{\mathtt{md}}_{\bf Q}}{(I_p+{\bf G}^{\mathtt{pt}} {{\bf K}^{\mathtt{md}}_{\bf Q}})^{-1}}} &  {{{\bf K}^{\mathtt{md}}_{\bf Q}}{(I_p+{\bf G}^{\mathtt{pt}} {{\bf K}^{\mathtt{md}}_{\bf Q}})^{-1}}}{{\bf G}^{\mathtt{pt}}} \ea
\end{equation}
\end{prop}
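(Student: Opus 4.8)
The plan is to prove the proposition in two stages: first reduce the LQG cost \eqref{LQGCostOriginal} to the $\mathcal{H}_2$ norm of the closed-loop maps in \eqref{closedloopresponses1}, and then identify that norm with the coprime-factor expression \eqref{cost2}. For the first stage I would invoke the standard fact that, for an internally stabilized loop driven by the zero-mean, unit-intensity white noises $w$ and $\nu$, the steady-state average quadratic cost equals the squared $\mathcal{H}_2$-norm of the transfer function from the noise inputs to the penalized outputs. Under the normalizations stated after \eqref{LQGCostOriginal} ($P_1=I_p$, $P_2=I_m$, $\sigma_{\delta x}=\sigma_\nu=1$), the penalized outputs are exactly $(y,u)$ and the noise inputs are $(w,\nu)$, so by Parseval the limit in \eqref{LQGCostOriginal} equals the squared $\mathcal{H}_2$-norm of the $2\times2$ array of closed-loop maps displayed in \eqref{closedloopresponses1}; taking the square root gives $\mathcal{H}({\bf G}^{\mathtt{pt}},{\bf K}^{\mathtt{md}}_{\bf Q})$. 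Reading off the four sensitivity-type maps of the unity feedback loop of Figure~\ref{2Block} with plant ${\bf G}^{\mathtt{pt}}$ and controller ${\bf K}^{\mathtt{md}}_{\bf Q}$ then yields the explicit entries on the right-hand side of \eqref{closedloopresponses1}; this part is routine.

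The substantive step is the factorization. Writing $\tilde{\bf M}:=\tilde{\bf M}^{\mathtt{md}}+\Delta_{\bf \tilde{M}}$ and $\tilde{\bf N}:=\tilde{\bf N}^{\mathtt{md}}+\Delta_{\bf \tilde{N}}$ so that ${\bf G}^{\mathtt{pt}}=\tilde{\bf M}^{-1}\tilde{\bf N}$ by assumption, and using the RCF ${\bf K}^{\mathtt{md}}_{\bf Q}={\tilde{\bf X}}^{\mathtt{md}}_{\bf Q}({\tilde{\bf Y}}^{\mathtt{md}}_{\bf Q})^{-1}$, I would peel off $\tilde{\bf M}^{-1}$ on the left and $({\tilde{\bf Y}}^{\mathtt{md}}_{\bf Q})^{-1}$ on the right to obtain
\[
I_p+{\bf G}^{\mathtt{pt}}{\bf K}^{\mathtt{md}}_{\bf Q}=\tilde{\bf M}^{-1}\bigl(\tilde{\bf M}\,{\tilde{\bf Y}}^{\mathtt{md}}_{\bf Q}+\tilde{\bf N}\,{\tilde{\bf X}}^{\mathtt{md}}_{\bf Q}\bigr)({\tilde{\bf Y}}^{\mathtt{md}}_{\bf Q})^{-1}.
\]
The bracketed term is precisely the $(1,1)$ block ${\bf \Phi}_{11}$ of \eqref{PhiMatrix}; invoking the nominal B\'ezout identity ${\tilde{\bf M}}^{\mathtt{md}}{\tilde{\bf Y}}^{\mathtt{md}}_{\bf Q}+{\tilde{\bf N}}^{\mathtt{md}}{\tilde{\bf X}}^{\mathtt{md}}_{\bf Q}=I_p$ from \eqref{bezout2} recovers the form \eqref{August10_7am}, and by \lemref{phi11phi22} together with \thmref{oarasthm1} (the controller being $\gamma$-robustly stabilizing) ${\bf \Phi}_{11}$ is unimodular, hence invertible. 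Inverting the displayed factorization yields the key identity $(I_p+{\bf G}^{\mathtt{pt}}{\bf K}^{\mathtt{md}}_{\bf Q})^{-1}={\tilde{\bf Y}}^{\mathtt{md}}_{\bf Q}{\bf \Phi}_{11}^{-1}\tilde{\bf M}$.

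Finally I would substitute this identity into each of the four entries of \eqref{closedloopresponses1}. Wherever ${\bf G}^{\mathtt{pt}}=\tilde{\bf M}^{-1}\tilde{\bf N}$ appears on the right, the factors $\tilde{\bf M}\,\tilde{\bf M}^{-1}=I_p$ cancel and leave a trailing $\tilde{\bf N}$; wherever ${\bf K}^{\mathtt{md}}_{\bf Q}$ appears on the left, the factors $({\tilde{\bf Y}}^{\mathtt{md}}_{\bf Q})^{-1}{\tilde{\bf Y}}^{\mathtt{md}}_{\bf Q}=I$ cancel and leave a leading ${\tilde{\bf X}}^{\mathtt{md}}_{\bf Q}$. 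This leaves every entry with the common left factor, the common middle factor ${\bf \Phi}_{11}^{-1}$, and a common right factor, so the whole array collapses into the single outer product
\[
\ba{c} {\tilde{\bf Y}}^{\mathtt{md}}_{\bf Q}\\ {\tilde{\bf X}}^{\mathtt{md}}_{\bf Q}\ea {\bf \Phi}_{11}^{-1}\ba{cc} (\tilde{\bf M}^{\mathtt{md}}+\Delta_{\bf \tilde{M}}) & (\tilde{\bf N}^{\mathtt{md}}+\Delta_{\bf \tilde{N}})\ea,
\]
which is exactly the matrix inside the $\mathcal{H}_2$-norm of \eqref{cost2}. The main obstacle is the bookkeeping in the factorization step — correctly isolating $\tilde{\bf M}^{-1}$ and $({\tilde{\bf Y}}^{\mathtt{md}}_{\bf Q})^{-1}$ so that the middle factor is recognizably ${\bf \Phi}_{11}$, and ensuring its invertibility is legitimate (which is guaranteed by the $\gamma$-robust stabilizability already established); the remainder is pure cancellation.
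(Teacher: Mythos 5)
Your proposal is correct and follows essentially the same route as the paper's own proof: the Parseval reduction of the LQG cost to the $\mathcal{H}_2$ norm of the $(y,u)\times(\nu,w)$ closed-loop array appears in the paper's Appendix~D, and the key identity $(I_p+{\bf G}^{\mathtt{pt}}{\bf K}^{\mathtt{md}}_{\bf Q})^{-1}={\tilde{\bf Y}}^{\mathtt{md}}_{\bf Q}{\bf \Phi}_{11}^{-1}(\tilde{\bf M}^{\mathtt{md}}+\Delta_{\bf \tilde{M}})$ together with the $\tilde{\bf M}\tilde{\bf M}^{-1}$ and $({\tilde{\bf Y}}^{\mathtt{md}}_{\bf Q})^{-1}{\tilde{\bf Y}}^{\mathtt{md}}_{\bf Q}$ cancellations is exactly the computation in Appendix~C, merely run in the forward rather than the backward direction. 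Your explicit appeal to Lemma~\ref{phi11phi22} and Theorem~\ref{oarasthm1} for the invertibility of ${\bf \Phi}_{11}$ is a sound (indeed slightly more careful) justification of a step the paper performs tacitly.
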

\noindent
The proofs for the \propref{closedloopresponses11} and \thmref{theoremAcost} are given on \hyperref[appendixC]{Appendix C}.

\begin{thm}\label{theoremAcost}
The Robust LQG Control Problem is defined as :
\begin{equation} \label{theoremA}
\begin{aligned}
\min_{\bf Q\: \text{stable}}  \quad & \max_{\Big\| \ba{cc}  \Delta_{\bf \tilde{M}} &    \Delta_{\bf \tilde{N}}  \ea \Big\|_\infty < \gamma } \mathcal{H}({\bf G}^{\mathtt{pt}}, {{\bf K}^{\mathtt{md}}_{\bf Q}})   \\[5pt]
\textrm{s.t.} \quad  
\quad &  \Bigg\|
\ba{c}  {\tilde {\bf Y}}^{\mathtt{md}}_{\bf Q} \\  {\tilde {\bf X}}^{\mathtt{md}}_{\bf Q} \ea \Bigg\|_{\infty} \leq \dfrac{1}{
\gamma}.
\end{aligned}
\end{equation}
whose solution, obtained for the optimal Youla parameter ${\bf Q_*}$ in \eqref{theoremA} will be denoted by ${\tilde {\bf Y}}^{\mathtt{md}}_{\bf Q_*}$, ${\tilde {\bf X}}^{\mathtt{md}}_{\bf Q_*}$ such that the optimal, robust controller reads ${{\bf K}^{\mathtt{md}}_{\bf Q_*}} = {\bf {\tilde{X}}}^{\mathtt{md}}_{\bf Q_*}  ({\bf \tilde{Y}}^{\mathtt{md}}_{\bf Q_*})^{-1}$.
\end{thm}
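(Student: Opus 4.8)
The plan is to show that the min--max program \eqref{theoremA} is exactly the tractable reformulation of the robust LQG synthesis task: to find a controller that is $\gamma$-robustly stabilizing and that minimizes the worst-case $\mathcal{H}_2$ cost over the uncertainty set $\mathcal{G}_\gamma$. I would assemble the statement from three ingredients already in place, matching the objective, the inner maximization, and the constraint of \eqref{theoremA} one at a time.

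First, for the objective, I would invoke the standard equivalence between the expected quadratic LQG cost \eqref{LQGCostOriginal} and the squared $\mathcal{H}_2$ norm of the relevant closed-loop map. With the normalization $P_1 = I_p$, $P_2 = I_m$, $\sigma_{\delta x} = \sigma_\nu = 1$, the steady-state cost equals the squared $\mathcal{H}_2$ norm of the map from the noise $[w^T\ \nu^T]^T$ to the regulated signals, i.e. the four-block closed-loop response written out in \eqref{closedloopresponses1}. Proposition~\ref{closedloopresponses11} then identifies the square root of this cost with $\mathcal{H}({\bf G}^{\mathtt{pt}}, {{\bf K}^{\mathtt{md}}_{\bf Q}})$ in the coprime-factor form \eqref{cost2}, so minimizing the LQG cost is the same as minimizing $\mathcal{H}$ over the stable Youla parameters ${\bf Q}$.

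Second, for the inner maximization, I would encode robustness. By the standing assumption that ${\bf G}^{\mathtt{pt}} \in \mathcal{G}_\gamma$, the true plant is realized by some perturbation $[\Delta_{\tilde{\bf M}}\ \Delta_{\tilde{\bf N}}]$ with $\|[\Delta_{\tilde{\bf M}}\ \Delta_{\tilde{\bf N}}]\|_\infty < \gamma$, but that perturbation is unknown; hence any performance certificate must hold uniformly over $\mathcal{G}_\gamma$, which forces the supremum over all admissible $\Delta$ that appears in the objective of \eqref{theoremA}. Third, for the constraint, I would note that the cost \eqref{cost2} contains ${\bf \Phi}_{11}^{-1}$, which exists as a stable TFM only when ${\bf \Phi}_{11}$ is unimodular. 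By \lemref{phi11phi22} this unimodularity over the whole uncertainty ball is precisely the statement that ${{\bf K}^{\mathtt{md}}_{\bf Q}}$ is $\gamma$-robustly stabilizing, and by \thmref{oarasthm1} it is equivalent to the single, $\Delta$-independent bound $\|[\tilde{\bf Y}^{\mathtt{md}}_{\bf Q}\ \tilde{\bf X}^{\mathtt{md}}_{\bf Q}]^T\|_\infty \le 1/\gamma$. This constraint does double duty: it guarantees robust closed-loop stability and simultaneously makes the inner supremum finite and well-posed, since the norm bound displayed just before Proposition~\ref{closedloopresponses11} gives $\| [\Delta_{\tilde{\bf M}}\ \Delta_{\tilde{\bf N}}][\tilde{\bf Y}^{\mathtt{md}}_{\bf Q}\ \tilde{\bf X}^{\mathtt{md}}_{\bf Q}]^T \|_\infty < 1$, so ${\bf \Phi}_{11} = I_p + [\Delta_{\tilde{\bf M}}\ \Delta_{\tilde{\bf N}}][\tilde{\bf Y}^{\mathtt{md}}_{\bf Q}\ \tilde{\bf X}^{\mathtt{md}}_{\bf Q}]^T$ is invertible with a Neumann-series inverse converging in $\mathcal{H}_\infty$. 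Collecting the three pieces yields exactly \eqref{theoremA}, and the minimizing stable ${\bf Q_*}$ defines the optimal robust controller via the factors $\tilde{\bf X}^{\mathtt{md}}_{\bf Q_*}, \tilde{\bf Y}^{\mathtt{md}}_{\bf Q_*}$, as asserted.

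I expect the main obstacle to be the rigorous handling of the second step — replacing the unknown true plant by the worst case over $\mathcal{G}_\gamma$ — together with verifying that the $\mathcal{H}_2$ objective behaves well (finite and continuous) as $\Delta$ ranges over the \emph{open} ball $\|[\Delta_{\tilde{\bf M}}\ \Delta_{\tilde{\bf N}}]\|_\infty < \gamma$, where the supremum need not be attained. One must argue that passing to the closure of the ball leaves the optimal value unchanged, and this is precisely where the stable invertibility of ${\bf \Phi}_{11}$ secured by \thmref{oarasthm1} is indispensable.
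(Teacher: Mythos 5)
Your proposal is correct and follows essentially the same route as the paper: the paper's proof of this theorem consists precisely of your first ingredient (identifying the LQG cost with the squared $\mathcal{H}_2$ norm of the four closed-loop maps of Proposition~\ref{closedloopresponses11} via Parseval's theorem, yielding \eqref{cost2}), while the inner maximization and the constraint are imported, exactly as you do, from Assumption~1 and from Lemma~\ref{phi11phi22} together with Theorem~\ref{oarasthm1}. Your closing remark about attainment of the supremum over the open ball is a refinement the paper does not address, but it does not alter the argument.
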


\noindent
Canonical min-max formulation caused by the additive uncertainty on the coprime factors of the robust LQG controller synthesis renders the problem non-convex.
In order to circumvent this, an upper bound on the $\mathcal{H}({\bf G}^{\mathtt{pt}}, {{\bf K}^{\mathtt{md}}_{\bf Q}})$ cost functional will be derived, much in the spirit of \cite{Dean2018} and \cite{furieri2020}. This bound will further  be exploited to derive a quasi-convex approximation for the robust LQG control problem in the next subsection.

\subsection{Quasi-convex formulation}

\begin{prop}\label{LQGcostupperbound2}

Given any $\gamma$-robustly stabilizing controller satisfying $\Bigg\|
\ba{c} {\tilde {\bf Y}}^{\mathtt{md}}_{\bf Q} \\  {\tilde {\bf X}}^{\mathtt{md}}_{\bf Q} \ea \Bigg\|_{\infty} < \dfrac{1}{
\gamma}$ then for any additive model perturbations $\Big\| \ba{cc}  \Delta_{\bf \tilde{M}} &    \Delta_{\bf \tilde{N}}  \ea \Big\|_\infty < \gamma$, 
the cost functional of the robust LQG problem from \eqref{theoremA} admits the upper bound: 

\begin{equation}  \label{LQGcostupperbound3}
  \mathcal{H}({\bf G}^{\mathtt{pt}}, {{\bf K}^{\mathtt{md}}_{\bf Q}}) \leq  \dfrac{1}{1 - \gamma \bigg\| \ba{c}  {\tilde {\bf Y}}^{\mathtt{md}}_{\bf Q} \\  {\tilde {\bf X}}^{\mathtt{md}}_{\bf Q}  \ea \bigg\|_{\infty}}   {\Bigg[ {h \Big( \gamma, \bigg\| \ba{c}   {\tilde {\bf Y}}^{\mathtt{md}}_{\bf Q} \\  {\tilde {\bf X}}^{\mathtt{md}}_{\bf Q} \ea \bigg\|_{\infty} \Big)} \Bigg\| \ba{c}  {\tilde {\bf Y}}^{\mathtt{md}}_{\bf Q} \\  {\tilde {\bf X}}^{\mathtt{md}}_{\bf Q} \ea \bigg\|_{\mathcal{H}_{2}} \Bigg] }
\end{equation}

\noindent
where ${h \bigg( \gamma, \bigg\| \ba{c}   {\tilde {\bf Y}}^{\mathtt{md}}_{\bf Q} \\  {\tilde {\bf X}}^{\mathtt{md}}_{\bf Q} \ea \bigg\|_{\infty} \bigg)} \overset{def}{=} \bigg(1 + {\gamma \bigg\| \ba{c}   {\tilde {\bf Y}}^{\mathtt{md}}_{\bf Q} \\  {\tilde {\bf X}}^{\mathtt{md}}_{\bf Q} \ea \bigg\|_{\infty}  }\bigg)  \bigg( \bigg\| \ba{cc}  {\tilde {\bf M}^{\mathtt{md}}} &  {\tilde {\bf N}^{\mathtt{md}}}  \ea \bigg\|_{\infty} + \gamma \bigg)$.



\end{prop}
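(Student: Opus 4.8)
The plan is to start from the closed-form expression \eqref{cost2} for the cost and reduce it, by repeated use of the submultiplicative norm inequalities, to a product of three scalar quantities: the $\mathcal{H}_2$ size of the Youla factor, the $\mathcal{H}_\infty$ size of the perturbed left factor, and the $\mathcal{H}_\infty$ size of ${\bf \Phi}_{11}^{-1}$. To streamline the bookkeeping, introduce the shorthand $\mathbf{V} \overset{def}{=} \ba{c} {\tilde {\bf Y}}^{\mathtt{md}}_{\bf Q} \\ {\tilde {\bf X}}^{\mathtt{md}}_{\bf Q} \ea$, $\mathbf{W}\overset{def}{=}\ba{cc}\Delta_{\bf \tilde{M}} & \Delta_{\bf \tilde{N}}\ea$ and $\mathbf{W}_0\overset{def}{=}\ba{cc}{\tilde {\bf M}}^{\mathtt{md}} & {\tilde {\bf N}}^{\mathtt{md}}\ea$, so that \eqref{August10_7am} reads ${\bf \Phi}_{11}=I_p+\mathbf{W}\mathbf{V}$ and \eqref{cost2} becomes $\mathcal{H}({\bf G}^{\mathtt{pt}},{\bf K}^{\mathtt{md}}_{\bf Q})=\big\|\mathbf{V}\,{\bf \Phi}_{11}^{-1}(\mathbf{W}_0+\mathbf{W})\big\|_{\mathcal{H}_2}$. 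The two standing hypotheses are $\|\mathbf W\|_\infty<\gamma$ (Definition \ref{modeluncertaintyset}) and $\|\mathbf V\|_\infty<1/\gamma$ (the constraint of \thmref{oarasthm1}).

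First I would peel off the $\mathcal{H}_2$ factor. Applying the $\mathcal{H}_2/\mathcal{H}_\infty$ companion of the submultiplicative inequality \eqref{submultiplicativeinfty}, namely $\|{\bf A}{\bf B}\|_{\mathcal{H}_2}\le\|{\bf A}\|_{\mathcal{H}_2}\|{\bf B}\|_\infty$ with ${\bf A}=\mathbf V$, gives $\mathcal{H}({\bf G}^{\mathtt{pt}},{\bf K}^{\mathtt{md}}_{\bf Q})\le\|\mathbf V\|_{\mathcal{H}_2}\,\big\|{\bf \Phi}_{11}^{-1}(\mathbf W_0+\mathbf W)\big\|_\infty$. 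I would then split the remaining factor by the $\mathcal{H}_\infty$ submultiplicativity \eqref{submultiplicativeinfty} into $\|{\bf \Phi}_{11}^{-1}\|_\infty\,\|\mathbf W_0+\mathbf W\|_\infty$, and control the last term by the triangle inequality, $\|\mathbf W_0+\mathbf W\|_\infty\le\|\mathbf W_0\|_\infty+\|\mathbf W\|_\infty<\|\mathbf W_0\|_\infty+\gamma$, which is exactly the second parenthetical factor of $h$.

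The crux is estimating $\|{\bf \Phi}_{11}^{-1}\|_\infty$, and this is where the robust-stabilizability constraint enters decisively. Since ${\bf \Phi}_{11}=I_p+\mathbf W\mathbf V$ with $\|\mathbf W\mathbf V\|_\infty\le\|\mathbf W\|_\infty\|\mathbf V\|_\infty\le\gamma\|\mathbf V\|_\infty<\gamma\cdot(1/\gamma)=1$, the Neumann series $\sum_{k\ge0}(-\mathbf W\mathbf V)^k$ converges in $\mathcal{H}_\infty$; this both certifies that ${\bf \Phi}_{11}$ is invertible with a stable inverse (the unimodularity asserted in \lemref{phi11phi22}) and yields $\|{\bf \Phi}_{11}^{-1}\|_\infty\le 1/(1-\gamma\|\mathbf V\|_\infty)$, which is the claimed prefactor. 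I expect this to be the main obstacle: without the bound $\|\mathbf V\|_\infty\le1/\gamma$ the contraction $\|\mathbf W\mathbf V\|_\infty<1$ could fail, the Neumann argument would break down, and neither invertibility nor a norm estimate would be available.

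Finally I would assemble the three estimates to obtain $\mathcal{H}({\bf G}^{\mathtt{pt}},{\bf K}^{\mathtt{md}}_{\bf Q})\le \dfrac{\|\mathbf V\|_{\mathcal{H}_2}\,(\|\mathbf W_0\|_\infty+\gamma)}{1-\gamma\|\mathbf V\|_\infty}$. The extra factor $(1+\gamma\|\mathbf V\|_\infty)$ appearing in $h$ is precisely the companion bound $\|{\bf \Phi}_{11}\|_\infty\le 1+\|\mathbf W\|_\infty\|\mathbf V\|_\infty<1+\gamma\|\mathbf V\|_\infty$; it is retained if one keeps the perturbed closed-loop map written relative to the nominal loop (so that a condition-number product $\|{\bf \Phi}_{11}\|_\infty\|{\bf \Phi}_{11}^{-1}\|_\infty$ survives) rather than cancelling it via the nominal B\'ezout identity in \eqref{bezout2}. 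Since $1+\gamma\|\mathbf V\|_\infty\ge1$, the stated bound \eqref{LQGcostupperbound3} follows a fortiori from the tighter estimate above. Because every inequality used is uniform over admissible perturbations with $\|\mathbf W\|_\infty<\gamma$, the same bound controls the worst case and hence bounds the inner maximum of the robust problem \eqref{theoremA}.
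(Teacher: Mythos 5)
Your proposal is correct and follows essentially the same route as the paper's proof in Appendix~D: the same splitting of the cost \eqref{cost2} into an $\mathcal{H}_2$ factor times $\mathcal{H}_\infty$ factors via Lemma~\ref{submultiplicativeh2} and \eqref{submultiplicativeinfty}, the same triangle-inequality bound on the perturbed left factors, and the same Neumann-series/small-gain estimate of $\|{\bf \Phi}_{11}^{-1}\|_\infty$ under the constraint of Theorem~\ref{oarasthm1}. Your ``a fortiori'' closing step is legitimate, since the paper itself only produces the extra factor $\bigl(1+\gamma\|\mathbf V\|_\infty\bigr)\ge 1$ by loosening one term in the final algebra rather than from a genuinely tighter mechanism.
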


\noindent
 Next one of the main results is stated that provides an aptly designed approximation of the robust LQG control problem from Definition~\ref{theoremAcost} by means of the LQG cost upper bound from \eqref{LQGcostupperbound3}. The proofs for \propref{LQGcostupperbound2}  and \thmref{theoremB} are given in \hyperref[appendixE]{Appendix E}.

\begin{thm}\label{theoremB}
For the true plant, ${\bf G}^{\mathtt{pt}} \in \mathcal{G}_\gamma$ 
and $(\forall) \alpha >0$, the robust LQG control problem in \eqref{theoremA} admits the following upper bound:
\begin{equation} \label{theoremB2}
\begin{aligned}
        \min_{\delta \in [0,1/\gamma)} & \dfrac{1}{1 - \gamma \delta}  \min_{\bf Q\: \text{stable}}  {\bigg( {h \big(\gamma, \alpha\big)} \Bigg\| \ba{c}  {\tilde {\bf Y}}^{\mathtt{md}}_{\bf Q} \\  {\tilde {\bf X}}^{\mathtt{md}}_{\bf Q} \ea \bigg\|_{\mathcal{H}_{2}} \bigg)}\\ \vspace{5pt}
       \textrm{s.t.} & \Bigg\| \ba{c}  {\tilde {\bf Y}}^{\mathtt{md}}_{\bf Q} \\  {\tilde {\bf X}}^{\mathtt{md}}_{\bf Q}\ea \Bigg\|_{\infty}  \leq \min \{\delta, \alpha\}, 
\end{aligned}
\end{equation}


\noindent where as before, ${h \big( \gamma, \alpha \big)}$ = $ (1 + \gamma \alpha) \bigg( \bigg\| \ba{cc}  {\tilde {\bf M}^{\mathtt{md}}} & {\tilde {\bf N}^{\mathtt{md}}}  \ea \bigg\|_{\infty} + \gamma\bigg) $.\\
\end{thm}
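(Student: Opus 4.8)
The plan is to promote the pointwise bound of \propref{LQGcostupperbound2} into the surrogate \eqref{theoremB2} by an epigraph-type variable splitting that exploits the monotonicity of the two scalar gains appearing in \eqref{LQGcostupperbound3}. Throughout, write $\Psi_{\bf Q} \overset{def}{=} \ba{c} {\tilde {\bf Y}}^{\mathtt{md}}_{\bf Q} \\ {\tilde {\bf X}}^{\mathtt{md}}_{\bf Q} \ea$ for the stacked right coprime factors of the controller, which by \eqref{5} is affine in the Youla parameter ${\bf Q}$. The first step is to note that the right-hand side of \eqref{LQGcostupperbound3} does not depend on the perturbations $\Delta_{\bf \tilde{M}},\Delta_{\bf \tilde{N}}$; hence \propref{LQGcostupperbound2} already discharges the inner maximization in \eqref{theoremA}: for every stable ${\bf Q}$ with $\|\Psi_{\bf Q}\|_\infty < 1/\gamma$,
\begin{equation*}
\max_{\Big\| \ba{cc} \Delta_{\bf \tilde{M}} & \Delta_{\bf \tilde{N}} \ea \Big\|_\infty < \gamma} \mathcal{H}({\bf G}^{\mathtt{pt}}, {{\bf K}^{\mathtt{md}}_{\bf Q}}) \leq \dfrac{h\big(\gamma, \|\Psi_{\bf Q}\|_\infty\big)}{1 - \gamma \|\Psi_{\bf Q}\|_\infty}\, \|\Psi_{\bf Q}\|_{\mathcal{H}_2},
\end{equation*}
so that the min-max problem \eqref{theoremA} collapses to a single minimization of this bound over feasible ${\bf Q}$.

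Next I would decouple the two ${\bf Q}$-dependent scalar factors. On $[0,1/\gamma)$ both $x \mapsto (1-\gamma x)^{-1}$ and $x \mapsto h(\gamma,x)$ are nondecreasing. Fixing $\alpha>0$ and introducing a slack level $\delta \in [0,1/\gamma)$, any stable ${\bf Q}$ obeying the single constraint $\|\Psi_{\bf Q}\|_\infty \leq \min\{\delta,\alpha\}$ then satisfies simultaneously $(1-\gamma\|\Psi_{\bf Q}\|_\infty)^{-1} \leq (1-\gamma\delta)^{-1}$ and $h(\gamma,\|\Psi_{\bf Q}\|_\infty) \leq h(\gamma,\alpha)$, whence
\begin{equation*}
\dfrac{h\big(\gamma,\|\Psi_{\bf Q}\|_\infty\big)}{1-\gamma\|\Psi_{\bf Q}\|_\infty}\,\|\Psi_{\bf Q}\|_{\mathcal{H}_2} \leq \dfrac{h(\gamma,\alpha)}{1-\gamma\delta}\,\|\Psi_{\bf Q}\|_{\mathcal{H}_2}.
\end{equation*}

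To close the argument, observe that any such pair $(\delta,{\bf Q})$ has $\|\Psi_{\bf Q}\|_\infty \leq \delta < 1/\gamma$, so ${\bf Q}$ is feasible for \eqref{theoremA}. Chaining the two displays shows that the optimal value of \eqref{theoremA} is at most $\tfrac{h(\gamma,\alpha)}{1-\gamma\delta}\|\Psi_{\bf Q}\|_{\mathcal{H}_2}$ for every feasible $(\delta,{\bf Q})$. Minimizing this right-hand side first over ${\bf Q}$ subject to $\|\Psi_{\bf Q}\|_\infty \leq \min\{\delta,\alpha\}$ — with the constant $h(\gamma,\alpha)$ and the factor $(1-\gamma\delta)^{-1}$ pulled outside the inner minimization — and then over $\delta \in [0,1/\gamma)$ reproduces exactly \eqref{theoremB2}.

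The step I expect to require the most care is the decoupling itself: freezing the gain at the constant $h(\gamma,\alpha)$ is precisely what severs the bilinear coupling between $h(\gamma,\|\Psi_{\bf Q}\|_\infty)$ and $\|\Psi_{\bf Q}\|_{\mathcal{H}_2}$ and renders the inner program convex in ${\bf Q}$, since $\Psi_{\bf Q}$ is affine in ${\bf Q}$ and both the $\mathcal{H}_2$ objective and the $\mathcal{H}_\infty$ constraint are convex in it; only the scalar line search over $\delta$ then remains, which accounts for the claimed quasi-convexity. A minor point to handle cleanly is the passage from the closed constraint $\|\Psi_{\bf Q}\|_\infty \leq 1/\gamma$ in \eqref{theoremA} to the open one $\|\Psi_{\bf Q}\|_\infty < 1/\gamma$ needed for \eqref{LQGcostupperbound3} to stay finite; this is harmless because $\delta$ ranges over $[0,1/\gamma)$ and any boundary minimizer of \eqref{theoremA} is approached from the interior.
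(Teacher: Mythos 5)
Your proposal is correct and takes essentially the same route as the paper's proof: discharge the inner maximization over $\Delta_{\bf \tilde{M}},\Delta_{\bf \tilde{N}}$ via \propref{LQGcostupperbound2}, introduce the level variable $\delta$ for the factor $(1-\gamma\|\cdot\|_\infty)^{-1}$, and freeze the gain at the constant $h(\gamma,\alpha)$ through the added $\mathcal{H}_\infty$ constraint. The only cosmetic difference is that the paper packages the $\delta$-splitting as an invocation of the fractional-programming reformulation (\lemref{sarahopt}, quoted as an equality), whereas you prove the one-sided inequality directly by monotonicity of $x\mapsto(1-\gamma x)^{-1}$ and $x\mapsto h(\gamma,x)$, which is all the upper bound actually requires.
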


\begin{rem} \label{numerical} (Numerical Computation) For each fixed $\delta$, the inner optimization is convex but the dimension of ${\bf Q}(z)$ remains infinite.For numerical computation, a  (FIR) truncation is considered on ${\bf Q}(z)$, such that an equivalent Semi-Definite Program (SDP) can be formulated for the inner optimization problem. Details of this is discussed in \hyperref[appendixE]{Appendix E}.

\end{rem}

\begin{rem} (Feasibility)
Since the subsequent quasi-convex optimization problem in Theorem-\ref{theoremB} must include an additional constraint $\bigg\| \ba{c}   {\tilde {\bf Y}}^{\mathtt{md}}_{\bf Q} \\  {\tilde {\bf X}}^{\mathtt{md}}_{\bf Q} \ea \bigg\|_{\infty} < \alpha$ proportional to $\alpha$, this $\alpha$ should be chosen as small as possible.
However,  $\alpha$ can't be made arbitrarily small and therefore the   feasibility of the quasi-convex program from Theorem-\ref{theoremB} cannot be guaranteed. This is caused by the fact that  $\alpha$ must be assimilated to the norm of a LCF of a stabilizing controller for {\em the true plant},  whose $H_\infty$ attenuation is simultaneously greater or equal to the one of the optimal $H_\infty$ controller {\em for the nominal model} (see Theorem~\ref{oarasthm1}). This reflects the fact that the feasibility of the closed loop "learning" problem  depends inherently on the performance (with respect to the model uncertainty of the plant) of the initially chosen stabilizing controller (the one with which  the closed loop "learning" is being performed).

\end{rem}

\section{Analysis of End-to-End Performance}\label{endtoendanalysis}
\subsection{Sub-optimality guarantee}
\noindent
If we denote by ${{\bf K}^{\mathtt{opt}}}$  the optimal $\mathcal{H}_2$ controller for the true plant, then by Assumption~1 there exist stable additive factors such that ${\bf G}^{\mathtt{pt}}$ = $(\tilde{\bf M}^{\mathtt{md}} + \Delta_{\bf \tilde{M}})^{-1} (\tilde{\bf N}^{\mathtt{md}} + \Delta_{\bf \tilde{N}})$ = $({\bf N}^{\mathtt{md}}+\Delta_{\bf {N}})({\bf M}^{\mathtt{md}}+\Delta_{\bf {M}})^{-1}$ and furthermore, there always exists a Bezout identity of the true plant  that  features the optimal controller ${{\bf K}^{\mathtt{opt}}} = ({\bf Y}^{\mathtt{opt}})^{-1} {\bf X}^{\mathtt{opt}} = {\bf {\tilde{X}}}^{\mathtt{opt}}  ({\bf \tilde{Y}}^{\mathtt{opt}})^{-1}$ as its ``central controller'', thus reading:
\begin{equation}\label{RealPlantOptimalController1}
\ba{cc}   ({\tilde {\bf M}}^{\mathtt{md}}+\Delta_{\bf \tilde{M}}) &   ({\tilde {\bf N}}^{\mathtt{md}}+\Delta_{\bf \tilde{N}}) \\ - {{\bf X}}^{\mathtt{opt}} &  {{\bf Y}}^{\mathtt{opt}}  \ea
\ba{cc}  {\tilde {\bf Y}}^{\mathtt{opt}} & -({{\bf N}}^{\mathtt{pt}}+\Delta_{\bf {N}}) \\   {\tilde {\bf X}}^{\mathtt{opt}} &  ({{\bf M}}^{\mathtt{pt}}+\Delta_{\bf {M}}) \ea = \ba{cc}    { I_p} &    {{0}} \\   {0} &   {I_m} \ea
\end{equation} 
\noindent
\noindent
Consequently, the square root of the LQG cost functional for optimal controller is given by:
\begin{equation}\label{cost3}
    \mathcal{H}({\bf G}^{\mathtt{pt}}, {\bf K}^{\mathtt{opt}}) \overset{def}{=}   \Bigg\| \ba{c}  {\tilde {\bf Y}}^{\mathtt{opt}} \\  {\tilde {\bf X}}^{\mathtt{opt}} \ea \ba{cc}   ({\tilde {\bf M}}^{\mathtt{md}}+\Delta_{\bf \tilde{M}}) &   ({\tilde {\bf N}}^{\mathtt{md}}+\Delta_{\bf \tilde{N}})    \ea \Bigg\|_{\mathcal{H}_{2}}
\end{equation}

\noindent
Next the main result on the sub-optimality guarantee for the performance of the robust controller with model uncertainty of radius $\gamma$ is stated. The proof for \thmref{suboptimalitygurantee} is given in \hyperref[appendixF]{Appendix F}. 


\begin{thm}\label{suboptimalitygurantee}
Let ${{\bf K}^{\mathtt{opt}}}$ be the optimal LQG controller and ${{\bf G}^{\mathtt{pt}}}$ be the model of the true plant, with modeling error uncertainty satisfying ${\Big\| \ba{cc}  \Delta_{\bf \tilde{M}} &    \Delta_{\bf \tilde{N}}  \ea \Big\|_\infty < \gamma }$. Furthermore, let ${\bf Q_*}$ and $\delta_*$  denote the solution to \eqref{theoremB2}. 
Then, when applying the resulting controller ${{\bf K}^{\mathtt{md}}_{\bf Q_*}}$ in feedback interconnection with the true plant ${{\bf G}^{\mathtt{pt}}}$, the relative error in the LQG cost is upper bounded by:

\begin{equation}\label{relativeerror}
    \dfrac{ \mathcal{H}({\bf G}^{\mathtt{pt}}, {{\bf K}^{\mathtt{md}}_{\bf Q_*}})^2 -  \mathcal{H}({\bf G}^{\mathtt{pt}}, {\bf K}^{\mathtt{opt}})^2 }{ \mathcal{H}({\bf G}^{\mathtt{pt}}, {\bf K}^{\mathtt{opt}})^2 } \leq  \dfrac{1}{\Big(1-\gamma \bigg\| \ba{c} {\bf \tilde{Y}}^{\mathtt{md}}_{\bf Q_*} \\  {\bf \tilde{X}}^{\mathtt{md}}_{\bf Q_*} \ea \bigg\|_{\infty} \Big)^2} \times {g \bigg( \gamma, \bigg\| \ba{c}   {\tilde {\bf Y}}^{\mathtt{md}}_{\bf Q_*} \\  {\tilde {\bf X}}^{\mathtt{md}}_{\bf Q_*} \ea \bigg\|_{\infty} \bigg)}^2 -1, 
\end{equation}
\begin{equation*}
    \textrm{where} \hspace{4pt}
    \small {g \bigg( \gamma, \bigg\| \ba{c}   {\tilde {\bf Y}}^{\mathtt{md}}_{\bf Q_*} \\  {\tilde {\bf X}}^{\mathtt{md}}_{\bf Q_*} \ea \bigg\|_{\infty} \bigg)} \overset{def}{=} \bigg\| \ba{c}  {\tilde {\bf Y}}^{\mathtt{md}}_{\bf Q_*} \\  {\tilde {\bf X}}^{\mathtt{md}}_{\bf Q_*} \ea \bigg\|_{\infty} \bigg(1 + {\gamma \bigg\| \ba{c}   {\tilde {\bf Y}}^{\mathtt{md}}_{\bf Q_*} \\  {\tilde {\bf X}}^{\mathtt{md}}_{\bf Q_*} \ea \bigg\|_{\infty}  }\bigg)  \bigg( \bigg\| \ba{cc}  {\tilde {\bf M}^{\mathtt{md}}} &  {\tilde {\bf N}^{\mathtt{md}}}  \ea \bigg\|_{\infty} + \gamma \bigg) .
\end{equation*}
\end{thm}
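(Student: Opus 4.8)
The plan is to reduce \eqref{relativeerror} to a single multiplicative comparison between the robust and the optimal cost, and then to establish that comparison. Set $\beta \overset{def}{=} \big\| \ba{c} {\tilde {\bf Y}}^{\mathtt{md}}_{\bf Q_*} \\ {\tilde {\bf X}}^{\mathtt{md}}_{\bf Q_*} \ea \big\|_\infty$, so that $g$ is evaluated at $\beta$, and note that the function $g$ factors as $g(\gamma,\beta) = \beta\, h(\gamma,\beta)$ with $h$ the gain appearing in \propref{LQGcostupperbound2}. Because the robust constraint forces $\beta < 1/\gamma$, the claimed bound \eqref{relativeerror} is algebraically equivalent, after adding $1$, taking square roots, and clearing the common positive denominator, to the single estimate
\[
\mathcal{H}({\bf G}^{\mathtt{pt}}, {{\bf K}^{\mathtt{md}}_{\bf Q_*}}) \;\le\; \dfrac{g(\gamma,\beta)}{1-\gamma\beta}\, \mathcal{H}({\bf G}^{\mathtt{pt}}, {\bf K}^{\mathtt{opt}}).
\]
The entire proof then reduces to this one inequality; the squaring and rearranging I would carry out last as routine arithmetic.

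For the left-hand side I would invoke \propref{LQGcostupperbound2} at the optimal Youla parameter ${\bf Q_*}$, which gives
\[
\mathcal{H}({\bf G}^{\mathtt{pt}}, {{\bf K}^{\mathtt{md}}_{\bf Q_*}}) \;\le\; \dfrac{h(\gamma,\beta)}{1-\gamma\beta}\,\Big\| \ba{c} {\tilde {\bf Y}}^{\mathtt{md}}_{\bf Q_*} \\ {\tilde {\bf X}}^{\mathtt{md}}_{\bf Q_*} \ea \Big\|_{\mathcal{H}_2}.
\]
Since $g(\gamma,\beta) = \beta\, h(\gamma,\beta)$, it therefore suffices to prove the key lemma $\big\| \ba{c} {\tilde {\bf Y}}^{\mathtt{md}}_{\bf Q_*} \\ {\tilde {\bf X}}^{\mathtt{md}}_{\bf Q_*} \ea \big\|_{\mathcal{H}_2} \le \beta\, \mathcal{H}({\bf G}^{\mathtt{pt}}, {\bf K}^{\mathtt{opt}})$: the $\mathcal{H}_2$ norm of the robustly–optimal controller factors is controlled by its own $\mathcal{H}_\infty$ norm times the optimal cost.

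To obtain a handle on the optimal cost I would exploit the true-plant B\'{e}zout identity \eqref{RealPlantOptimalController1} that features ${\bf K}^{\mathtt{opt}}$ as central controller. Its $(1,1)$ block reads $\Psi\, V^{\mathtt{opt}} = I_p$, where $\Psi \overset{def}{=} \ba{cc} ({\tilde {\bf M}}^{\mathtt{md}}+\Delta_{\bf \tilde{M}}) & ({\tilde {\bf N}}^{\mathtt{md}}+\Delta_{\bf \tilde{N}}) \ea$ and $V^{\mathtt{opt}} \overset{def}{=} \ba{c} {\tilde {\bf Y}}^{\mathtt{opt}} \\ {\tilde {\bf X}}^{\mathtt{opt}} \ea$, so that $V^{\mathtt{opt}}$ is a right inverse of the true-plant left factors. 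This yields the algebraic identity $V^{\mathtt{opt}} = (V^{\mathtt{opt}}\Psi)\,V^{\mathtt{opt}}$, and the submultiplicative inequality \eqref{submultiplicativeinfty} then gives $\| V^{\mathtt{opt}} \|_{\mathcal{H}_2} \le \| V^{\mathtt{opt}}\Psi \|_{\mathcal{H}_2}\,\| V^{\mathtt{opt}} \|_\infty = \mathcal{H}({\bf G}^{\mathtt{pt}}, {\bf K}^{\mathtt{opt}})\,\| V^{\mathtt{opt}} \|_\infty$, recalling the expression \eqref{cost3} for the optimal cost.

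It remains to transfer this estimate from $V^{\mathtt{opt}}$ to the robust factors, and this is where I would use the optimality of ${\bf Q_*}$. Since ${\bf K}^{\mathtt{opt}}$ stabilizes the true plant, it admits a representation in the nominal Youla parameterization whose right coprime factor coincides, up to the unimodular normalization of \propref{14Martie2019}, with $V^{\mathtt{opt}}$; choosing the auxiliary level $\alpha \ge \| V^{\mathtt{opt}} \|_\infty$ as dictated by the Feasibility remark renders this representation admissible for the inner program of \eqref{theoremB2} at $\delta = \delta_*$. Because that inner minimization minimizes precisely the $\mathcal{H}_2$ norm of the controller factors subject to the $\mathcal{H}_\infty$ radius constraint, optimality of ${\bf Q_*}$ yields $\big\|\ba{c} {\tilde {\bf Y}}^{\mathtt{md}}_{\bf Q_*} \\ {\tilde {\bf X}}^{\mathtt{md}}_{\bf Q_*} \ea\big\|_{\mathcal{H}_2} \le \| V^{\mathtt{opt}} \|_{\mathcal{H}_2}$, while the active radius constraint identifies $\| V^{\mathtt{opt}} \|_\infty \le \beta$; chaining the three bounds proves the key lemma and hence the theorem. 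The main obstacle is exactly this last transfer: one must rigorously justify that ${\bf K}^{\mathtt{opt}}$, which as an LQG law carries no a~priori margin against the nominal model, can be normalized into a feasible point of the nominal program with $\mathcal{H}_\infty$ radius no larger than $\beta$. This is the role of the level $\alpha$ and the reason the feasibility of \eqref{theoremB2} is only conditional, as flagged in the Feasibility remark.
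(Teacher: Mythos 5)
Your reduction to the single inequality $\mathcal{H}({\bf G}^{\mathtt{pt}}, {{\bf K}^{\mathtt{md}}_{\bf Q_*}}) \le \tfrac{g(\gamma,\beta)}{1-\gamma\beta}\,\mathcal{H}({\bf G}^{\mathtt{pt}}, {\bf K}^{\mathtt{opt}})$ is exactly the paper's target, but your route to it has a genuine gap. The paper never compares the robust and optimal controller factors in $\mathcal{H}_2$ norm. Instead it inserts the identity $I_p = \Psi V^{\mathtt{opt}}$ (the $(1,1)$ block of \eqref{RealPlantOptimalController1}, in your notation $\Psi = \ba{cc} {\tilde {\bf M}}^{\mathtt{md}}+\Delta_{\bf \tilde{M}} & {\tilde {\bf N}}^{\mathtt{md}}+\Delta_{\bf \tilde{N}} \ea$, $V^{\mathtt{opt}}=\ba{c}{\tilde {\bf Y}}^{\mathtt{opt}}\\ {\tilde {\bf X}}^{\mathtt{opt}}\ea$) \emph{inside} the cost expression \eqref{cost2}, giving $\mathcal{H}({\bf G}^{\mathtt{pt}}, {{\bf K}^{\mathtt{md}}_{\bf Q_*}}) = \big\| V_{\bf Q_*}{\bf \Phi}_{11}^{-1}\Psi\, V^{\mathtt{opt}}\Psi\big\|_{\mathcal{H}_2} \le \big\|V_{\bf Q_*}{\bf \Phi}_{11}^{-1}\Psi\big\|_\infty\,\mathcal{H}({\bf G}^{\mathtt{pt}}, {\bf K}^{\mathtt{opt}})$ by \lemref{submultiplicativeh2}, and then bounds the $\mathcal{H}_\infty$ factor by $g(\gamma,\beta)/(1-\gamma\beta)$ by rerunning the computation of \propref{LQGcostupperbound2} entirely in the $\mathcal{H}_\infty$ norm (this is where the extra factor $\beta$ turning $h$ into $g=\beta h$ comes from). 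Notably, that argument uses only feasibility of ${\bf Q_*}$, not its optimality.

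Your key lemma $\|V_{\bf Q_*}\|_{\mathcal{H}_2} \le \beta\,\mathcal{H}({\bf G}^{\mathtt{pt}}, {\bf K}^{\mathtt{opt}})$ cannot be established the way you describe, for two reasons. First, to invoke optimality of ${\bf Q_*}$ in the inner program of \eqref{theoremB2} you must exhibit $V^{\mathtt{opt}}$ as $\ba{c}{\tilde {\bf Y}}^{\mathtt{md}}_{\bf Q}\\ {\tilde {\bf X}}^{\mathtt{md}}_{\bf Q}\ea$ for some stable ${\bf Q}$, i.e.\ ${\bf K}^{\mathtt{opt}}$ must stabilize the \emph{nominal} model ${\bf G}^{\mathtt{md}}$; but ${\bf K}^{\mathtt{opt}}$ is the LQG-optimal controller for the \emph{true} plant and, as the paper itself stresses, LQG controllers carry no guaranteed stability margin, so such a representation need not exist, and your choice of $\alpha$ cannot repair this. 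Second, even granting feasibility, the radius constraint only yields $\|V^{\mathtt{opt}}\|_\infty \le \min\{\delta_*,\alpha\}$, not $\|V^{\mathtt{opt}}\|_\infty \le \beta = \|V_{\bf Q_*}\|_\infty$; since $\beta$ may be strictly smaller than $\min\{\delta_*,\alpha\}$ and $g$ is increasing in its second argument, your chain would at best deliver the bound with $\min\{\delta_*,\alpha\}$ in place of $\beta$, which is not \eqref{relativeerror}. You flag the first obstacle yourself but do not overcome it; the paper's insertion of $\Psi V^{\mathtt{opt}}=I_p$ is precisely what renders both issues moot.
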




\noindent
\begin{rem} (Optimality vs Robustness) If $\gamma \bigg\| \ba{c} {\bf \tilde{Y}}^{\mathtt{md}}_{\bf Q_*} \\  {\bf \tilde{X}}^{\mathtt{md}}_{\bf Q_*} \ea \bigg\|_{\infty} = \eta$, it's easy to observe that $\eta \in (0,1)$.  Then it's immediate to see that the upper bound of the relative error in the LQG cost increases as a function of $\eta$. The price of obtaining a faster rate 
is that the controller becomes less robust to model uncertainty as pointed out in \cite{mania2019}, \cite{furieri2020}. It holds for this case too as shown in \thmref{suboptimalitygurantee}. In practice, using a relatively large value for $\eta$ forces a trade-off of optimality for robustness in the controller design procedure. In general, optimality stands i.e. better controller performance is guaranteed as $\eta$ goes closer to $0$ and better robustness performance is guaranteed as $\eta$ goes closer to $1$ with the upper bound \eqref{relativeerror} of relative error in LQG cost might be large. This is shown with an example below.

\noindent
Let's set $\eta = \dfrac{1}{5}$.
 Then by \thmref{suboptimalitygurantee} relative error in the LQG cost is
 \begin{equation}
    \dfrac{ \mathcal{H}({\bf G}^{\mathtt{pt}}, {{\bf K}^{\mathtt{md}}_{\bf Q_*}})^2 -  \mathcal{H}({\bf G}^{\mathtt{pt}}, {\bf K}^{\mathtt{opt}})^2 }{ \mathcal{H}({\bf G}^{\mathtt{pt}}, {\bf K}^{\mathtt{opt}})^2 } \leq  2 \times {g \bigg( \gamma, \bigg\| \ba{c}   {\tilde {\bf Y}}^{\mathtt{md}}_{\bf Q_*} \\  {\tilde {\bf X}}^{\mathtt{md}}_{\bf Q_*} \ea \bigg\|_{\infty} \bigg)} \bigg]^2 -1.
\end{equation}
Hence,  the the relative error in the LQG cost grows as $\mathcal{O}(\gamma^2)$ as long as $\gamma \bigg\| \ba{c} {\bf \tilde{Y}}^{\mathtt{md}}_{\bf Q_*} \\  {\bf \tilde{X}}^{\mathtt{md}}_{\bf Q_*} \ea \bigg\|_{\infty} < \dfrac{1}{5}$.
\end{rem}

\section{Closed Loop Identification Scheme} \label{clsysid}

\noindent

Figure~2 at the top of next page depicts the closed-loop identification setup of a potentially unstable {\em noise contaminated plant} $\textbf{G}^\mathtt{md}$ with control input $u$, noise $\nu$ (taken $w$ = $0$) and output measurement $y$ (where  $u$ and  $\nu$ are assumed independent and stationary), provided that some initial stabilizing controller ${{\bf K}^{\mathtt{md}}}$ is available beforehand. 
\begin{figure}[ht]
\centering
\includegraphics[width=15.6cm]{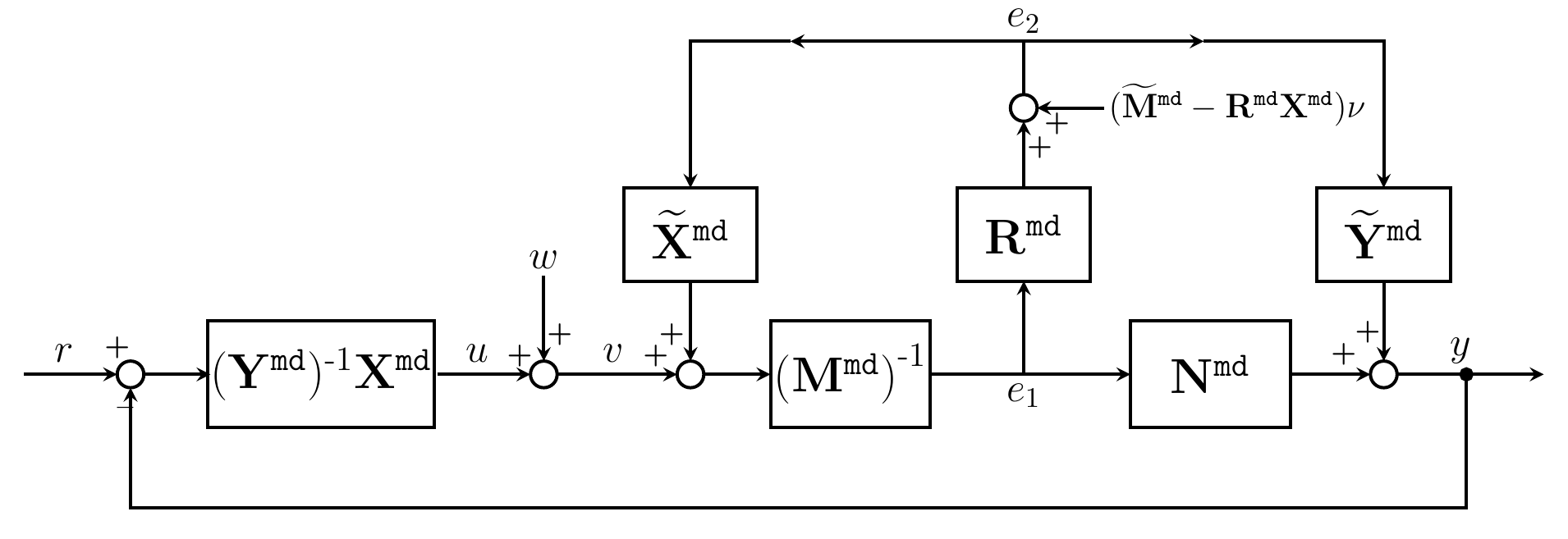}
\caption{Closed loop Identification for Noise Contaminated Plant}
\label{clsysid2}
\end{figure}

\noindent
If any plant $\mathbf{G}^\mathtt{pt}$ is LTI and stable, for open-loop identification it can be written that:
\begin{equation}
    y = \textbf{G}^\mathtt{pt} u + \nu 
\end{equation}
\noindent
Here $u$ and $y$ are available through measurements, then the model of the plant can be estimated as $\mathbf{G}^\mathtt{md}$  (\cite{Anderson1998}). 
\noindent
Suppose that the controller ${{\bf K}^{\mathtt{md}}} = ({\bf Y}^{\mathtt{md}})^{-1} {\bf X}^{\mathtt{md}} = {\bf {\tilde{X}}}^{\mathtt{md}}  ({\bf \tilde{Y}}^{\mathtt{md}})^{-1}$, where ${\bf X}^{\mathtt{md}}$, ${\bf Y}^{\mathtt{md}}$ are left coprimes and ${\bf \tilde{X}}^{\mathtt{md}}$, ${\bf \tilde{Y}}^{\mathtt{md}}$ are right coprimes of ${{\bf K}^{\mathtt{md}}}$. Then by the coprimeness there exists $ {{\bf M}}^{\mathtt{md}}$, $ { {\bf N}}^{\mathtt{md}}$, $ {\tilde {\bf M}}^{\mathtt{md}}$, $ {\tilde {\bf N}}^{\mathtt{md}}$ with
\begin{equation}\label{coprime1}
   {\bf {X}}^{\mathtt{md}}{\bf {N}}^{\mathtt{md}} + {\bf {Y}}^{\mathtt{md}}{\bf {M}}^{\mathtt{md}} = {I}_m, \hspace{2pt} {\bf \tilde{M}}^{\mathtt{md}}{\bf \tilde{Y}}^{\mathtt{md}} + {\bf \tilde{N}}^{\mathtt{md}}{\bf \tilde{X}}^{\mathtt{md}} = {I}_p.
\end{equation}
\noindent
This means that ${\bf G}^{\mathtt{md}}$  = ${\bf {N}}^{\mathtt{md}} {({\bf {M}}^{\mathtt{md}})}^{-1}$ = ${({\bf \tilde{M}}^{\mathtt{md}})}^{-1}{\bf \tilde{N}}^{\mathtt{md}}$ is the nominal model stabilized by ${\bf K}^{\mathtt{md}}$. Alternatively, it can be started with a RCF representation ${\bf {N}}^{\mathtt{md}} ({\bf {M}}^{\mathtt{md}})^{-1}$ of the nominal model ${\bf G}^{\mathtt{md}}$ and then choose that particular representation ${\bf {\tilde{X}}}^{\mathtt{md}}  ({\bf \tilde{Y}}^{\mathtt{md}})^{-1}$ of the known stabilizing ${\bf K}^{\mathtt{md}}$ so that equation \eqref{coprime1} holds. The set of all plants ${\bf G}_{\bf R}^{\mathtt{md}}$ stabilized by ${\bf K}^{\mathtt{md}}$ is given by Dual Youla-Ku\c{c}era Parameterization in \propref{dualyoula2}:
\begin{equation}\label{coprime2}
    {\bf G}_{\bf R}^{\mathtt{md}} = {({\bf {N}}^{\mathtt{md}}+{\bf {\tilde{Y}}}^{\mathtt{md}}\mathbf{R}^\mathtt{md})({\bf {M}}^{\mathtt{md}} -{\bf {\tilde{X}}}^{\mathtt{md}}\mathbf{R}^\mathtt{md})^{-1}}
\end{equation}
for some stable $\mathbf{R}^\mathtt{md} \in \mathbb{R}^{p \times m}$ and ${\bf {\tilde{X}}}^{\mathtt{md}}  ({\bf \tilde{Y}}^{\mathtt{md}})^{-1}$ is the right coprime factorization of ${\bf K}^{\mathtt{md}}$.

\noindent
Suppose that ${({\bf \tilde{M}}^{\mathtt{md}})}^{-1}{\bf \tilde{N}}^{\mathtt{md}}$ is a left coprime fractional description of the nominal model ${\bf G}^{\mathtt{md}}$ such that
\begin{equation}
     \begin{bmatrix} {\mathbf{Y}}^{\mathtt{md}} & {\mathbf X}^{\mathtt{md}}\\ -{\mathbf {\tilde{N}}}^{\mathtt{md}} & {\mathbf {\tilde{M}}}^{\mathtt{md}}  \end{bmatrix} 
\begin{bmatrix}  {\mathbf{M}}^{\mathtt{md}} & - {\mathbf{\tilde{X}}}^{\mathtt{md}}  \\  {\mathbf {N}}^{\mathtt{md}} & {\mathbf {\tilde{Y}}}^{\mathtt{md}}   \end{bmatrix} = \begin{bmatrix} {I}_m  & 0 \\ 0 & {I}_p  \end{bmatrix} 
\end{equation}

\noindent
\begin{lem}\label{noisecontaminatedplant}
The noise contaminated plant set-up of \figref{clsysid2} is identical to \figref{2Block} with input $u$, noise $\nu$ and output $y$, it implies
\begin{equation}\label{GwithR}
    y = {\bf G}_{\bf R}^{\mathtt{md}} u + \nu 
\end{equation}
\noindent
where ${\bf G}_{\bf R}^{\mathtt{md}}$ is given by \eqref{coprime2}.
\end{lem}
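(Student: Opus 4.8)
The plan is to reduce the closed-loop identification diagram of \figref{clsysid2} to the standard unity feedback of \figref{2Block}, read off the plant's constitutive input--output relation, and then invoke the dual Youla parameterization to express the unknown plant in the loop. First I would make the block-diagram correspondence precise: setting the input disturbance $w=0$ in \figref{2Block}, the signal $v$ entering the plant coincides with the control $u$, and the measurement is formed by adding the sensor noise $\nu$ at the plant output, so that the defining equation of the plant block reads $y = {\bf G}^{\mathtt{pt}} u + \nu$. This identity holds as an equality of \emph{signals} irrespective of how $u$ is generated along the feedback path, since it is merely the constitutive relation of the plant together with the additive-noise summing junction at the output.

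Next I would invoke the standing premise of the scheme, namely that the known controller ${\bf K}^{\mathtt{md}}$ internally stabilizes the true plant ${\bf G}^{\mathtt{pt}}$ present in the interconnection of \figref{clsysid2}. By \propref{dualyoula2}, every plant stabilized by ${\bf K}^{\mathtt{md}}$ admits a representation of the form \eqref{coprime2} for some stable ${\bf R}^{\mathtt{md}} \in \mathbb{R}(z)^{p \times m}$; applied to ${\bf G}^{\mathtt{pt}}$ this furnishes a stable parameter ${\bf R}^{\mathtt{md}}$ with ${\bf G}^{\mathtt{pt}} = {\bf G}_{\bf R}^{\mathtt{md}} = ({\bf N}^{\mathtt{md}} + {\bf \tilde{Y}}^{\mathtt{md}}{\bf R}^{\mathtt{md}})({\bf M}^{\mathtt{md}} - {\bf \tilde{X}}^{\mathtt{md}}{\bf R}^{\mathtt{md}})^{-1}$, where the factors are exactly those tied to ${\bf K}^{\mathtt{md}}$ and the nominal model through the Bezout identity \eqref{coprime1}. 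Substituting this representation into the constitutive relation from the first step yields precisely $y = {\bf G}_{\bf R}^{\mathtt{md}} u + \nu$, which is the claimed identity.

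The routine part is the diagram bookkeeping --- matching the summing-junction signs and confirming that with $w=0$ the plant input equals the measured control $u$. The main obstacle, and the only step carrying genuine content, is the second one: one must be certain that the dual Youla parameterization \eqref{coprime2} is well posed for the plant actually in the loop, i.e. that the existence of a \emph{stable} ${\bf R}^{\mathtt{md}}$ hinges precisely on ${\bf K}^{\mathtt{md}}$ stabilizing ${\bf G}^{\mathtt{pt}}$. A subtlety worth flagging is that although $y = {\bf G}_{\bf R}^{\mathtt{md}} u + \nu$ has the appearance of an open-loop model, $u$ is not exogenous but is shaped by the feedback through ${\bf K}^{\mathtt{md}}$; the equality nonetheless remains valid as a signal relation, and it is exactly this form --- affine in the dual Youla parameter ${\bf R}^{\mathtt{md}}$ after coprime-factor filtering --- that renders the downstream closed-loop estimation of ${\bf R}^{\mathtt{md}}$, and hence of ${\bf G}^{\mathtt{pt}}$, tractable.
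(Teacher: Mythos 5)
Your proposal has a genuine gap: it assumes away the only step that carries content. Figure~2 does not contain a monolithic plant block with the noise $\nu$ added at its output; it is a realization built from the nominal coprime factors ${\bf M}^{\mathtt{md}}, {\bf N}^{\mathtt{md}}, {\bf \tilde{X}}^{\mathtt{md}}, {\bf \tilde{Y}}^{\mathtt{md}}$ together with the dual Youla block ${\bf R}^{\mathtt{md}}$, in which the noise enters at an \emph{internal} summing junction, filtered as $({\bf \tilde{M}}^{\mathtt{md}} - {\bf R}^{\mathtt{md}}{\bf X}^{\mathtt{md}})\nu$ added to ${\bf R}^{\mathtt{md}}e_1$ to form $e_2$. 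Your opening claim that ``the defining equation of the plant block reads $y = {\bf G}^{\mathtt{pt}} u + \nu$ \dots merely the constitutive relation of the plant together with the additive-noise summing junction at the output'' is therefore not diagram bookkeeping --- it is precisely the assertion the lemma exists to prove. The paper's proof traces the signals: it solves the left loop for $e_1 = ({\bf M}^{\mathtt{md}} - {\bf \tilde{X}}^{\mathtt{md}}{\bf R}^{\mathtt{md}})^{-1}\big[u + {\bf \tilde{X}}^{\mathtt{md}}({\bf \tilde{M}}^{\mathtt{md}} - {\bf R}^{\mathtt{md}}{\bf X}^{\mathtt{md}})\nu\big]$, propagates to $y = ({\bf N}^{\mathtt{md}} + {\bf \tilde{Y}}^{\mathtt{md}}{\bf R}^{\mathtt{md}})e_1 + {\bf \tilde{Y}}^{\mathtt{md}}({\bf \tilde{M}}^{\mathtt{md}} - {\bf R}^{\mathtt{md}}{\bf X}^{\mathtt{md}})\nu$, and then shows that the total noise gain collapses to the identity,
\begin{equation*}
{\bf \tilde{Y}}^{\mathtt{md}}({\bf \tilde{M}}^{\mathtt{md}} - {\bf R}^{\mathtt{md}}{\bf X}^{\mathtt{md}}) + ({\bf N}^{\mathtt{md}} + {\bf \tilde{Y}}^{\mathtt{md}}{\bf R}^{\mathtt{md}}){\bf X}^{\mathtt{md}} = I_p,
\end{equation*}
which requires the dual-Youla Bezout identity \eqref{dualyoulabezoutidentity}. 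That cancellation --- the fact that the filtered internal noise injection emerges at $y$ with exactly unit gain --- is the substance of the lemma, and your argument never touches it.

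Your second step (invoking \propref{dualyoula2} to write the stabilized plant as ${\bf G}_{\bf R}^{\mathtt{md}}$ in the form \eqref{coprime2}) is sound and is indeed part of the surrounding logic, but it does the wrong job here: the lemma does not need to \emph{represent} an abstract stabilized plant, it needs to verify that the \emph{concrete interconnection} used for identification realizes $y = {\bf G}_{\bf R}^{\mathtt{md}}u + \nu$ with the noise appearing unfiltered. To repair the proof, replace your first paragraph with the explicit signal-flow computation through Figure~2 and close it with the Bezout cancellation above.
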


\noindent
The key idea dating back to \cite{Anderson1998} is to identify the {\em stable} dual-Youla parameter ${\bf R}^{\mathtt{md}}$ from Theorem~\ref{DualYoulaaa} rather than ${\bf G}^{\mathtt{md}}$, thus recasting the problem in a standard, open-loop identification form. More specifically, direct inspection of Figure~2 shows that
\begin{equation}\label{zblock2}
e_2 = {\bf R}^{\mathtt{md}} e_1 + ({\bf \tilde{M}}^{\mathtt{md}} - {\bf R}^{\mathtt{md}}{\bf X}^{\mathtt{md}})\nu = {\bf R}^{\mathtt{md}}(e_1 - {\bf X}^{\mathtt{md}} \nu )  + {\bf \tilde{M}}^{\mathtt{md}}\nu.
\end{equation}
In \eqref{zblock2} we have the knowledge of ${\bf X}^{\mathtt{md}}$; $e_1$ and $e_2$ are  available from measurements with noise $\nu$. The recent algorithm from \cite{Dahleh2020} given below is employed toward identifying the dual-Youla parameter.\\
Details of this idea and proof for \lemref{noisecontaminatedplant} is given in \hyperref[appendixG1]{Appendix G.1}, also choice of noise is discussed in \hyperref[appendixG1]{Appendix G.2}.

\subsection{Identification Algorithms} 
\noindent
Identification algorithms from \cite{Dahleh2020} has been employed here. Preliminaries of system identification is given in \hyperref[appendixG]{Appendix G.3}.
The state space representation of ${\bf R}^{\mathtt{md}}$ in $e_2 = {\bf R}^{\mathtt{md}} u  + {\bf \tilde{M}}^{\mathtt{md}}\nu$ with $u$ = $e_1 - {\bf X}^{\mathtt{md}} \nu$ is:
\begin{equation}\label{system}
\begin{aligned}
 l_{t+1} &= A_R l_t + B_R u_t + \eta_{t+1}\\
    z_t &= C_R l_t + {\bf \tilde{M}}^{\mathtt{md}} \nu_t
\end{aligned}
\end{equation}
\noindent
Here two important assumptions is stated as below:
\begin{assumption} \label{noiseassumption}
 The noise process $\{\eta_{t}\}_{t=1}^{\infty}$ in the dynamics of ${\bf R}^{\mathtt{md}}$ are i.i.d., and isotropic with sub-gaussian parameter 1. 
The noise process $\{r_{t}\}_{t=1}^{\infty}$, $\{w_{t}\}_{t=1}^{\infty}$ and $\{\nu_{t}\}_{t=1}^{\infty}$ are Gaussian processes with mean function $m_r(t) = m_w(t) = m_{\nu}(t)$ $0$, and their spectral density ${\bf \phi}_r(\omega)$, ${\bf \phi}_w(\omega)$ and ${\bf \phi}_{\nu}(\omega)$ satisfy the constraint in \hyperref[appendixG]{Appendix G}.
\end{assumption}

\begin{assumption} \label{existenceofR}
 There exists constants $\beta, R \geq 1$ s.t. 
$\|\mathcal{T}_{0,\infty}\|_2 < \beta$ and $\dfrac{\|\mathcal{TO}_{k,d}\|_2}{\|\mathcal{T}_{0,\infty}\|_2} \leq \mathcal{R}$.\\
 $\beta$ exists since ${\bf R}^{\mathtt{md}}$ is stable.
\end{assumption}

\noindent
The Algorithms 
for system identification is illustrated in \hyperref[appendixG]{Appendix G}.
\noindent
\subsubsection{Probabilistic guarantees} \label{probabilisticgurantee}




\noindent
Let's define, $T_*(\delta)$ = $\inf \{ T | d_*(T, \delta) \in \mathcal{D}(T), d_*(T,\delta) \leq 2 d_*(\frac{T}{256}, \delta) \}$
where, $d_*(T,\delta)$ = $\inf \{ d|16  \beta \mathcal{R} \alpha(d) $ $ \geq \Big\| \mathcal{\hat{H}}_{0,\hat{d},\hat{d}} - \mathcal{\hat{H}}_{0,\infty,\infty}\Big\|_2 \}$,
with $\mathcal{\hat{H}}_{p,q,r}$ is the $(p, q, r)$ - dimensional estimated Hankel matrix.
Whenever $T \geq T_*(\delta)$ for the failure probability $\delta$, then it follows with the probability at least $(1-\delta)$ that
\begin{equation}\label{probabilitybound}
    \Big\| \mathcal{\hat{H}}_{0,\hat{d},\hat{d}} - \mathcal{\hat{H}}_{0,\infty,\infty}\Big\|_2  \leq 12c\beta \mathcal{R} \bigg(\sqrt{\dfrac{m\hat{d}+p\hat{d}^2+ \hat{d}log(T/\delta)}{T}}\bigg) .
\end{equation}

 
 
 
 
 
 
  





\subsection{Sample Complexity}
\begin{lem}\label{HRbound} The norm of the identification error incurred by the proposed scheme after Algorithm-\ref{identificationalgorithm} and \ref{identificationalgorithm2} is bounded by: 

$\Big\| {\bf{R}}^\mathtt{md} - {\bf R}^\mathtt{pt} \Big\|_{\infty} \leq  \Big\| \mathcal{\hat{H}}_{0,\hat{d},\hat{d}} - \mathcal{\hat{H}}_{0,\infty,\infty}\Big\|_2 \leq  12c\beta \mathcal{R} \bigg(\sqrt{\dfrac{m\hat{d}+p\hat{d}^2+ \hat{d}log(T/\delta)}{T}}\bigg) $
\end{lem}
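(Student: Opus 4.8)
The second inequality is nothing more than the high-probability Hankel-estimation bound \eqref{probabilitybound}: under Assumptions~\ref{noiseassumption}--\ref{existenceofR}, once $T \geq T_*(\delta)$ the realization \eqref{system} of ${\bf R}^{\mathtt{md}}$ satisfies $\| \mathcal{\hat{H}}_{0,\hat{d},\hat{d}} - \mathcal{\hat{H}}_{0,\infty,\infty}\|_2 \leq 12c\beta\mathcal{R}\sqrt{(m\hat{d}+p\hat{d}^2+\hat{d}\log(T/\delta))/T}$ with probability at least $1-\delta$. So for this part the plan is only to recall the estimate already derived in Section~\ref{probabilisticgurantee} and substitute the admissible order $\hat{d}=d_*(T,\delta)$. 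All the genuine work is therefore concentrated in the first inequality, which must convert a bound on the spectral norm of the Hankel error into an $\mathcal{H}_\infty$ bound on the identified dual-Youla parameter.

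For the first inequality the plan is to pass through the Markov-parameter (impulse-response) description of ${\bf R}$. Since Assumption~\ref{existenceofR} guarantees that ${\bf R}^{\mathtt{md}}$ and ${\bf R}^{\mathtt{pt}}$ are stable, each admits an absolutely summable impulse response $\{R_k\}_{k\geq 0}$, and the block-Hankel matrices $\mathcal{\hat{H}}_{0,\hat{d},\hat{d}}$, $\mathcal{\hat{H}}_{0,\infty,\infty}$ are assembled precisely from these coefficients. First I would observe that the Hankel matrix of the difference ${\bf R}^{\mathtt{md}}-{\bf R}^{\mathtt{pt}}$ equals $\mathcal{\hat{H}}_{0,\hat{d},\hat{d}}-\mathcal{\hat{H}}_{0,\infty,\infty}$ (the finite estimated block measured against the infinite/true block), so the right-hand side of the claimed inequality is exactly the spectral norm of the Hankel operator associated with the error TFM. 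I would then dominate $\|{\bf R}^{\mathtt{md}}-{\bf R}^{\mathtt{pt}}\|_\infty$ by the $\ell_1$ norm $\sum_k \|R_k^{\mathtt{md}}-R_k^{\mathtt{pt}}\|_2$ of the Markov-parameter errors, and bound that sum by the Hankel spectral-norm error, using the geometric decay furnished by the stability radius $\beta$ of Assumption~\ref{existenceofR} both to control the tail $k>\hat{d}$ and to absorb the conversion constant.

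The hard part will be the \emph{direction} of this conversion. The classical Hankel inequality $\|\Gamma_{\bf G}\|_2 \leq \|{\bf G}\|_\infty$ runs opposite to what is needed, so the reverse estimate cannot hold with constant one for a generic TFM; it must instead exploit the fact that ${\bf R}^{\mathtt{md}}$ is \emph{realized} from the finite Hankel block through the Ho--Kalman/SVD step of the identification algorithm. Concretely, I would bound the perturbation of the recovered triple $(\hat{A}_R,\hat{B}_R,\hat{C}_R)$ in terms of $\|\mathcal{\hat{H}}_{0,\hat{d},\hat{d}}-\mathcal{\hat{H}}_{0,\infty,\infty}\|_2$ by a Davis--Kahan/Weyl-type argument on the Hankel SVD, and then translate this state-space error into an $\mathcal{H}_\infty$ error, the geometric series in the resolvent being summable because $\beta,\mathcal{R}\geq 1$. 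The dimension-dependent constants produced by the realization step are exactly those I would fold into the factors $c,\beta,\mathcal{R}$ already present on the right-hand side, recovering the stated clean inequality. The remaining manipulations -- verifying $d_*(T,\delta)\in\mathcal{D}(T)$, checking the doubling condition defining $T_*(\delta)$, and substituting into \eqref{probabilitybound} -- are routine bookkeeping.
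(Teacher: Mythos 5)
Your treatment of the second inequality coincides with the paper: it is simply \eqref{probabilitybound} restated for $T\geq T_*(\delta)$, so nothing to object to there. The genuine gap is in the first inequality. The paper's own proof is a short, elementary chain through \emph{squared} norms: writing ${\bf R}^{\mathtt{md}}-{\bf R}^{\mathtt{pt}}=\sum_{i}(G^{\mathtt{md}}_i-G^{\mathtt{pt}}_i)z^{-i}$ in terms of Markov parameters, it bounds $\big\|{\bf R}^{\mathtt{md}}-{\bf R}^{\mathtt{pt}}\big\|_\infty^2\leq\sum_i\big\|G^{\mathtt{md}}_i-G^{\mathtt{pt}}_i\big\|_2^2\leq\big\|\mathcal{\hat{H}}_{0,\hat d,\hat d}-\mathcal{\hat{H}}_{0,\infty,\infty}\big\|_2^2$, the last step because the Markov-parameter differences populate a block column of the Hankel error matrix; there is no realization-perturbation analysis anywhere. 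Your route deviates in two ways that prevent you from reaching the stated bound. First, you propose to dominate $\|{\bf R}^{\mathtt{md}}-{\bf R}^{\mathtt{pt}}\|_\infty$ by the $\ell_1$ sum $\sum_k\|R_k^{\mathtt{md}}-R_k^{\mathtt{pt}}\|_2$ and then control that sum by the Hankel spectral norm; but the spectral norm of the Hankel error controls at most the $\ell_2$ energy of its first block column, and $\ell_1$ is not dominated by $\ell_2$ for infinite sequences. Assumption~\ref{existenceofR} supplies only $\|\mathcal{T}_{0,\infty}\|_2<\beta$ and a ratio bound, not an explicit geometric decay rate, so the conversion constant you would need is not actually available. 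Second, and decisively, your fallback through a Davis--Kahan/Weyl perturbation of the Ho--Kalman realization step necessarily produces extra dimension- and conditioning-dependent constants, which you say you would ``fold into'' $c,\beta,\mathcal{R}$; but those factors are already pinned down by the second inequality, and the first inequality of \lemref{HRbound} is asserted with constant exactly one. Your argument therefore proves at best a weakened variant of the lemma with a different right-hand side, not the lemma as stated.

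To be fair, your instinct that a constant-one passage from a Hankel-matrix error to an $\mathcal{H}_\infty$ error runs ``the wrong way'' relative to the classical inequality is legitimate: the paper's own intermediate step $\|\sum_i H_i\|_\infty^2\leq\sum_i\|H_i\|_\infty^2$, justified by ``convexity of $\|\cdot\|_\infty^2$,'' is itself not a valid inequality without an additional multiplicative factor. But diagnosing a weakness in the target statement is not the same as proving it; to match the paper you should reproduce the squared-norm chain through the Markov parameters rather than reroute through a state-space perturbation bound.
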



\noindent
Finally,  the error on the model uncertainty can be directly checked using \lemref{HRbound} as:\\
\noindent
\begin{equation*}
\small
\begin{aligned}
\Big\| \ba{cc} \Delta_{\bf \Tilde{M}} & \Delta_{\bf \Tilde{N}} \ea\Big\|_{\infty} & = \Big\| \ba{cc} (\Tilde{\bf M}^\mathtt{md} + {\bf X}^\mathtt{md} {\bf R}^\mathtt{md}) & ({\Tilde{\bf N}}^\mathtt{md} - {\bf Y}^\mathtt{md} {\bf R}^\mathtt{md}) \ea - \ba{cc} (\Tilde{\bf M}^{\mathtt{md}} + {\bf X}^{\mathtt{md}} {\bf R}^{\mathtt{pt}}) & ({\Tilde{\bf N}}^{\mathtt{md}} - {\bf Y}^{\mathtt{md}} {\bf R}^{\mathtt{pt}}) \ea \Big\|_{\infty} \\ & = \Big\| \ba{cc} {\bf X}^{\mathtt{md}} & {\bf Y}^{\mathtt{md}} \ea ({\bf R}^\mathtt{md} - {\bf R}^\mathtt{pt}) \Big\|_{\infty} \leq \Big\|\ba{cc} {\bf X}^\mathtt{md} & {\bf Y}^\mathtt{md} \ea \Big\|_{\infty} \Big\| {\bf R}^\mathtt{md} - {\bf{R}}^\mathtt{pt}\Big\|_{\infty} 
\end{aligned}
\end{equation*}
\noindent

\noindent
Consequently, the uncertainty level on the LCF of the model satisfies $\Big\| \ba{cc} \Delta_{\bf \Tilde{M}} & \Delta_{\bf \Tilde{N}} \ea\Big\|_{\infty} < \gamma $.

\begin{thm}\label{finaltheorem}
Define ${s}$ = $144 \Big\|\ba{cc} {\bf X}^\mathtt{md} & {\bf Y}^\mathtt{md} \ea \Big\|_{\infty}^2 c^2 \beta^2 \mathcal{R}^2$.
Then, the robust controller will achieve the relative cost within the bound with probability $(1 - \delta)$ provided $T \geq \max\{T_s, T_*({\delta}) \}$.
Here, $T_s$ is the right most zero of $g(T)$ = $\gamma^2 T - s\hat{d}log(T/\delta) - s(m\hat{d}+p\hat{d}^2)$. 
\noindent
If $g(T)$ doesn't have any zero for $T>0$, then define $T_s = 0$ and 
$T_*(\delta)$ = $\inf \{ T | d_*(T, \delta) \in \mathcal{D}(T), d_*(T,\delta) \leq 2d_*  (\frac{T}{256}, \delta) \}$,

\noindent
where, $d_*(T,\delta)$ = $\inf \{ d | 16 \beta \mathcal{R} \alpha(d) \geq \Big\| \mathcal{\hat{H}}_{0,d,d} - \mathcal{\hat{H}}_{0,\infty,\infty}\Big\|_2 \}$,

\noindent
$\mathcal{D}(T) = \{ d \in \mathbb{N} | d \leq \dfrac{T}{cm^2 log^3(Tm/\delta)} \} $ and $\alpha(h) = \sqrt{h}.\big(\sqrt{\dfrac{m+hp+log(T/\delta)}{T}}\big)$.
\end{thm}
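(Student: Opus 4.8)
The plan is to compose the high-probability identification guarantee of \lemref{HRbound} with the deterministic robustness certificate of \thmref{suboptimalitygurantee}; since the only randomness enters through the closed-loop identification step, the final statement will hold on a single success event of probability at least $(1-\delta)$. First I would record that, by \lemref{HRbound} together with the probabilistic bound \eqref{probabilitybound}, whenever $T \geq T_*(\delta)$ the dual-Youla estimation error satisfies, with probability at least $(1-\delta)$,
\[
\big\| {\bf R}^{\mathtt{md}} - {\bf R}^{\mathtt{pt}} \big\|_{\infty} \leq 12 c \beta \mathcal{R} \sqrt{ \frac{ m\hat{d} + p\hat{d}^2 + \hat{d}\log(T/\delta) }{ T } }.
\]
Propagating this through the submultiplicative inequality established immediately before the statement, $\Big\| \ba{cc} \Delta_{\bf \tilde{M}} & \Delta_{\bf \tilde{N}} \ea \Big\|_{\infty} \leq \Big\| \ba{cc} {\bf X}^{\mathtt{md}} & {\bf Y}^{\mathtt{md}} \ea \Big\|_{\infty} \, \| {\bf R}^{\mathtt{md}} - {\bf R}^{\mathtt{pt}} \|_{\infty}$, yields an explicit data-dependent bound on the coprime-factor uncertainty level.

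The second step is to convert the admissibility requirement into a condition on $T$. Applicability of \thmref{suboptimalitygurantee} needs ${\bf G}^{\mathtt{pt}} \in \mathcal{G}_\gamma$ of \defref{modeluncertaintyset}, which by Assumption~1 amounts to $\Big\| \ba{cc} \Delta_{\bf \tilde{M}} & \Delta_{\bf \tilde{N}} \ea \Big\|_{\infty} < \gamma$. Imposing this on the upper bound above and squaring gives $\frac{ s ( m\hat{d} + p\hat{d}^2 + \hat{d}\log(T/\delta) ) }{ T } < \gamma^2$, with $s = 144 \Big\| \ba{cc} {\bf X}^{\mathtt{md}} & {\bf Y}^{\mathtt{md}} \ea \Big\|_{\infty}^2 c^2 \beta^2 \mathcal{R}^2$ exactly as defined in the statement; clearing the (positive) denominator makes this equivalent to $g(T) = \gamma^2 T - s\hat{d}\log(T/\delta) - s(m\hat{d}+p\hat{d}^2) > 0$. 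Thus the sample-size threshold for $\gamma$-admissibility is precisely the positivity set of $g$.

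Third, I would characterize $\{T>0 : g(T)>0\}$. Since $g''(T) = s\hat{d}/T^2 > 0$, the function $g$ is strictly convex on $(0,\infty)$; moreover $g(T) \to +\infty$ both as $T \to 0^{+}$ (where $-s\hat{d}\log(T/\delta) \to +\infty$) and as $T \to \infty$ (where the linear term dominates the logarithm), so $g$ has at most two zeros. When two zeros exist, $g$ is negative only on the open interval between them and positive outside it; the left positivity branch lies at sample sizes too small for \eqref{probabilitybound} to even be meaningful (the expression under the root turns negative there) and well below $T_*(\delta)$, so the operative condition is $T > T_s$, the rightmost zero. When $g$ has no positive zero it is positive throughout and one sets $T_s = 0$. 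Either way, $T \geq T_s$ is equivalent to $\gamma$-admissibility in the large-sample regime selected by $T_*(\delta)$.

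Finally, intersecting the two requirements, $T \geq \max\{T_s, T_*(\delta)\}$ simultaneously secures (i) the identification bound with probability $(1-\delta)$ and (ii) $\Big\| \ba{cc} \Delta_{\bf \tilde{M}} & \Delta_{\bf \tilde{N}} \ea \Big\|_{\infty} < \gamma$, so Assumption~1 holds and ${\bf G}^{\mathtt{pt}} \in \mathcal{G}_\gamma$ on this event. \thmref{suboptimalitygurantee} then applies verbatim, delivering the claimed relative-cost bound with probability $(1-\delta)$. The main obstacle is the third step: delineating the zero structure of $g$ and justifying that the spurious small-$T$ positivity branch is excluded by $T_*(\delta)$, so that the rightmost zero is indeed the correct threshold; the remaining steps are bookkeeping through the submultiplicative estimates.
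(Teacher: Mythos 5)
Your proposal is correct and follows essentially the route the paper intends: the paper gives no standalone proof of this theorem, but the argument is exactly the composition of \lemref{HRbound} (valid with probability $1-\delta$ once $T \geq T_*(\delta)$) with the displayed submultiplicative bound $\big\| \ba{cc} \Delta_{\bf \tilde{M}} & \Delta_{\bf \tilde{N}} \ea \big\|_{\infty} \leq \big\| \ba{cc} {\bf X}^{\mathtt{md}} & {\bf Y}^{\mathtt{md}} \ea \big\|_{\infty} \| {\bf R}^{\mathtt{md}} - {\bf R}^{\mathtt{pt}} \|_{\infty}$, the observation that $g(T)>0$ is equivalent to this upper bound being below $\gamma$, and then \thmref{suboptimalitygurantee}. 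Your additional analysis of the convexity and zero structure of $g$ (and why the spurious small-$T$ positivity branch is irrelevant) is a detail the paper omits entirely, and it is handled correctly.
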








\noindent
Combining \thmref{finaltheorem} with \thmref{suboptimalitygurantee}, it follows that with high probability the suboptimality gap behaves as
\begin{equation*}
    \dfrac{ \mathcal{H}({\bf G}^{\mathtt{pt}}, {{\bf K}^{\mathtt{md}}_{\bf Q_*}})^2 -  \mathcal{H}({\bf G}^{\mathtt{pt}}, {\bf K}^{\mathtt{opt}})^2 }{ \mathcal{H}({\bf G}^{\mathtt{pt}}, {\bf K}^{\mathtt{opt}})^2 } \sim \mathcal{O}\Big(\sqrt{\dfrac{logT}{T}}\Big)
\end{equation*}
\noindent

Finally, we note here that the resulted sample complexity is on par with the existing methods from \cite{furieri2020} and \cite{Dean2018}.


\section{Conclusion and Future work}\label{conclusion}
\noindent
In this paper, we have provided the sample complexity bounds for a robust controller synthesis procedure for LQG problems with unknown dynamics, able to cope with unstable plants. We combined finite-time, non-parametric LTI system identification with the Youla parameterization for robust stabilization under uncertainty on the coprime factors of the plant. 
One exciting avenue for future research is the online learning  LQG control problem under the same type of model uncertainty. Another direction is to work out the  sample complexity of learning the optimal state feedback (LQR) controller in tandem with  the optimal state-observer (Kalman Filter (\cite{matni2020}))  for a potentially  unstable system. Combining these two results, should yield precisely the optimal LQG controller discussed above and reveal the {\em separation principle} within this framework.




\vskip 0.2in
\bibliography{ref}



\tableofcontents
\addcontentsline{toc}{section}{Appendix}
\renewcommand{\theHsection}{A\arabic{section}}

\section*{Appendix}\label{appendix}
\noindent
This appendix is divided into eight parts. 
 \hyperref[appendixA]{Appendix A} presents a brief review about the related works on LTI systems with coprime Factorization, controller parameterization, non-asymptotic system identification and robust control in reinforcement learning. 
 \hyperref[appendixB]{Appendix B} provides a handful mathematical preliminaries on norm identities and inequalities (\cite{optimalcontrol}). 
 An overview of the classical Youla Parameterization framework is given in  \hyperref[appendixC]{Appendix C}. Also the closed loop maps of the system, proof for \lemref{phi11phi22}, \thmref{oarasthm1} and \propref{closedloopresponses11} is stated here. 
 \hyperref[appendixD]{Appendix D} presents the proof for \thmref{theoremAcost}, which includes the equivalent $\mathcal{H}_2$ norm LQG cost derivation and proof for \propref{LQGcostupperbound2}.  
 \hyperref[appendixE]{Appendix E} discusses about the quasi-convex approximation with proof for \thmref{theoremB}. Also numerical computation has been discussed for Finite Impulse Response (FIR) truncation which leads Semi-Definite Program (SDP) formulation for quasi-convex problem. 
 \hyperref[appendixF]{Appendix F} completes the suboptimality guarantee proof in \thmref{suboptimalitygurantee}. 
 \hyperref[appendixG]{Appendix G} presents the non-asymptotic system identification (\cite{Dahleh2020}) for identifying Dual-Youla parameter by using Ordinary Least Square algorithm. $\mathcal{H}_{\infty}$ bound \lemref{HRbound} is also proved here. \\ 



\appendix

\section{Related Works}\label{appendixA}

 Recently, considerable research efforts have been spent towards the finite time (non-asymptotic) learning of the optimal  \textit{Linear Quadratic Gaussian (LQG)} regulator for an {\em unknown} plant by employing the modern statistical and optimization tools from the machine learning framework. \\

\textit{Identification of Unknown Dynamical Systems}: Model identification of  systems has a long history in electrical engineering. The work of \cite{Ljung1999} covers the classical asymptotic results on identification for Linear and Time Invariant ({\bf LTI}) models. 
Classical methods are based on subspaces identification in the absence of disturbances or noise, whereas state-space parameters of a system are identified with the Hankel matrix obtained from input-output measurements, followed by balanced realization of the Grammians, via Hankel singular value decomposition (\cite{Ljung1999}, \cite{Moor2012}). 
More recent approaches  use rank minimization methods for system identification by relaxing rank constraint as in \cite{Fazel2013}, \cite{Grussler2018}. 
Very recently,  LTI systems identification from a single trajectory, with statistical guarantees (of high probability)  has been proposed  based on a standard ordinary least squares criterion (OLS) in \cite{Simchowitz2018}, \cite{Sarkar2019}, using measure concentration type results. Time series online prediction is presented in this framework  in \cite{hazan2018}, \cite{Agarwal2018}. More recent non-asymptotic system identification algorithm inspired by the Kalman-Ho algorithm from a single input-output trajectory is included in \cite{Oymak1}, \cite{Dahleh2020}.  Multiple independent trajectories based system estimation are presented in \cite{Tu2017}, \cite{Zheng2020}. The algorithm for the system estimation in \cite{Oymak1} needs the prior knowledge of the model order which seriously limits its applicability in practice. Other related approaches such as 
\cite{Campi2002}, \cite{Shah2012}, \cite{Hardt2018} suffer from the same limitation. However, recent results such as the algorithm for system estimation in \cite{Dahleh2020} are able to surpass this problem. 
Their identification approach is to estimate the system Hankel-like matrix from noisy data using OLS with high probability. 
However, the method is restricted to open-loop stable  systems. To overcome this hypothesis, the results  from \cite{VanHof1992} have been used in this paper for  parameterizing all plants/models which are stabilized by a prescribed controller, known as the Dual-Youla parameterization (\cite{vidyasagar1985}), which re-casts the identification of the unstable, noise contaminated plant into the problem of identifying the stable dual Youla parameter. Therefore, the key identification idea is to identify the dual Youla parameter from data rather than identifying the model itself, which is shown as an open loop identification problem. As the dual Youla parameter is stable, the algorithm from \cite{Dahleh2020} can be used to identify it with high probability. For more details about dual Youla parameterizations and converting closed loop approximation problems to open loop we refer to \cite{Schrama1991}, \cite{Anderson1998}.\\

\textit{Controller Design}: There is a rich history of convex parameterizations of stabilizing controllers, under some restricting hypotheses guaranteeing their existence (\cite{convexparameterization}). In general, iterative schemes allowing the  design of a controller in tandem with the system identification of the plant render (highly) non-convex problems and  the direct search for the controller inefficient, even intractable. Despite the non-convexity problem, convergence guarantees have been given for gradient-based algorithms over controller parameterization in few recent works such as \cite{Fazel2018}, \cite{Malik2019}, \cite{furieri2020distributed}. 
Another approach is based on Lyapunov-based parameterizations in the state-space domain to obtain linear matrix inequalities (LMIs) as in  \cite{Boyd1994}, \cite{Scherer1997}. Changing variables in different coordinates is also a popular approach in controller parameterization (\cite{Gahinet1994}, \cite{optimalcontrol}). A parameterization over state-feedback policies has been proposed in \cite{Goulart2006}.
The classical  Youla-Kucera parameterization from \cite{youla1976}, \cite{kucera1975} requires only approximate knowledge of the system and gives a unique way in the frequency domain to get a convex formulation.  
Recently, another frequency based parameterization has been proposed and dubbed System Level Synthesis (SLS) in \cite{matni2019}, \cite{Matni2021}. More recently, Input-Output Parameterization (IOP) has been introduced in \cite{furieri2019}. Ultimately, some sort of equivalence exists between IOP, SLS and Youla, under the notion of closed loop convexity (\cite{Boyd1991}), which is basically entails designing the closed loop maps directly rather than designing a controller. These equivalences between IOP, SLS and Youla has been discussed extensively in \cite{Zheng2020a}, \cite{Tseng2021}. However, and this is important, there are still essential distinctions between these paramterizations in terms of their ability to allow the design and the subsequent distributed implementation of optimal controllers. Here preliminary results show that certain crippling working hypotheses, such as the stability of the plant (assumed in the SLS, IOP frameworks), can be elegantly overcome via the Youla parameterization.\\

\textit{Robust Control}:
Non-asymptotic estimation provides a family of plants described by a nominal model with a set of bounded uncertainty/modeling error for a fixed amount of data. Hence it is necessary to ensure that the designed controller is robustly stable with performance guarantee for the family of plants, known as robust controller design (\cite{optimalcontrol}).  
When the modeling errors is unstructured and arbitrary norm-bounded LTI operators, classical ``small-gain'' type theorems can be used to solve the robust stabilization problem with no conservatism. If the modeling error is structured, other sophisticated techniques similar to $\mu$-synthesis techniques in \cite{Doyle1982}, integral quadratic constraints (IQC) in \cite{Megretski1997}, and sum-of-squares (SOS) optimization in  \cite{Scherer2006} are less traditional than small-gain approaches as in \cite{furieri2020}.
The strongest method available in the theoretical control machinery for  coping with  systems with uncertainty  are additive or multiplicative perturbations  on the coprime factors of the nominal model (\cite{robustcontrol}). The modeling errors are allowed to be unstructured and arbitrary  LTI operators, with a mere upper bound on their $\mathcal{H}_\infty$ norm. This describes simultaneously uncertainty on the locations of both the poles and the Smith zeros of the plant and on its McMillan degree. Modeling uncertainty as additive perturbations on the coprimes also has the advantage over  additive uncertainty on the plant, in the fact that it enables to deal with unstable plants.\\

\textit{Online Control}:
Applying online learning techniques for the learning and optimal control of  LTI systems, via  time-varying cost functions has received a lot of attention within the last year. These results bring significant algorithmic benefits over the more classical online and adaptive control methodologies, which have an abundant history of research (\cite{Sastry2011}, \cite{Ioannou2012}). Here we restrict the literature review to online control with unknown dynamics in the regret minimization framework similar to \cite{Hazan2020b}. The notion of ``regret'' arises from the time varying cost function. Essentially, {\em the regret} measures how well a system adapts to the time varying costs. Most of the previous  results in regret minimization framework assume either Gaussian perturbations (or no perturbations at all) for the fully-observed linear dynamic system and manage to obtaining relatively very low regret margins in the online LQR setting as depicted in \cite{Yadkori2011}, \cite{Dean2018},  \cite{mania2019}, \cite{Cohen2019}. A regret bound for fully observable systems with adversarial disturbances able to cope with partially observable systems (under convexity assumption for adversarial disturbances and strong-convexity assumptions for semi-adversarial noises) in the  time-varying  case has been very recently proposed in \cite{Hazan2020b}. Another regret bound for the case of stochastic perturbations, time-varying strongly-convex functions, and partially observed states has been provided in \cite{Lale2020}. Regret minimization with Kalman filtering has been discussed in \cite{Sabag2021}. Distributed online LQ problem has been studied for identical LTI systems in \cite{Chang2020} and for model-free systems in \cite{furieri2020distributed}.

\section{Norm/Inequality preliminaries}\label{appendixB}
\noindent
Some useful norm identities and inequalities for $\mathcal{H}_2$ and $\mathcal{H}_{\infty}$ is presented here from \cite{optimalcontrol} that are key to the proofs. 

\noindent
First, the following norm triangular inequalities is defined:
\begin{equation}\label{triangularinequality}
\begin{split}
    \| {\bf G_1 +G_2} \|_{\mathcal{H}_{2}} \leq  \| {\bf G_1 } \|_{\mathcal{H}_{2}} + \| {\bf G_2} \|_{\mathcal{H}_{2}}, (\forall) {\bf G_1, G_2} \in \mathcal{RH}_{2} \\
    \| {\bf G_1 +G_2} \|_{\infty} \leq  \| {\bf G_1 } \|_{\infty} + \| {\bf G_2} \|_{\infty},  (\forall) {\bf G_1, G_2} \in \mathcal{RH}_{\infty}
\end{split}    
\end{equation}

\noindent
Since $\|.\|_{\infty}$ is an induced norm, we have the following sub-multiplicative property
\begin{equation}\label{submultiplicativeinfty}
    \| {\bf G_1 G_2} \|_{\infty} \leq  \| {\bf G_1 } \|_{\infty} \| {\bf G_2} \|_{\infty},  (\forall) {\bf G_1, G_2} \in \mathcal{RH}_{\infty}
\end{equation}
\noindent
The sub multiplicative property in \eqref{submultiplicativeinfty} does not hold for the $\mathcal{H}_2$ norm since it is not an induced norm. Instead we have the result below.
\begin{lem}\label{submultiplicativeh2}
For TFM ${\bf G_1}$, ${\bf G_2}$ $\in \mathcal{RH}_{\infty}$, it can be stated that
  \begin{equation}\label{submultiplicativeh2norm}
  \begin{split}
          \| {\bf G_1 G_2} \|_{\mathcal{H}_{2}} \leq  \| {\bf G_1 } \|_{\mathcal{H}_{2}} \| {\bf G_2} \|_{\infty},  (\forall) {\bf G_1} \in \mathcal{RH}_{2}, {\bf G_2} \in \mathcal{RH}_{\infty} \\
           \| {\bf G_1 G_2} \|_{\mathcal{H}_{2}} \leq  \| {\bf G_1 } \|_{\infty} \| {\bf G_2} \|_{\mathcal{H}_{2}},  (\forall) {\bf G_1} \in \mathcal{RH}_{\infty}, {\bf G_2} \in \mathcal{RH}_{2}
  \end{split}
\end{equation}
\end{lem}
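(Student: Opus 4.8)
The plan is to reduce the operator-level statement to a pointwise (in frequency) matrix inequality and then integrate. First I would recall the integral definitions from the nomenclature table: for any stable TFM $\mathbf{G}$ one has $\|\mathbf{G}\|_{\mathcal{H}_2}^2 = \frac{1}{2\pi}\int_{-\pi}^{\pi} \|\mathbf{G}(e^{j\omega})\|_F^2 \, d\omega$, since $\mathrm{tr}\big(\mathbf{G}^*(e^{j\omega})\mathbf{G}(e^{j\omega})\big) = \|\mathbf{G}(e^{j\omega})\|_F^2$ is exactly the squared Frobenius norm of the constant matrix $\mathbf{G}(e^{j\omega})$, while $\|\mathbf{G}\|_\infty = \sup_\omega \sigma_{\max}\big(\mathbf{G}(e^{j\omega})\big)$. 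Thus both sides of the claimed inequalities can be expressed through the constant-matrix evaluations at each frequency $e^{j\omega}$.

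The key step, and the only genuine content of the argument, is the constant-matrix inequality relating the Frobenius norm to the spectral norm: for any conformable matrices $A$ and $B$,
\[
\|AB\|_F \leq \|A\|_F \, \sigma_{\max}(B), \qquad \|AB\|_F \leq \sigma_{\max}(A)\, \|B\|_F.
\]
I would establish the first of these by writing $\|AB\|_F^2 = \sum_k \|r_k^* B\|_2^2$, where $r_k^*$ denotes the $k$-th row of $A$, and bounding each term via $\|r_k^* B\|_2 = \|B^* r_k\|_2 \leq \sigma_{\max}(B)\,\|r_k\|_2$ (using $\sigma_{\max}(B^*) = \sigma_{\max}(B)$); summing over $k$ and recognizing $\sum_k \|r_k\|_2^2 = \|A\|_F^2$ yields the bound. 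The second inequality follows by the symmetric column argument, or by applying the first to $(AB)^* = B^* A^*$.

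With this in hand, for the first line of the lemma I would apply the pointwise bound at each frequency, $\|\mathbf{G}_1(e^{j\omega})\mathbf{G}_2(e^{j\omega})\|_F \leq \|\mathbf{G}_1(e^{j\omega})\|_F \, \sigma_{\max}\big(\mathbf{G}_2(e^{j\omega})\big)$, then upper-bound the spectral factor by its supremum $\sigma_{\max}\big(\mathbf{G}_2(e^{j\omega})\big) \leq \|\mathbf{G}_2\|_\infty$, pull this constant out of the integral, and identify the remaining integral as $\|\mathbf{G}_1\|_{\mathcal{H}_2}^2$:
\[
\|\mathbf{G}_1\mathbf{G}_2\|_{\mathcal{H}_2}^2 \leq \|\mathbf{G}_2\|_\infty^2 \cdot \frac{1}{2\pi}\int_{-\pi}^{\pi}\|\mathbf{G}_1(e^{j\omega})\|_F^2\, d\omega = \|\mathbf{G}_2\|_\infty^2 \,\|\mathbf{G}_1\|_{\mathcal{H}_2}^2.
\]
Taking square roots gives the first inequality. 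The second inequality is obtained identically, using instead $\|\mathbf{G}_1(e^{j\omega})\mathbf{G}_2(e^{j\omega})\|_F \leq \sigma_{\max}\big(\mathbf{G}_1(e^{j\omega})\big)\,\|\mathbf{G}_2(e^{j\omega})\|_F$ and bounding $\sigma_{\max}\big(\mathbf{G}_1(e^{j\omega})\big) \leq \|\mathbf{G}_1\|_\infty$.

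I expect the main obstacle to be nothing more than the constant-matrix inequality $\|AB\|_F \leq \|A\|_F\,\sigma_{\max}(B)$; everything else is bookkeeping with the integral definitions. A mild technical point worth noting is integrability, but since $\mathbf{G}_1, \mathbf{G}_2 \in \mathcal{RH}_\infty$ are rational and bounded on the unit circle, all integrands are continuous and the manipulations above are rigorously justified.
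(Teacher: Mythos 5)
Your argument is correct and complete: the pointwise matrix inequality $\|AB\|_F \leq \|A\|_F\,\sigma_{\max}(B)$ (proved by the row decomposition you give), followed by bounding $\sigma_{\max}\big(\mathbf{G}_2(e^{j\omega})\big)$ by its supremum and integrating, is exactly the standard derivation. The paper itself does not prove this lemma --- it is stated in Appendix B as a known fact imported from the robust-control literature (\cite{optimalcontrol}) --- so there is no in-paper proof to diverge from; your write-up supplies precisely the textbook argument that the citation points to, and the integrability remark at the end correctly disposes of the only technical caveat.
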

\noindent
Finally, from small-gain theorem in \thmref{smallgaintheorem}, we have the following useful inequalities by taking power expansion of the matrix inverse.
\begin{lem}
For TFM ${\bf G_1}$, ${\bf G_2}$ $\in \mathcal{RH}_{\infty}$ with $\| {\bf G_1 } \|_{\infty} \times \| {\bf G_2} \|_{\infty} < 1$, we have
\begin{equation}
    \begin{split}
       \| (I - {\bf G_1 G_2 })^{-1}\|_{\infty} \leq \dfrac{1}{1 - \| {\bf G_1 } \|_{\infty} \| {\bf G_2} \|_{\infty}}\\
    \| (I + {\bf G_1 G_2 })^{-1}\|_{\infty} \leq \dfrac{1}{1 - \| {\bf G_1 } \|_{\infty} \| {\bf G_2} \|_{\infty}}
    \end{split}
\end{equation}
\end{lem}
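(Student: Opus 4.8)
The plan is to reduce both inequalities to the convergence of a Neumann (power) series, which is precisely the mechanism underlying the small-gain theorem cited in \thmref{smallgaintheorem}. First I would observe that the hypothesis $\| {\bf G_1} \|_{\infty} \| {\bf G_2} \|_{\infty} < 1$, combined with the sub-multiplicative property \eqref{submultiplicativeinfty}, yields $\| {\bf G_1 G_2} \|_{\infty} \leq \| {\bf G_1} \|_{\infty} \| {\bf G_2} \|_{\infty} < 1$. This single scalar bound is all that is actually needed in the estimates below; the two factors enter only through their product.

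Next, since $\mathcal{RH}_{\infty}$ is a Banach algebra under the $\mathcal{H}_{\infty}$ norm and $\| {\bf G_1 G_2} \|_{\infty} < 1$, the operator $I - {\bf G_1 G_2}$ is invertible in $\mathcal{RH}_{\infty}$ with inverse given by the convergent Neumann series $(I - {\bf G_1 G_2})^{-1} = \sum_{k=0}^{\infty} ({\bf G_1 G_2})^{k}$. I would then bound this series by applying the triangle inequality \eqref{triangularinequality} term by term together with the sub-multiplicative property \eqref{submultiplicativeinfty} iterated $k$ times, which gives $\| ({\bf G_1 G_2})^{k} \|_{\infty} \leq \| {\bf G_1 G_2} \|_{\infty}^{k}$. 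Summing the resulting geometric series produces $\| (I - {\bf G_1 G_2})^{-1} \|_{\infty} \leq \sum_{k=0}^{\infty} \| {\bf G_1 G_2} \|_{\infty}^{k} = \dfrac{1}{1 - \| {\bf G_1 G_2} \|_{\infty}} \leq \dfrac{1}{1 - \| {\bf G_1} \|_{\infty} \| {\bf G_2} \|_{\infty}}$, where the last step uses monotonicity of $t \mapsto 1/(1-t)$ on $[0,1)$ applied to the product bound of the first paragraph.

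For the second inequality I would repeat the identical argument with the sign-alternating expansion $(I + {\bf G_1 G_2})^{-1} = \sum_{k=0}^{\infty} (-1)^{k} ({\bf G_1 G_2})^{k}$. Because $\| (-1)^{k} ({\bf G_1 G_2})^{k} \|_{\infty} = \| ({\bf G_1 G_2})^{k} \|_{\infty}$, the term-by-term estimate and the geometric summation are unchanged, so the same upper bound follows verbatim; the two bounds are therefore literally the same expression.

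The only genuinely delicate point, and the one I would justify rather than assume, is the convergence of the Neumann series in the $\mathcal{H}_{\infty}$ norm and the verification that its limit is the (two-sided) inverse. This is exactly the content invoked from \thmref{smallgaintheorem}: the partial sums are Cauchy because their tails are dominated by the tail of a convergent scalar geometric series, and multiplying the limit by $(I \mp {\bf G_1 G_2})$ telescopes to $I$. Everything else is routine manipulation with the two inequalities already recorded in \eqref{triangularinequality} and \eqref{submultiplicativeinfty}.
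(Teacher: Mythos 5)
Your proof is correct and follows essentially the same route as the paper, which justifies this lemma in a single line as following from the small-gain theorem ``by taking power expansion of the matrix inverse''---precisely the Neumann-series argument you spell out, with the term-by-term use of \eqref{submultiplicativeinfty} and the geometric summation. The only cosmetic caveat is that $\mathcal{RH}_{\infty}$ is not itself complete (a limit of rational functions need not be rational), so the convergence of the series should be read in $\mathcal{H}_{\infty}$, with membership of the limit in $\mathcal{RH}_{\infty}$ following separately because the inverse of a rational matrix is rational and stability is guaranteed by the small-gain theorem; this does not affect the norm bound.
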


\section{Closed Loop Maps}\label{appendixC}

\noindent
The Bezout identity is retrieved before finding the closed loop maps associated with it by using \eqref{PhiMatrix} as below:
\begin{equation*}\label{bezout4phi}
 \ba{cc}   { \bf \Phi}_{11}^{-1} &  0 \\ 0 & I_m \ea
\ba{cc}  { \bf \Phi}_{11} & 0 \\   0 & { \bf \Phi}_{22} \ea 
\ba{cc}  I_p & 0 \\  0 &  { \bf \Phi}_{22}^{-1}\ea = \ba{cc}   { I_p} &   {0} \\  {0} &  { I_m} \ea ,
\end{equation*} 

 \begin{equation*}\label{Phitoclosedloopmap}
 \ba{cc}   {\bf \Phi}_{11}^{-1} &  0 \\ 0 & I_m \ea
\ba{cc}    ({\tilde {\bf M}}^{\mathtt{md}}+\Delta_{\bf \tilde{M}}) &   ({\tilde {\bf N}}^{\mathtt{md}}+\Delta_{\bf \tilde{N}}) \\ - {{\bf X}}^{\mathtt{md}}_{\bf Q}  &  {{\bf Y}}^{\mathtt{md}}_{\bf Q}  \ea
\ba{cc}  {\tilde {\bf Y}}^{\mathtt{md}}_{\bf Q} & -({{\bf N}}^{\mathtt{md}}+\Delta_{\bf {N}}) \\   {\tilde {\bf X}}^{\mathtt{md}}_{\bf Q} &  ({{\bf M}}^{\mathtt{md}}+\Delta_{\bf {M}}) \ea
\ba{cc}  I_p & 0 \\  0 & {{\bf \Phi}}_{22}^{-1}\ea = \ba{cc}   { I_p} &   {0} \\  {0} &  { I_m} \ea ,
\end{equation*} 

 \begin{equation}\label{closedloopmap}
\small \ba{cc}    {\bf \Phi}_{11}^{-1} ({\tilde {\bf M}}^{\mathtt{md}}+\Delta_{\bf \tilde{M}}) &   {\bf \Phi}_{11}^{-1} ({\tilde {\bf N}}^{\mathtt{md}}+\Delta_{\bf \tilde{N}}) \\ - {{\bf X}}^{\mathtt{md}}_{\bf Q}  &  {{\bf Y}}^{\mathtt{md}}_{\bf Q}  \ea
\ba{cc}  {\tilde {\bf Y}}^{\mathtt{md}}_{\bf Q} & -({{\bf N}}^{\mathtt{md}}+\Delta_{\bf {N}}) {\bf \Phi}_{22}^{-1} \\   {\tilde {\bf X}}^{\mathtt{md}}_{\bf Q} &  ({{\bf M}}^{\mathtt{md}}+\Delta_{\bf {M}}) {\bf \Phi}_{22}^{-1} \ea
= \ba{cc}   { I_p} &   {0} \\  {0} &  { I_m} \ea.
\end{equation} \\

\begin{proof}\textbf{for \lemref{phi11phi22}}:
From DCF matrix ${\bf \Phi}$ in \eqref{PhiMatrix}, ${\bf \Phi}_{11}$ = $({\tilde {\bf M}^{\mathtt{md}}}+\Delta_{\bf {\tilde{M}}}) {\tilde {\bf Y}}^{\mathtt{md}}_{\bf Q}  + ({\tilde {\bf N}^{\mathtt{md}}}+\Delta_{\bf {\tilde{N}}}) {\tilde {\bf X}}^{\mathtt{md}}_{\bf Q}$ and ${\bf{\Phi}}_{22}$ = ${\bf X}^{\mathtt{md}}_{\bf Q} ({\bf N}^{\mathtt{md}}+\Delta_{\bf N})+{\bf Y}^{\mathtt{md}}_{\bf Q}({\bf M}^{\mathtt{md}}+\Delta_{\bf M}) $.
Next using Bezout identity for nominal model in \eqref{bezout2} it follows that
\begin{equation*}\label{phimtrixproof1}
\begin{aligned}
{\bf \Phi}_{11} &= \ba{cc} ({\tilde {\bf M}^{\mathtt{md}}}+\Delta_{\bf {\tilde{M}}}) &  ({\tilde {\bf N}^{\mathtt{md}}}+\Delta_{\bf {\tilde{N}}})\ea
\ba{c} {\tilde {\bf Y}}^{\mathtt{md}}_{\bf Q} \\  {\tilde {\bf X}}^{\mathtt{md}}_{\bf Q} \ea\\
&= \ba{cc}  {\tilde {\bf M}^{\mathtt{md}}} &  {\tilde {\bf N}^{\mathtt{md}}} \ea
\ba{c} {\tilde {\bf Y}}^{\mathtt{md}}_{\bf Q} \\ {\tilde {\bf X}}^{\mathtt{md}}_{\bf Q} \ea + \ba{cc}  \Delta_{\bf \tilde{M}} &  \Delta_{\bf \tilde{N}} \ea
\ba{c}  {\tilde {\bf Y}}^{\mathtt{md}}_{\bf Q} \\  {\tilde {\bf X}}^{\mathtt{md}}_{\bf Q} \ea = I_p + \ba{cc}  \Delta_{\bf \tilde{M}} &  \Delta_{\bf \tilde{N}} \ea
\ba{c}  {\tilde {\bf Y}}^{\mathtt{md}}_{\bf Q} \\  {\tilde {\bf X}}^{\mathtt{md}}_{\bf Q} \ea ;
\end{aligned}
\end{equation*}

\begin{equation*}\label{phimtrixproof2}
\begin{aligned}
{\bf \Phi}_{22} &= \ba{cc} {\bf X}^{\mathtt{md}}_{\bf Q} &  {\bf Y}^{\mathtt{md}}_{\bf Q} \ea
\ba{c} ({\bf N}^{\mathtt{md}}+\Delta_{\bf N}) \\  ({\bf M}^{\mathtt{md}}+\Delta_{\bf M}) \ea\\
&= \ba{cc}  {\bf X}^{\mathtt{md}}_{\bf Q} &  {\bf Y}^{\mathtt{md}}_{\bf Q} \ea
\ba{c} {\bf N}^{\mathtt{md}} \\ {\bf M}^{\mathtt{md}} \ea + \ba{cc}  {\bf X}^{\mathtt{md}}_{\bf Q} &  {\bf Y}^{\mathtt{md}}_{\bf Q} \ea
\ba{c}  \Delta_{\bf N} \\ \Delta_{\bf M} \ea = I_m +  \ba{cc}  {\bf X}^{\mathtt{md}}_{\bf Q} &  {\bf Y}^{\mathtt{md}}_{\bf Q} \ea \ba{c}  \Delta_{\bf N} \\ \Delta_{\bf M} \ea .
\end{aligned}
\end{equation*}
\end{proof}
\noindent
Before giving the proof for \thmref{oarasthm1}, the small gain theorem is stated here.
\noindent
\begin{thm}[Small Gain Theorem]\label{smallgaintheorem} \cite[Theorem~7.4.1/ page 225 ]{robustcontrol}
Let ${\bf G}_1 \in \mathbb{R}(z)^{p\times m}$ and ${\bf G}_2 \in \mathbb{R}(z)^{m \times p}$ be two TFM's respectively. If $\|{\bf G}_1\|_{\infty} \leq \dfrac{1}{ \gamma}$ and $\|{\bf G}_2\|_{\infty} \leq \gamma$, for some $\gamma > 0$, then the closed loop feedback system of ${\bf G}_1$ and ${\bf G}_2$ is internally stable.
\end{thm}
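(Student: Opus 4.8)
The plan is to reduce internal stability of the ${\bf G}_1$--${\bf G}_2$ feedback interconnection to the stability of a single return-difference inverse and then to establish the latter by a Neumann-series argument. Since both ${\bf G}_1$ and ${\bf G}_2$ are assumed stable, every closed-loop map of the interconnection is built from the two inverses $(I - {\bf G}_1 {\bf G}_2)^{-1}$ and $(I - {\bf G}_2 {\bf G}_1)^{-1}$ pre- and post-multiplied by the stable factors ${\bf G}_1$, ${\bf G}_2$; internal stability therefore holds as soon as these two inverses lie in $\mathcal{RH}_{\infty}$. By symmetry it suffices to treat $(I - {\bf G}_1 {\bf G}_2)^{-1}$, so the proof concentrates on showing $(I - {\bf G}_1 {\bf G}_2)^{-1} \in \mathcal{RH}_{\infty}$.

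First I would record the gain estimate: as ${\bf G}_1, {\bf G}_2 \in \mathcal{RH}_{\infty}$, the product ${\bf G}_1 {\bf G}_2$ is again stable, and the submultiplicative property \eqref{submultiplicativeinfty} gives $\|{\bf G}_1 {\bf G}_2\|_{\infty} \leq \|{\bf G}_1\|_{\infty}\, \|{\bf G}_2\|_{\infty} \leq (1/\gamma)\cdot\gamma = 1$. In the regime in which the theorem is actually invoked this inequality is \emph{strict} (one of the two bounds being strict, exactly as in the computation $\gamma \times \tfrac{1}{\gamma}=1$ preceding the statement), so I set $\rho \overset{def}{=} \|{\bf G}_1 {\bf G}_2\|_{\infty} < 1$ and work with $\rho$.

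Next I would form the Neumann series $\sum_{k=0}^{\infty} ({\bf G}_1 {\bf G}_2)^k$. Each term is a product of stable TFMs, hence stable, and combining the triangle inequality \eqref{triangularinequality} with repeated use of \eqref{submultiplicativeinfty} shows that its partial sums are Cauchy in the $\mathcal{H}_{\infty}$ norm with limit bounded by $\sum_{k=0}^{\infty}\rho^k = (1-\rho)^{-1}$. To identify this limit with $(I - {\bf G}_1 {\bf G}_2)^{-1}$ and certify that it is a \emph{rational} stable TFM, I would argue pointwise: ${\bf G}_1 {\bf G}_2$ is rational, proper and has all its poles inside the unit disk, so $(I - {\bf G}_1 {\bf G}_2)^{-1}$ is automatically rational; and because the operator norm of the analytic matrix function ${\bf G}_1 {\bf G}_2$ is subharmonic, the maximum principle yields $\sigma_{\max}\big({\bf G}_1(z) {\bf G}_2(z)\big) \leq \rho < 1$ for every $z$ with $|z| \geq 1$. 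Hence $I - {\bf G}_1(z) {\bf G}_2(z)$ is nonsingular throughout the closed exterior of the unit disk, the rational inverse has no poles there and is bounded by $(1-\rho)^{-1}$, i.e. $(I - {\bf G}_1 {\bf G}_2)^{-1} \in \mathcal{RH}_{\infty}$. Substituting this back into the four closed-loop maps yields products of stable TFMs, and internal stability follows.

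The step I expect to be the main obstacle is the passage from ``$\sigma_{\max}({\bf G}_1 {\bf G}_2) < 1$ on the unit circle'' to the nonsingularity of $I - {\bf G}_1 {\bf G}_2$ on the whole region $|z|\geq 1$, since singular values are not analytic and a termwise-summed series of rationals need not remain rational; the clean route is the subharmonicity/maximum-principle argument on the operator norm, combined with the a priori rationality of the inverse, rather than manipulating the series term by term. A secondary subtlety is the boundary case $\|{\bf G}_1\|_{\infty}\|{\bf G}_2\|_{\infty}=1$, where invertibility on the unit circle can genuinely fail (e.g. ${\bf G}_1={\bf G}_2=I$), which is why the effective content of the theorem lives under the strict gain condition $\rho<1$ supplied by the surrounding analysis.
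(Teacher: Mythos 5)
The paper never proves this statement: it is imported by citation from \cite[Theorem~7.4.1]{robustcontrol} and is only ever \emph{applied} (in the proof of Theorem~\ref{oarasthm1} and in the Neumann-series lemma of Appendix~B), so there is no in-paper argument to compare against and your proposal must stand on its own. It essentially does: the reduction of internal stability to $(I-{\bf G}_1{\bf G}_2)^{-1},\,(I-{\bf G}_2{\bf G}_1)^{-1}\in\mathcal{RH}_{\infty}$ for stable ${\bf G}_1,{\bf G}_2$, the observation that the inverse is a priori rational so that only its pole locations need to be controlled, and the maximum-principle step carrying $\sigma_{\max}\big({\bf G}_1(z){\bf G}_2(z)\big)\le\rho<1$ from the unit circle to all of $|z|\ge 1$ (including $z=\infty$, which gives properness of the inverse) is the standard textbook proof and is correct; the Neumann series then supplies the bound $\|(I-{\bf G}_1{\bf G}_2)^{-1}\|_{\infty}\le(1-\rho)^{-1}$ that the companion lemma in Appendix~B records. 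Two caveats you raise are worth confirming explicitly. First, as literally stated with both inequalities non-strict the theorem is false (take ${\bf G}_1={\bf G}_2=1$ scalar, so $I-{\bf G}_1{\bf G}_2=0$); the effective hypothesis is the strict product bound $\|{\bf G}_1{\bf G}_2\|_{\infty}<1$, and every invocation in the paper does supply it, since the uncertainty bound $\big\|\ba{cc}\Delta_{\bf \tilde{M}} & \Delta_{\bf \tilde{N}}\ea\big\|_{\infty}<\gamma$ is strict. Second, the hypothesis as written does not literally assert that ${\bf G}_1,{\bf G}_2$ are stable --- the paper's $\|\cdot\|_{\infty}$ is a supremum over the unit circle, which is finite for unstable TFMs without poles on $\mathbb{T}$ --- so your added standing assumption ${\bf G}_1,{\bf G}_2\in\mathcal{RH}_{\infty}$ is genuinely needed (otherwise the closed-loop maps built from the stable return-difference inverses and the factors ${\bf G}_1,{\bf G}_2$ need not be stable); it is clearly the intended reading, consistent with the lemma that follows the theorem in Appendix~B. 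With those two points made explicit, the proof is complete.
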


\begin{proof}\textbf{for \thmref{oarasthm1}}:
For any stable ${\bf Q}$ satisfying $\small \Bigg\| \ba{c}  {\tilde {\bf Y}}^{\mathtt{md}}_{\bf Q} \\  {\tilde {\bf X}}^{\mathtt{md}}_{\bf Q} \ea \Bigg\|_{\infty}  \leq \dfrac{1}{ \gamma}$ it follows that ${\bf \Phi}_{11} = \bigg({I_p} +$ $\small \ba{cc}  \Delta_{\bf \tilde{M}} &  \Delta_{\bf \tilde{N}} \ea \ba{c}  {\tilde {\bf Y}}^{\mathtt{md}}_{\bf Q} \\  {\tilde {\bf X}}^{\mathtt{md}}_{\bf Q} \ea \bigg)$ is unimodular (square and stable with a stable inverse) due to the fact that: {\em (a)} The term  $\bigg(I_p + \small \ba{cc}  \Delta_{\bf \tilde{M}} &  \Delta_{\bf \tilde{N}} \ea \ba{c}  {\tilde {\bf Y}}^{\mathtt{md}}_{\bf Q} \\  {\tilde {\bf X}}^{\mathtt{md}}_{\bf Q} \ea\bigg)$ is stable since all factors are stable and {\em (b)} We know that $\small \Big\| \ba{cc}  \Delta_{\bf \tilde{M}} &  \Delta_{\bf \tilde{N}} \ea \Big\|_{\infty}  < \gamma$ from the definition of the Model Uncertainty Set. At the same time $\Bigg( I_p + \small \ba{cc}  \Delta_{\bf \tilde{M}} &  \Delta_{\bf \tilde{N}} \ea \ba{c}  {\tilde {\bf Y}}^{\mathtt{md}}_{\bf Q} \\  {\tilde {\bf X}}^{\mathtt{md}}_{\bf Q} \ea \Bigg)^{-1}$ is guaranteed to be stable via the Small Gain Theorem.\\


\noindent
Conversely, if a Youla parameter ${\bf Q}$  yields a $\gamma$-robustly stabilizable controller of the nominal model  then necessarily $\small \Bigg\| \ba{c}  {\tilde {\bf Y}}^{\mathtt{md}}_{\bf Q} \\  {\tilde {\bf X}}^{\mathtt{md}}_{\bf Q} \ea \Bigg\|_{\infty} \leq \dfrac{1}{ \gamma}$.
The proof of this claim is done by contradiction. Assume that $\small \Bigg\| \ba{c}  {\tilde {\bf Y}}^{\mathtt{md}}_{\bf Q} \\  {\tilde {\bf X}}^{\mathtt{md}}_{\bf Q} \ea \Bigg\|_{\infty}  > \dfrac{1}{\gamma}$.
Then by the Spectral Mapping Theorem \cite[page~41-42]{douglas1972} there must exist $\small \Big\| \ba{cc}  \Delta_{\bf \tilde{M}} &  \Delta_{\bf \tilde{N}} \ea \Big\|_{\infty}  < \gamma$ such that  ${\bf \Phi}_{11} = \bigg( {I_p} + $ $\small \ba{cc}  \Delta_{\bf \tilde{M}} &  \Delta_{\bf \tilde{N}} \ea \ba{c}  {\tilde {\bf Y}}^{\mathtt{md}}_{\bf Q} \\  {\tilde {\bf X}}^{\mathtt{md}}_{\bf Q} \ea \bigg)$ is not unimodular and consequently the Youla parameter ${\bf Q}$ does not produce an $\gamma$-robustly stabilizable controller, which is a contradiction. The proof ends. 
\end{proof}


\begin{proof}\textbf{of \propref{closedloopresponses11}}: By using \eqref{cost2} the cost function can be calculated as
\begin{equation}\label{cost5}
\begin{split}
     \mathcal{H}({\bf G}^{\mathtt{pt}}, {{\bf K}^{\mathtt{md}}_{\bf Q}}) = & \Bigg\|  \ba{c}     {\tilde {\bf Y}}^{\mathtt{md}}_{\bf Q} \\   {\tilde {\bf X}}^{\mathtt{md}}_{\bf Q} \ea {\bf \Phi}_{11}^{-1} \ba{cc}  ({\tilde {\bf M}}^{\mathtt{md}}+\Delta_{\bf \tilde{M}}) &   ({\tilde {\bf N}}^{\mathtt{md}}+\Delta_{\bf \tilde{N}})  \ea \Bigg\|_{\mathcal{H}_{2}}\\
     = & \Bigg\|  \ba{cc}  {\tilde {\bf Y}}^{\mathtt{md}}_{\bf Q}{\bf \Phi}_{11}^{-1} ({\tilde {\bf M}}^{\mathtt{md}}+\Delta_{\bf \tilde{M}})   &  {\tilde {\bf Y}}^{\mathtt{md}}_{\bf Q}{\bf \Phi}_{11}^{-1} ({\tilde {\bf N}}^{\mathtt{md}}+\Delta_{\bf \tilde{N}}) \\  {\tilde {\bf X}}^{\mathtt{md}}_{\bf Q}{\bf \Phi}_{11}^{-1} ({\tilde {\bf M}}^{\mathtt{md}}+\Delta_{\bf \tilde{M}})   &  {\tilde {\bf X}}^{\mathtt{md}}_{\bf Q}{\bf \Phi}_{11}^{-1} ({\tilde {\bf N}}^{\mathtt{md}}+\Delta_{\bf \tilde{N}}) \ea  \Bigg\|_{\mathcal{H}_{2}}\\
     = & \Bigg\| \ba{cc} {\tilde {\bf Y}}^{\mathtt{md}}_{\bf Q}{\bf \Phi}_{11}^{-1} ({\tilde {\bf M}}^{\mathtt{md}}+\Delta_{\bf \tilde{M}})   &  {\tilde {\bf Y}}^{\mathtt{md}}_{\bf Q}{\bf \Phi}_{11}^{-1} ({\tilde {\bf N}}^{\mathtt{md}}+\Delta_{\bf \tilde{N}}) \\  {\tilde {\bf X}}^{\mathtt{md}}_{\bf Q}{\bf \Phi}_{11}^{-1} ({\tilde {\bf M}}^{\mathtt{md}}+\Delta_{\bf \tilde{M}})  & {\tilde {\bf X}}^{\mathtt{md}}_{\bf Q}{\bf \Phi}_{11}^{-1} ({\tilde {\bf N}}^{\mathtt{md}}+\Delta_{\bf \tilde{N}}) \ea  \Bigg\|_{\mathcal{H}_{2}}
\end{split}
\end{equation}

\noindent
Next the closed loop maps is calculated by using \eqref{bezout2} which leads to:
\begin{equation*}
  {\tilde {\bf Y}}^{\mathtt{md}}_{\bf Q}{\bf \Phi}_{11}^{-1} ({\tilde {\bf M}}^{\mathtt{md}}+\Delta_{\bf \tilde{M}}) 
   = {\tilde {\bf Y}}^{\mathtt{md}}_{\bf Q} {\Bigg({I_p} + \ba{cc}  \Delta_{\bf \tilde{M}} &  \Delta_{\bf \tilde{N}} \ea \ba{c}  {\tilde {\bf Y}}^{\mathtt{md}}_{\bf Q} \\  {\tilde {\bf X}}^{\mathtt{md}}_{\bf Q} \ea \Bigg)}^{-1} ({\tilde {\bf M}}^{\mathtt{md}}+\Delta_{\bf \tilde{M}}) 
\end{equation*}

\begin{equation*}
\begin{split}
  {\tilde {\bf Y}}^{\mathtt{md}}_{\bf Q}{\bf \Phi}_{11}^{-1} ({\tilde {\bf M}}^{\mathtt{md}}+\Delta_{\bf \tilde{M}}) 
  & = \Bigg[({\tilde {\bf M}}^{\mathtt{md}}+\Delta_{\bf \tilde{M}})^{-1}  {\bigg({I_p} + \ba{cc}  \Delta_{\bf \tilde{M}} &  \Delta_{\bf \tilde{N}} \ea \ba{c}  {\tilde {\bf Y}}^{\mathtt{md}}_{\bf Q} \\  {\tilde {\bf X}}^{\mathtt{md}}_{\bf Q} \ea \bigg)} ({\tilde {\bf Y}}^{\mathtt{md}}_{\bf Q})^{-1} \Bigg]^{-1}\\
  & = \Bigg[({\tilde {\bf M}}^{\mathtt{md}}+\Delta_{\bf \tilde{M}})^{-1}  {\bigg({({\tilde {\bf M}}^{\mathtt{md}} {\tilde {\bf Y}}^{\mathtt{md}}_{\bf Q} + {\tilde {\bf N}}^{\mathtt{md}} {\tilde {\bf X}}^{\mathtt{md}}_{\bf Q})} + (  \Delta_{\bf \tilde{M}}{\tilde {\bf Y}}^{\mathtt{md}}_{\bf Q} +  \Delta_{\bf \tilde{N}}{\tilde {\bf X}}^{\mathtt{md}}_{\bf Q}) \bigg)} ({\tilde {\bf Y}}^{\mathtt{md}}_{\bf Q})^{-1} \Bigg]^{-1}\\
   & = \Bigg[({\tilde {\bf M}}^{\mathtt{md}}+\Delta_{\bf \tilde{M}})^{-1}  {\bigg({({\tilde {\bf M}}^{\mathtt{md}}+\Delta_{\bf \tilde{M}}) {\tilde {\bf Y}}^{\mathtt{md}}_{\bf Q}+ ({\tilde {\bf N}}^{\mathtt{md}}+\Delta_{\bf \tilde{N}}) {\tilde {\bf X}}^{\mathtt{md}}_{\bf Q}} \bigg)} ({\tilde {\bf Y}}^{\mathtt{md}}_{\bf Q})^{-1} \Bigg]^{-1}\\
   & = \Bigg[ { {I_p} + ({\tilde {\bf M}}^{\mathtt{md}}+\Delta_{\bf \tilde{M}})^{-1}  ({\tilde {\bf N}}^{\mathtt{md}}+\Delta_{\bf \tilde{N}}) {\tilde {\bf X}}^{\mathtt{md}}_{\bf Q} ({\tilde {\bf Y}}^{\mathtt{md}}_{\bf Q})^{-1}} \Bigg]^{-1} = {(I_p+{\bf G}^{\mathtt{pt}} {{\bf K}^{\mathtt{md}}_{\bf Q}})^{-1}}.\\
\end{split}
\end{equation*}

\noindent
Next the remaining three terms is calculated as below:


\begin{equation*}
    {\tilde {\bf Y}}^{\mathtt{md}}_{\bf Q}{\bf \Phi}_{11}^{-1} ({\tilde {\bf N}}^{\mathtt{md}}+\Delta_{\bf \tilde{N}})  = {\tilde {\bf Y}}^{\mathtt{md}}_{\bf Q}{\bf \Phi}_{11}^{-1} ({\tilde {\bf M}}^{\mathtt{md}}+\Delta_{\bf \tilde{M}}) ({\tilde {\bf M}}^{\mathtt{md}}+\Delta_{\bf \tilde{M}})^{-1}  ({\tilde {\bf N}}^{\mathtt{md}}+\Delta_{\bf \tilde{N}}) = {{(I_p+{\bf G}^{\mathtt{pt}} {{\bf K}^{\mathtt{md}}_{\bf Q}})^{-1}}{\bf G}^{\mathtt{pt}}};
\end{equation*}

\begin{equation*}
    {\tilde {\bf X}}^{\mathtt{md}}_{\bf Q}{\bf \Phi}_{11}^{-1} ({\tilde {\bf M}}^{\mathtt{md}}+\Delta_{\bf \tilde{M}}) = {\tilde {\bf X}}^{\mathtt{md}}_{\bf Q}({\tilde {\bf Y}}^{\mathtt{md}}_{\bf Q})^{-1} {\tilde {\bf Y}}^{\mathtt{md}}_{\bf Q}{\bf \Phi}_{11}^{-1} ({\tilde {\bf M}}^{\mathtt{md}}+\Delta_{\bf \tilde{M}})
   = {{{\bf K}^{\mathtt{md}}_{\bf Q}}{(I_p+{\bf G}^{\mathtt{pt}} {{\bf K}^{\mathtt{md}}_{\bf Q}})^{-1}}};\\ 
\end{equation*}


\begin{equation*}
   {\tilde {\bf X}}^{\mathtt{md}}_{\bf Q}{\bf \Phi}_{11}^{-1} ({\tilde {\bf N}}^{\mathtt{md}}+\Delta_{\bf \tilde{N}}) = {\tilde {\bf X}}^{\mathtt{md}}_{\bf Q}{\bf \Phi}_{11}^{-1} ({\tilde {\bf M}}^{\mathtt{md}}+\Delta_{\bf \tilde{M}})({\tilde {\bf M}}^{\mathtt{md}}+\Delta_{\bf \tilde{M}})^{-1}  ({\tilde {\bf N}}^{\mathtt{md}}+\Delta_{\bf \tilde{N}}) = {{{\bf K}^{\mathtt{md}}_{\bf Q}}{(I_p+{\bf G}^{\mathtt{pt}} {{\bf K}^{\mathtt{md}}_{\bf Q}})^{-1}}{\bf G}^{\mathtt{pt}}}.
\end{equation*}
\end{proof} 
\section{Controller Parameterization}\label{appendixD}

\begin{proof}\textbf{for \thmref{theoremAcost}}:
First $\mathcal{H}_2$ norm of the LQG cost in \eqref{LQGCostOriginal} is derived. Youla Parameterization focuses on the responses of a closed loop system (\cite{youla1976}, \cite{furieri2019}).
By \propref{closedloopresponses11} the closed Loop responses with the LQG cost \eqref{cost2} are:
\useshortskip
\begin{equation}\label{closedloopresponses2}
\useshortskip
    \ba{cc} {{\bf T}^{y \nu}_{\bf Q}} & {{\bf T}^{yw}_{\bf Q}} \\ {{\bf T}^{u \nu}_{\bf Q}} & {{\bf T}^{u w}_{\bf Q}} \ea 
    = \ba{cc} {(I_p+{\bf G}^{\mathtt{pt}} {{\bf K}^{\mathtt{md}}_{\bf Q}})^{-1}} & { {(I_p+{\bf G}^{\mathtt{pt}} {{\bf K}^{\mathtt{md}}_{\bf Q}})^{-1}}{{\bf G}^{\mathtt{pt}}}} \\
{{{\bf K}^{\mathtt{md}}_{\bf Q}}{(I_p+{\bf G}^{\mathtt{pt}} {{\bf K}^{\mathtt{md}}_{\bf Q}})^{-1}}} &  {{{\bf K}^{\mathtt{md}}_{\bf Q}}{(I_p+{\bf G}^{\mathtt{pt}} {{\bf K}^{\mathtt{md}}_{\bf Q}})^{-1}}}{{\bf G}^{\mathtt{pt}}} \ea
\end{equation}

\noindent
The LQG cost in \eqref{LQGCostOriginal} is written by using \cite{furieri2020} as below:

\begin{equation*}
\begin{aligned}
\lim\limits_{T \rightarrow \infty} &  \mathbb{E} \bigg[ \dfrac{1}{T} \sum \limits_{t=0}^{T} (y^T_t P_1 y_t + u^T_t P_2 u_t)  \bigg] = \mathbb{E}[y^T_{\infty} P_1 y_{\infty} + u^T_{\infty} P_2 u_{\infty}]\\
 = & \sum \limits_{k=0}^{\infty} \mathbb{E} \bigg[\sigma^2_\nu tr\Big(({{\bf T}^{y \nu}_{\bf Q}})^T P_1 {{\bf T}^{y \nu}_{\bf Q}}\Big) +  \sigma^2_w tr\Big(({{\bf T}^{yw}_{\bf Q}})^2 P_1 {{\bf T}^{yw}_{\bf Q}}  \Big) + \\  & + \sigma^2_\nu tr\Big(({{\bf T}^{u \nu}_{\bf Q}})^2 P_2 {{\bf T}^{u \nu}_{\bf Q}} \Big) + \sigma^2_w tr\Big(({{\bf T}^{u w}_{\bf Q}})^2 P_2 {{\bf T}^{u w}_{\bf Q}} \Big) \bigg]\\
 = &  \sum \limits_{k=0}^{\infty} \Bigg\| \ba{cc} P_1^{1/2} & \\ & P_2^{1/2}  \ea  \ba{cc} {{\bf T}^{y \nu}_{\bf Q}} & {{\bf T}^{yw}_{\bf Q}} \\ {{\bf T}^{u \nu}_{\bf Q}} & {{\bf T}^{u w}_{\bf Q}} \ea \ba{cc} \sigma_\nu I_p & \\ & \sigma_w I_m \ea \Bigg\|^2_F\\
 = & \dfrac{1}{2\pi} \int \limits_{-\pi}^{\pi} \Bigg\| \ba{cc} P_1^{1/2} & \\ & P_2^{1/2}  \ea  \ba{cc} {{\bf T}^{y \nu}_{\bf Q}} & {{\bf T}^{yw}_{\bf Q}} \\ {{\bf T}^{u \nu}_{\bf Q}} & {{\bf T}^{u w}_{\bf Q}} \ea \ba{cc} \sigma_\nu I_p & \\ & \sigma_w I_m \ea \Bigg\|^2_F d\theta\\
 = &  \Bigg\| \ba{cc} P_1^{1/2} & \\ & P_2^{1/2}  \ea  \ba{cc} {{\bf T}^{y \nu}_{\bf Q}} & {{\bf T}^{yw}_{\bf Q}} \\ {{\bf T}^{u \nu}_{\bf Q}} & {{\bf T}^{u w}_{\bf Q}} \ea \ba{cc} \sigma_\nu I_p & \\ & \sigma_w I_m \ea \Bigg\|^2_{\mathcal{H}_2},
\end{aligned}
\end{equation*}
where the second last quality is due to the Parseval's theorem with Frobenius norm and last quality is the definition of the ${\mathcal{H}_2}$ norm. Without loss of generality it is assumed that $P_1 = I_p$, $P_2 = I_m$, $\sigma_\nu = 1$, $\sigma_w = 1$.
Then the square root of the LQG cost can be written as the closed loop responses defined in \eqref{cost5}, \eqref{closedloopresponses2}:
\begin{equation*}
\small
    \mathcal{H}({\bf G}^{\mathtt{pt}}, {{\bf K}^{\mathtt{md}}_{\bf Q}}) =  \Bigg\| \ba{cc} {{\bf T}^{y \nu}_{\bf Q}} & {{\bf T}^{yw}_{\bf Q}} \\ {{\bf T}^{u \nu}_{\bf Q}} & {{\bf T}^{u w}_{\bf Q}} \ea \Bigg\|_{\mathcal{H}_2} = \Bigg\| \ba{cc} {\tilde {\bf Y}}^{\mathtt{md}}_{\bf Q}{\bf \Phi}_{11}^{-1} ({\tilde {\bf M}}^{\mathtt{md}}+\Delta_{\bf \tilde{M}})   &  {\tilde {\bf Y}}^{\mathtt{md}}_{\bf Q}{\bf \Phi}_{11}^{-1} ({\tilde {\bf N}}^{\mathtt{md}}+\Delta_{\bf \tilde{N}}) \\  {\tilde {\bf X}}^{\mathtt{md}}_{\bf Q}{\bf \Phi}_{11}^{-1} ({\tilde {\bf M}}^{\mathtt{md}}+\Delta_{\bf \tilde{M}})  & {\tilde {\bf X}}^{\mathtt{md}}_{\bf Q}{\bf \Phi}_{11}^{-1} ({\tilde {\bf N}}^{\mathtt{md}}+\Delta_{\bf \tilde{N}}) \ea  \Bigg\|_{\mathcal{H}_{2}}.
\end{equation*}

\end{proof}

\begin{proof}{\bf for \propref{LQGcostupperbound2}}: 
Given $\Big\| \ba{cc}  \Delta_{\bf \tilde{M}} &    \Delta_{\bf \tilde{N}}  \ea \Big\|_\infty < \gamma ,  \Bigg\|
\ba{c} {\tilde {\bf Y}}^{\mathtt{md}}_{\bf Q} \\  {\tilde {\bf X}}^{\mathtt{md}}_{\bf Q} \ea \Bigg\|_{\infty} < \dfrac{1}{
\gamma}$ and ${\tilde {\bf M}}^{\mathtt{md}}, {\tilde {\bf N}}^{\mathtt{md}} \in \mathcal{RH}_{\infty}$,
 by using \eqref{triangularinequality}, \eqref{submultiplicativeinfty} and \lemref{submultiplicativeh2} the LQG cost in \eqref{theoremA} is obtained as
\begin{equation*}
\small
    \begin{aligned}
    \mathcal{H}&({\bf G}^{\mathtt{pt}}, {{\bf K}^{\mathtt{md}}_{\bf Q}}) =   \Bigg\|  \ba{c}    {\tilde {\bf Y}}^{\mathtt{md}}_{\bf Q} \\  {\tilde {\bf X}}^{\mathtt{md}}_{\bf Q} \ea {\bf \Phi}_{11}^{-1} \ba{cc} {\tilde {\bf M}^{\mathtt{md}}}+\Delta_{\bf {\tilde{M}}} &  {\tilde {\bf N}^{\mathtt{md}}}+\Delta_{\bf {\tilde{N}}}  \ea \Bigg\|_{\mathcal{H}_{2}}\\ 
     \leq  & \Bigg\| \ba{c}   {\tilde {\bf Y}}^{\mathtt{md}}_{\bf Q} \\  {\tilde {\bf X}}^{\mathtt{md}}_{\bf Q} \ea \Bigg\|_{\mathcal{H}_{2}} \Bigg\|{\bf \Phi}_{11}^{-1} \ba{cc} {\tilde {\bf M}^{\mathtt{md}}}+\Delta_{\bf {\tilde{M}}} &  {\tilde {\bf N}^{\mathtt{md}}}+\Delta_{\bf {\tilde{N}}}  \ea \Bigg\|_{\infty}\\
     \leq &  \Bigg\| \ba{c}    {\tilde {\bf Y}}^{\mathtt{md}}_{\bf Q} \\  {\tilde {\bf X}}^{\mathtt{md}}_{\bf Q} \ea \Bigg\|_{\mathcal{H}_{2}} \Bigg\| {\Bigg({ I_p} - \ba{cc}  -\Delta_{\bf \tilde{M}} &  -\Delta_{\bf \tilde{N}} \ea \ba{c}  {\tilde {\bf Y}}^{\mathtt{md}}_{\bf Q} \\  {\tilde {\bf X}}^{\mathtt{md}}_{\bf Q} \ea \Bigg)}^{-1} \Bigg\|_{\infty} \Bigg\| \bigg( \ba{cc}  {\tilde {\bf M}^{\mathtt{md}}} &  {\tilde {\bf N}^{\mathtt{md}}}  \ea + \ba{cc}  \Delta_{\bf \tilde{M}} &  \Delta_{\bf \tilde{N}} \ea \bigg) \Bigg\|_{\infty}\\
     = &  \Bigg\| \ba{c}    {\tilde {\bf Y}}^{\mathtt{md}}_{\bf Q} \\  {\tilde {\bf X}}^{\mathtt{md}}_{\bf Q} \ea \Bigg\|_{\mathcal{H}_{2}} {\Bigg( \Bigg\| { I_p} + \sum \limits_{j=1}^{\infty} \bigg(\ba{cc}  -\Delta_{\bf \tilde{M}} &  -\Delta_{\bf \tilde{N}} \ea \ba{c}  {\tilde {\bf Y}}^{\mathtt{md}}_{\bf Q} \\  {\tilde {\bf X}}^{\mathtt{md}}_{\bf Q} \ea\bigg)^{j} \Bigg\|_{\infty}\Bigg)} \bigg( \Bigg\| \ba{cc}  {\tilde {\bf M}^{\mathtt{md}}} &  {\tilde {\bf N}^{\mathtt{md}}}  \ea  + \ba{cc}  \Delta_{\bf \tilde{M}} &  \Delta_{\bf \tilde{N}} \ea  \Bigg\|_{\infty}  \bigg)\\
     \leq & \Bigg\| \ba{c}    {\tilde {\bf Y}}^{\mathtt{md}}_{\bf Q} \\  {\tilde {\bf X}}^{\mathtt{md}}_{\bf Q} \ea \Bigg\|_{\mathcal{H}_{2}} {\Bigg( 1 + \Bigg\|\sum \limits_{j=1}^{\infty} \bigg(\ba{cc}  -\Delta_{\bf \tilde{M}} &  -\Delta_{\bf \tilde{N}} \ea  \ba{c}  {\tilde {\bf Y}}^{\mathtt{md}}_{\bf Q} \\  {\tilde {\bf X}}^{\mathtt{md}}_{\bf Q} \ea\bigg)^{j} \Bigg\|_{\infty} \Bigg)} \bigg( \Bigg\| \ba{cc}  {\tilde {\bf M}^{\mathtt{md}}} &  {\tilde {\bf N}^{\mathtt{md}}} \ea \Bigg\|_{\infty} + \gamma \bigg) \\
     = &  \Bigg\| \ba{c}    {\tilde {\bf Y}}^{\mathtt{md}}_{\bf Q} \\  {\tilde {\bf X}}^{\mathtt{md}}_{\bf Q} \ea \Bigg\|_{\mathcal{H}_{2}} {\Bigg[ \Bigg\| \ba{cc}  {\tilde {\bf M}^{\mathtt{md}}} &  {\tilde {\bf N}^{\mathtt{md}}}  \ea \Bigg\|_{\infty} + \gamma} + {\bigg( \sum \limits_{j=1}^{\infty} \gamma^{j} \Bigg\| \ba{c}  {\tilde {\bf Y}}^{\mathtt{md}}_{\bf Q} \\  {\tilde {\bf X}}^{\mathtt{md}}_{\bf Q} \ea \Bigg\|_{\infty}^{j}  \Bigg)} \bigg( \Bigg\| \ba{cc}   {\tilde {\bf M}^{\mathtt{md}}} &  {\tilde {\bf N}^{\mathtt{md}}}  \ea \Bigg\|_{\infty} + \gamma \bigg) \Bigg] \\
     = &  \Bigg\| \ba{c} {\tilde {\bf Y}}^{\mathtt{md}}_{\bf Q} \\  {\tilde {\bf X}}^{\mathtt{md}}_{\bf Q} \ea \Bigg\|_{\mathcal{H}_{2}} {\Bigg[ \Bigg\| \ba{cc}  {\tilde {\bf M}^{\mathtt{md}}} &  {\tilde {\bf N}^{\mathtt{md}}} \ea \Bigg\|_{\infty} + \gamma} + {\dfrac{\gamma \Bigg\| \ba{c}  {\tilde {\bf Y}}^{\mathtt{md}}_{\bf Q} \\  {\tilde {\bf X}}^{\mathtt{md}}_{\bf Q} \ea \Bigg\|_{\infty}}{1 - \gamma \Bigg\| \ba{c}  {\tilde {\bf Y}}^{\mathtt{md}}_{\bf Q} \\  {\tilde {\bf X}}^{\mathtt{md}}_{\bf Q} \ea \Bigg\|_{\infty}}     } \bigg( \Bigg\| \ba{cc}   {\tilde {\bf M}^{\mathtt{md}}} &  {\tilde {\bf N}^{\mathtt{md}}}  \ea \Bigg\|_{\infty} + \gamma \bigg) \Bigg] \\
     \end{aligned}
     \end{equation*}
  \begin{equation*}
\small
    \begin{aligned}
     \leq &  \dfrac{ \Bigg\| \ba{c}  {\tilde {\bf Y}}^{\mathtt{md}}_{\bf Q} \\  {\tilde {\bf X}}^{\mathtt{md}}_{\bf Q} \ea \Bigg\|_{\mathcal{H}_{2}}}{1 - \gamma \Bigg\| \ba{c}  {\tilde {\bf Y}}^{\mathtt{md}}_{\bf Q} \\  {\tilde {\bf X}}^{\mathtt{md}}_{\bf Q} \ea \Bigg\|_{\infty}}   {\Bigg[ \Bigg\| \ba{cc}   {\tilde {\bf M}^{\mathtt{md}}} &  {\tilde {\bf N}^{\mathtt{md}}}   \ea \Bigg\|_{\infty} + \gamma} + {\gamma \Bigg\| \ba{c}   {\tilde {\bf Y}}^{\mathtt{md}}_{\bf Q} \\  {\tilde {\bf X}}^{\mathtt{md}}_{\bf Q} \ea \Bigg\|_{\infty}  } \bigg( \Bigg\| \ba{cc}  {\tilde {\bf M}^{\mathtt{md}}} &  {\tilde {\bf N}^{\mathtt{md}}}  \ea \Bigg\|_{\infty} + \gamma \bigg) \Bigg]\\
     = &  \dfrac{ \Bigg\| \ba{c}  {\tilde {\bf Y}}^{\mathtt{md}}_{\bf Q} \\  {\tilde {\bf X}}^{\mathtt{md}}_{\bf Q} \ea \Bigg\|_{\mathcal{H}_{2}}}{1 - \gamma \Bigg\| \ba{c}  {\tilde {\bf Y}}^{\mathtt{md}}_{\bf Q} \\  {\tilde {\bf X}}^{\mathtt{md}}_{\bf Q} \ea \Bigg\|_{\infty}}   {\Bigg[ \bigg(1 + {\gamma \Bigg\| \ba{c}   {\tilde {\bf Y}}^{\mathtt{md}}_{\bf Q} \\  {\tilde {\bf X}}^{\mathtt{md}}_{\bf Q} \ea \Bigg\|_{\infty}  }\bigg)  \bigg( \Bigg\| \ba{cc}  {\tilde {\bf M}^{\mathtt{md}}} &  {\tilde {\bf N}^{\mathtt{md}}}  \ea \Bigg\|_{\infty} + \gamma \bigg)  \Bigg]}
    \end{aligned}
\end{equation*}

\noindent
Hence the LQG cost in \eqref{theoremA} is upper bounded by:

\begin{align*}
  \mathcal{H}&({\bf G}^{\mathtt{pt}}, {{\bf K}^{\mathtt{md}}_{\bf Q}}) \leq  \dfrac{1}{1 - \gamma \Bigg\| \ba{c}  {\tilde {\bf Y}}^{\mathtt{md}}_{\bf Q} \\  {\tilde {\bf X}}^{\mathtt{md}}_{\bf Q}  \ea \Bigg\|_{\infty}}   {\Bigg[ {h \Big( \gamma, \Bigg\| \ba{c}   {\tilde {\bf Y}}^{\mathtt{md}}_{\bf Q} \\  {\tilde {\bf X}}^{\mathtt{md}}_{\bf Q} \ea \Bigg\|_{\infty} \Big)} \Bigg\| \ba{c}  {\tilde {\bf Y}}^{\mathtt{md}}_{\bf Q} \\  {\tilde {\bf X}}^{\mathtt{md}}_{\bf Q} \ea \Bigg\|_{\mathcal{H}_{2}} \Bigg] }
\end{align*}

\noindent
with ${h \bigg( \gamma, \Bigg\| \ba{c}   {\tilde {\bf Y}}^{\mathtt{md}}_{\bf Q} \\  {\tilde {\bf X}}^{\mathtt{md}}_{\bf Q} \ea \Bigg\|_{\infty} \bigg)}$ = $\bigg(1 + {\gamma \Bigg\| \ba{c}   {\tilde {\bf Y}}^{\mathtt{md}}_{\bf Q} \\  {\tilde {\bf X}}^{\mathtt{md}}_{\bf Q} \ea \Bigg\|_{\infty}  }\bigg)  \bigg( \Bigg\| \ba{cc}  {\tilde {\bf M}^{\mathtt{md}}} &  {\tilde {\bf N}^{\mathtt{md}}}  \ea \Bigg\|_{\infty} + \gamma \bigg)$.




\end{proof}

\section{Quasi Convex Optimization}\label{appendixE}
\noindent
To proof \thmref{theoremB} an important lemma is stated below which is a standard result optimization. 

\begin{lem} \label{sarahopt} (\cite{dean2020})
For functions $f: X \mapsto R$ and $g: X \mapsto R$ and constraint set $C \subseteq X$, consider
\begin{equation}
    \min_{x \in C}  \dfrac{f(x)}{1 - g(x)}
\end{equation}

\noindent
Assuming that $f (x) \geq 0$ and $0 \leq g(x) < 1$, $(\forall) x \in C$, this optimization problem
can be reformulated as an outer single-variable problem and an inner-constrained optimization problem (the objective value of an optimization over the empty set is defined to be infinity):
\begin{equation}
     \min_{x \in C}  \dfrac{f(x)}{1 - g(x)} =  \min_{\delta \in [0,1)} \dfrac{1}{1 - \delta} \min_{x \in C}  \{f(x)| g(x) \leq \delta \}.
\end{equation}
\end{lem}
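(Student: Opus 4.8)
The plan is to prove the claimed identity by establishing the two inequalities ``$\le$'' and ``$\ge$'' separately, exploiting the elementary fact that for a fixed nonnegative numerator the map $\delta \mapsto 1/(1-\delta)$ is nonnegative and strictly increasing on $[0,1)$. Throughout I would keep the stated convention that the inner objective equals $+\infty$ whenever the feasible set $\{x\in C : g(x)\le\delta\}$ is empty, so that such values of $\delta$ are harmless in the outer minimization over $\delta$ and require no separate treatment.

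For the direction ``right-hand side $\le$ left-hand side'', I would fix an arbitrary $x\in C$ and choose the specific value $\delta := g(x)$, which lies in $[0,1)$ by the hypothesis $0\le g(x)<1$. Since $x$ itself satisfies $g(x)\le\delta$, it is feasible for the inner problem at this $\delta$, whence $\min_{x'\in C,\,g(x')\le\delta} f(x') \le f(x)$; multiplying by $1/(1-\delta)=1/(1-g(x))$ shows that the right-hand side is bounded above by $f(x)/(1-g(x))$. Taking the infimum over $x\in C$ then yields the inequality.

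For the reverse direction ``left-hand side $\le$ right-hand side'', I would fix an arbitrary $\delta\in[0,1)$ and an arbitrary $x\in C$ with $g(x)\le\delta$ (if no such $x$ exists the corresponding term is $+\infty$ and nothing is to be checked). From $g(x)\le\delta<1$ we obtain $1-g(x)\ge 1-\delta>0$, and because $f(x)\ge 0$ this gives $f(x)/(1-g(x)) \le f(x)/(1-\delta)$. Hence the left-hand side, being at most $f(x)/(1-g(x))$, is at most $f(x)/(1-\delta)$; minimizing the right member first over the feasible $x$ and then over $\delta\in[0,1)$ exhibits the right-hand side as an upper bound for the left-hand side.

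Combining the two inequalities gives the desired equality. I do not anticipate a genuine obstacle here: the only points needing care are the monotonicity of $\delta\mapsto1/(1-\delta)$, which relies precisely on the standing assumptions $f\ge 0$ and $g<1$, and the bookkeeping around the empty-feasible-set convention, both of which are disposed of as above. One should merely keep in mind that the outer ``$\min$'' over $\delta$ and the inner minimization are in general infima that need not be attained, so the argument is phrased with infima and the equalities pass to the limit without difficulty.
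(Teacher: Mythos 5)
Your proof is correct. Note, however, that the paper itself offers no proof of this lemma at all: it is stated as ``a standard result'' and attributed to \cite{dean2020}, so there is no in-paper argument to compare against. Your two-inequality argument is the natural one and is complete: the direction ``$\ge$'' follows by instantiating $\delta = g(x)$ for each $x \in C$ (feasible since $0 \le g(x) < 1$), and the direction ``$\le$'' follows from the monotonicity of $\delta \mapsto 1/(1-\delta)$ on $[0,1)$ together with $f \ge 0$, which guarantees $f(x)/(1-g(x)) \le f(x)/(1-\delta)$ whenever $g(x) \le \delta$. Your handling of the empty-feasible-set convention and your care in phrasing the argument with infima rather than attained minima are both appropriate; the only cosmetic remark is that the degenerate case $C = \emptyset$ (both sides equal $+\infty$) could be mentioned in one clause for completeness, but nothing of substance is missing.
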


\begin{proof}\textbf{of \thmref{theoremB}}: 
The LQG cost in \eqref{theoremA} is upper bounded by:

\begin{equation*}
  \mathcal{H}({\bf G}^{\mathtt{pt}}, {{\bf K}^{\mathtt{md}}_{\bf Q}}) \leq  \dfrac{1}{1 - \gamma \Bigg\| \ba{c}  {\tilde {\bf Y}}^{\mathtt{md}}_{\bf Q} \\  {\tilde {\bf X}}^{\mathtt{md}}_{\bf Q}  \ea \Bigg\|_{\infty}}   {\Bigg[ {h \Big( \gamma, \Bigg\| \ba{c}   {\tilde {\bf Y}}^{\mathtt{md}}_{\bf Q} \\  {\tilde {\bf X}}^{\mathtt{md}}_{\bf Q} \ea \Bigg\|_{\infty} \Big)} \Bigg\| \ba{c}  {\tilde {\bf Y}}^{\mathtt{md}}_{\bf Q} \\  {\tilde {\bf X}}^{\mathtt{md}}_{\bf Q} \ea \Bigg\|_{\mathcal{H}_{2}} \Bigg] },
\end{equation*}

\noindent
with ${h \bigg( \gamma, \Bigg\| \ba{c}   {\tilde {\bf Y}}^{\mathtt{md}}_{\bf Q} \\  {\tilde {\bf X}}^{\mathtt{md}}_{\bf Q} \ea \Bigg\|_{\infty} \bigg)}$ = $\bigg(1 + {\gamma \Bigg\| \ba{c}   {\tilde {\bf Y}}^{\mathtt{md}}_{\bf Q} \\  {\tilde {\bf X}}^{\mathtt{md}}_{\bf Q} \ea \Bigg\|_{\infty}  }\bigg)  \bigg( \Bigg\| \ba{cc}  {\tilde {\bf M}^{\mathtt{md}}} &  {\tilde {\bf N}^{\mathtt{md}}}  \ea \Bigg\|_{\infty} + \gamma \bigg)$.



\noindent
Let $f(x) = {h \Big( \gamma, \Bigg\| \ba{c}   {\tilde {\bf Y}}^{\mathtt{md}}_{\bf Q} \\  {\tilde {\bf X}}^{\mathtt{md}}_{\bf Q} \ea \Bigg\|_{\infty} \Big)} \Bigg\| \ba{c}  {\tilde {\bf Y}}^{\mathtt{md}}_{\bf Q} \\  {\tilde {\bf X}}^{\mathtt{md}}_{\bf Q} \ea \Bigg\|_{\mathcal{H}_{2}}$ and $g(x) = \gamma \Bigg\| \ba{c}  {\tilde {\bf Y}}^{\mathtt{md}}_{\bf Q} \\  {\tilde {\bf X}}^{\mathtt{md}}_{\bf Q} \ea \Bigg\|_{\infty}$.
Here for any stable ${\bf Q}$,  $f (x) \geq 0$ and $0 \leq g(x) < 1$. 
Then using \lemref{sarahopt} by re scaling $\delta$  with $\delta/\gamma$ and introducing one additional constraint $\Bigg\| \ba{c} {\tilde {\bf Y}}^{\mathtt{md}}_{\bf Q} \\  {\tilde {\bf X}}^{\mathtt{md}}_{\bf Q}  \ea \Bigg\|_{\infty} \leq \alpha$, 
we get the upper bound of \thmref{theoremB}. 
\end{proof}

\noindent
To check \remref{numerical}, we state the following lemma using \cite{Dumitrescu2007}.

\begin{lem}\label{SDPlemma}
We overload the notation and denote $\ba{c}   {\tilde {\bf Y}}^{\mathtt{md}}_{\bf Q}(z) \\  {\tilde {\bf X}}^{\mathtt{md}}_{\bf Q}(z) \ea$ to be the following truncated system:
\begin{equation}
     \ba{c}   {\tilde {\bf Y}}^{\mathtt{md}}_{\bf Q}(z) \\  {\tilde {\bf X}}^{\mathtt{md}}_{\bf Q}(z) \ea = \sum\limits_{t=0}^N F_t z^{-t}, \quad F_t \in \mathbb{R}^{(p+m)\times p}, (\forall) t.
\end{equation}
Denote $\mathcal{F} = \ba{cccc} F_0 & F_1 & \dots & F_N \ea^T \in \mathbb{R}^{[(p+m)\times N] \times p}$, then the constraint $\Bigg\| \ba{c} {\tilde {\bf Y}}^{\mathtt{md}}_{\bf Q} \\  {\tilde {\bf X}}^{\mathtt{md}}_{\bf Q}  \ea \Bigg\|_{\infty} \leq \gamma$ is equivalent with the existence of a positive semidefinite $S \in \mathcal{S}_+^{(p+m)(N+1)}$, such that
\begin{equation*}
    \ba{cc} S & \mathcal{F}\\ \mathcal{F}^T & \delta I_p \ea \succeq 0, \quad \sum \limits_{i=1}^{N-k} S_{i+k, i} = \delta \xi(k) I_{p+m},\quad k = 0, 1, ..., N.
\end{equation*}
\noindent
where, $ S_{i,k} \in \mathbb{R}^{(p+m)\times (p+m)}$ is the $(i,k)^{\mathrm{th}}$ block of $S$. $\xi(k)$ is the impulse function defined as 
\begin{equation*}
    \xi(k) = \begin{cases} 1, \quad k=0.\\0, \quad otherwise. \end{cases}
\end{equation*}
\end{lem}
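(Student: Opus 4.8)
The plan is to recognize \lemref{SDPlemma} as the specialization, to our stacked finite-impulse-response controller factor
\begin{equation*}
H(z) := \ba{c} {\tilde {\bf Y}}^{\mathtt{md}}_{\bf Q}(z) \\  {\tilde {\bf X}}^{\mathtt{md}}_{\bf Q}(z) \ea = \sum_{t=0}^{N} F_t z^{-t}, \qquad F_t \in \mathbb{R}^{(p+m)\times p},
\end{equation*}
of the bounded-real LMI characterization of the $\mathcal{H}_\infty$ norm of a matrix-valued FIR transfer function, in the Gram-matrix (positive trigonometric polynomial) form of \cite{Dumitrescu2007}. The starting point is the frequency-domain restatement of the norm bound: $\|H\|_\infty \le \delta$ holds iff $\sigma_{\max}\big(H(e^{j\omega})\big)\le\delta$ for every $\omega$, which by a Schur complement (legitimate since $\delta>0$) is equivalent to the pencil inequality
\begin{equation*}
\ba{cc} \delta I_{p+m} & H(e^{j\omega}) \\ H(e^{j\omega})^{*} & \delta I_p \ea \succeq 0, \qquad (\forall)\,\omega .
\end{equation*}
All the work is then to show that the existence of a single PSD Gram matrix $S$ obeying the prescribed block-sum constraints is equivalent to this frequency-wise family.

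For the sufficiency direction (feasible $S \Rightarrow$ norm bound) I would argue by congruence. Introduce the evaluation matrix $\mathcal{E}(\omega) := \bigl[\, I_{p+m}\;\; e^{j\omega}I_{p+m}\;\; \cdots\;\; e^{jN\omega}I_{p+m}\,\bigr]^{T}$, of size $(p+m)(N+1)\times(p+m)$, indexed so that the $t$-th block carries $e^{jt\omega}$. A direct block computation gives $\mathcal{E}(\omega)^{*}\mathcal{F} = \sum_{t=0}^{N} F_t e^{-jt\omega} = H(e^{j\omega})$, while $\mathcal{E}(\omega)^{*}S\,\mathcal{E}(\omega) = \sum_{k} e^{-jk\omega}\big(\sum_i S_{i+k,i}\big)$; the linear constraints $\sum_i S_{i+k,i} = \delta\,\xi(k)\,I_{p+m}$ collapse this to the constant $\delta I_{p+m}$ for every $\omega$. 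Conjugating the feasible LMI by $\mathrm{diag}\big(\mathcal{E}(\omega), I_p\big)$ therefore produces exactly the pencil above at each $\omega$, and a Schur complement returns $H(e^{j\omega})^{*}H(e^{j\omega}) \preceq \delta^2 I_p$, hence $\|H\|_\infty \le \delta$.

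The necessity direction (norm bound $\Rightarrow$ feasible $S$) is the genuine obstacle, since it amounts to constructing the Gram certificate, and this is where \cite{Dumitrescu2007} is indispensable. I would pass to $S' := S - \tfrac{1}{\delta}\mathcal{F}\mathcal{F}^{T}$, for which the big LMI is equivalent (again by Schur complement) to $S'\succeq 0$, and I would check that the lemma's constraints on $S$ translate into $\sum_i S'_{i+k,i} = R_k$, where $R_k$ are the matrix coefficients of the degree-$N$ trigonometric polynomial $R(\omega):=\delta I_{p+m} - \tfrac{1}{\delta}H(e^{j\omega})H(e^{j\omega})^{*}$ (the $\tfrac{1}{\delta}\mathcal{F}\mathcal{F}^{T}$ term supplying precisely the $HH^{*}$ coefficients). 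The norm bound makes $R(\omega)\succeq 0$ on the unit circle, and the matrix Fejér–Riesz / Gram parameterization of positive semidefinite matrix trigonometric polynomials guarantees a PSD block matrix $S'$ whose $k$-th block-subdiagonal sums reproduce $R_k$; setting $S = S' + \tfrac{1}{\delta}\mathcal{F}\mathcal{F}^{T}$ recovers feasibility. Thus the only substantive ingredient beyond elementary congruence and Schur-complement manipulations is this spectral-factorization existence result imported from \cite{Dumitrescu2007}.

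Finally I would reconcile the bookkeeping so the statement reads consistently: the constraint level is the scalar appearing in the SDP, so the ``$\gamma$'' in the statement of \lemref{SDPlemma} and the ``$\delta$'' in the LMI denote the same number (namely the outer variable of \thmref{theoremB}), and the index range should be understood as $i = 1,\dots,N+1$ with $\mathcal{F}$ stacking all $N+1$ coefficients $F_0,\dots,F_N$, consistent with $S\in\mathcal{S}_+^{(p+m)(N+1)}$. These are purely typographical and do not affect the argument, which reduces the SDP feasibility to the single externally quoted fact that a matrix trigonometric polynomial is nonnegative on the unit circle precisely when it admits a positive semidefinite Gram representation.
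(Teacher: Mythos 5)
Your proposal is correct, but note that the paper itself offers no proof of this lemma at all: in Appendix~E it is merely \emph{stated} ``using \cite{Dumitrescu2007}'', i.e.\ the entire content is delegated to the cited reference on positive trigonometric polynomials. What you have written is the standard derivation of the FIR Bounded Real Lemma in Gram-matrix form, and it is sound: the sufficiency direction via congruence with the evaluation matrix $\mathcal{E}(\omega)$ followed by a Schur complement is elementary and complete, and the necessity direction correctly isolates the one non-elementary ingredient --- the trace/Gram parameterization of matrix trigonometric polynomials that are positive semidefinite on the unit circle --- which is precisely the theorem the paper imports from \cite{Dumitrescu2007}. Your reduction $S' = S - \tfrac{1}{\delta}\mathcal{F}\mathcal{F}^{T}$ and the identification of the block-subdiagonal sums of $\mathcal{F}\mathcal{F}^{T}$ with the autocorrelation coefficients of $H$ are both correct. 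You are also right that the statement contains bookkeeping slips ($\gamma$ versus $\delta$ for the norm level, the dimension of $\mathcal{F}$, and the summation range, which should run so that all $N+1$ coefficient blocks are covered); these are typographical and your reading of them is the intended one. In short, you have supplied an argument where the paper supplies only a citation, and the two agree on where the genuine mathematical content lies.
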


\begin{prop}\label{SDPformulation}(SDP formulation for \thmref{theoremB})
We denote the following:

$\ba{c}    {\bf \tilde{Y}}^{\mathtt{md}}_{\bf Q}(z) \\  {\bf \tilde{X}}^{\mathtt{md}}_{\bf Q}(z) \ea = \sum \limits_j F_j z^{-j},  \quad \ba{c}    {\bf \tilde{Y}}^{\mathtt{md}}(z) \\  {\bf \tilde{X}}^{\mathtt{md}}(z) \ea = \sum \limits_j C_j z^{-j},  \quad \ba{c}    {\bf {M}}^{\mathtt{md}}(z) \\  {\bf {N}}^{\mathtt{md}}(z) \ea = \sum \limits_j P_j z^{-j},$\\ with $F_j, C_j, P_j \in \mathbb{R}^{(m+p)\times p}$ and ${\bf Q}(z) = \sum \limits_j Q_j z^{-j},$ with $Q_j \in \mathbb{R}^{p\times p}$.
Let $\mathtt{vec(.)}$ be the column vectorization of a matrix.
Next we define 
$\hat{P}_k = \ba{c:c:c:c} I_p \otimes P_k^T & I_p \otimes P_{k-1}^T & \dots & I_p \otimes P_{k-n}^T \ea$, where $\otimes$ denote the Kronecker product and by convention, $P_{-1} = P_{-2} = \dots = 0$.
$\overline{C}$ and $\overline{q}$ are vectors containing the coefficients of the FIR system $\ba{c}    {\bf \tilde{Y}}^{\mathtt{md}}(z) \\  {\bf \tilde{X}}^{\mathtt{md}}(z) \ea$ and ${\bf Q}(z)$ respectively. Therefore
\begin{equation*}
    \overline{q} = \ba{cccc} \mathtt{vec}(Q_0) & \mathtt{vec}(Q_1) & \dots & \mathtt{vec}(Q_n) \ea^T, \quad \hat{P} = \ba{c} \hat{P}_0 \\ \hdashline  \\ \hat{P}_1 \\ \hdashline  \vdots \\ \hdashline \\\hat{P}_n \ea.
\end{equation*}

\noindent
The inner optimization problem in \thmref{theoremB} after FIR truncation can be expressed as:
\begin{equation} \label{FIRtruncation}
\begin{aligned}
\min_{\epsilon, \overline{q}, S}  \quad & h(\gamma, \alpha) \epsilon \\
\textrm{s.t.} \quad &  S \in \mathcal{S}_+^{(p+m)(n+1)}, \quad \sum \limits_{i=1}^{n-k} S_{i+k, i} &= \min\{\delta, \alpha\} \xi(k) I_{p+m},\quad k = 0:n,\\
& \ba{cc} S & \mathcal{F}\\ \mathcal{F}^T & \min \{\delta, \alpha\} I_p \ea \succeq 0,\\
& \|\overline{C} + \hat{P} \overline{q} \| \leq \epsilon.
\end{aligned}
\end{equation}
\end{prop}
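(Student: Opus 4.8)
The plan is to take the inner minimization of \thmref{theoremB} (for a fixed value of the outer variable $\delta$) and show that, once ${\bf Q}(z)$ is restricted to the FIR parameterization ${\bf Q}(z)=\sum_{j}Q_j z^{-j}$, both the $\mathcal{H}_2$ objective and the $\mathcal{H}_\infty$ constraint become jointly convex in the finitely many coefficients $\{Q_j\}$ and reduce exactly to \eqref{FIRtruncation}. First I would exploit the Youla relations \eqref{5}, which render the closed-loop factors affine in ${\bf Q}$:
\[
\ba{c} {\tilde {\bf Y}}^{\mathtt{md}}_{\bf Q}(z) \\  {\tilde {\bf X}}^{\mathtt{md}}_{\bf Q}(z) \ea = \ba{c} {\tilde {\bf Y}}^{\mathtt{md}}(z) \\  {\tilde {\bf X}}^{\mathtt{md}}(z) \ea + \ba{c} {\bf M}^{\mathtt{md}}(z) \\  {\bf N}^{\mathtt{md}}(z) \ea {\bf Q}(z).
\]
Matching the coefficients of $z^{-k}$ turns the product into the finite convolution $\sum_i P_{k-i} Q_i$; vectorizing each block through the Kronecker identity $\mathtt{vec}(P_{k-i}Q_i)=(I_p\otimes P_{k-i})\,\mathtt{vec}(Q_i)$ and stacking over $k$ produces precisely the affine expression for the stacked impulse response $\mathcal{F}$ in vectorized form, namely $\overline{C}+\hat{P}\,\overline{q}$, where $\hat{P}$ is the block Toeplitz--Kronecker matrix built from the $P_k$ (with the convention $P_{-1}=P_{-2}=\cdots=0$) and $\overline{C}$ collects the coefficients $C_j$ of the nominal factors.

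Next I would treat the two pieces of the cost separately. For an FIR transfer function, Parseval's identity gives $\big\|\sum_j F_j z^{-j}\big\|_{\mathcal{H}_2}^2=\sum_j\|F_j\|_F^2=\|\mathcal{F}\|_F^2=\|\overline{C}+\hat{P}\,\overline{q}\|^2$, so minimizing the $\mathcal{H}_2$ norm is equivalent to minimizing the Euclidean norm of $\overline{C}+\hat{P}\,\overline{q}$. Introducing the epigraph variable $\epsilon$ then replaces the objective by the linear term $h(\gamma,\alpha)\,\epsilon$ (the prefactor being constant once $\alpha$ is fixed) together with the convex constraint $\|\overline{C}+\hat{P}\,\overline{q}\|\le\epsilon$. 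For the remaining $\mathcal{H}_\infty$ constraint $\big\|[{\tilde {\bf Y}}^{\mathtt{md}}_{\bf Q};\,{\tilde {\bf X}}^{\mathtt{md}}_{\bf Q}]\big\|_\infty\le\min\{\delta,\alpha\}$ I would invoke \lemref{SDPlemma} with the bound $\gamma$ there replaced by $\min\{\delta,\alpha\}$: this bounded-real characterization of an FIR system's $\mathcal{H}_\infty$ norm supplies the positive semidefinite matrix $S\in\mathcal{S}_+^{(p+m)(n+1)}$, the block-trace equalities $\sum_{i}S_{i+k,i}=\min\{\delta,\alpha\}\,\xi(k)I_{p+m}$, and the Schur-complement LMI coupling $S$ to $\mathcal{F}$. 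Assembling these pieces yields \eqref{FIRtruncation}, and since every constraint is either linear or an LMI and the objective is linear, the truncated inner problem is a semidefinite program.

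The step I expect to be the main obstacle is purely the bookkeeping of the convolution-to-Kronecker vectorization: ensuring that $\mathcal{F}$, $\hat{P}$, $\overline{C}$ and $\overline{q}$ are indexed consistently with the block layout required by \lemref{SDPlemma}, in particular matching the truncation length $n$, the block size $(p+m)$, and the transpose/ordering conventions in the definition of $\hat{P}_k=[\,I_p\otimes P_k^T \mid I_p\otimes P_{k-1}^T \mid \cdots\,]$. A secondary point worth recording is that the FIR truncation is an \emph{inner} approximation of the feasible set, so the resulting SDP value upper-bounds the true inner optimum; this is consistent with the upper-bound role of \thmref{theoremB} and follows because any FIR ${\bf Q}$ is in particular a stable Youla parameter and the $\mathcal{H}_2$ and $\mathcal{H}_\infty$ norms in the objective and constraint are evaluated exactly for such FIR systems.
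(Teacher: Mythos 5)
Your proposal is correct and follows exactly the route the paper intends: the affine Youla dependence of $\big[\tilde{\bf Y}^{\mathtt{md}}_{\bf Q};\,\tilde{\bf X}^{\mathtt{md}}_{\bf Q}\big]$ on ${\bf Q}$, the convolution-to-Kronecker vectorization giving $\overline{C}+\hat{P}\,\overline{q}$, Parseval plus an epigraph variable for the $\mathcal{H}_2$ objective, and Lemma~\ref{SDPlemma} (with bound $\min\{\delta,\alpha\}$) for the $\mathcal{H}_\infty$ constraint; the paper in fact states Proposition~\ref{SDPformulation} without writing this argument out, so your proposal supplies precisely the omitted justification. The only caveat is the bookkeeping you already flag: the paper's own indexing in \eqref{5} gives the affine term as $[-{\bf N}^{\mathtt{md}};\,{\bf M}^{\mathtt{md}}]\,{\bf Q}$ rather than $[{\bf M}^{\mathtt{md}};\,{\bf N}^{\mathtt{md}}]\,{\bf Q}$, so the sign and block ordering in $\hat{P}$ must be adjusted accordingly.
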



\section{Suboptimality guarantee proof}\label{appendixF}
\begin{proof}\textbf{of \thmref{suboptimalitygurantee}}:\\

\noindent
Let 
${\bf Q_*}$ is the optimal solution for \thmref{theoremB}.
By using \eqref{RealPlantOptimalController1}, it is observed that
\begin{equation*}
\small
    \begin{aligned}
    &\mathcal{H}({\bf G}^{\mathtt{pt}}, {{\bf K}^{\mathtt{md}}_{\bf Q_*}})  =  \Bigg\|   \ba{c}    {\bf \tilde{Y}}^{\mathtt{md}}_{\bf Q_*} \\  {\bf \tilde{X}}^{\mathtt{md}}_{\bf Q_*} \ea {\bf \Phi}_{11}^{-1} {I_p} \ba{cc}  {\tilde {\bf M}^{\mathtt{md}}}+\Delta_{\bf \tilde{M}} &  {\tilde {\bf N}^{\mathtt{md}}}+\Delta_{\bf \tilde{N}}  \ea \Bigg\|_{\mathcal{H}_{2}}\\
    &  = \Bigg\|  \ba{c}    {\bf \tilde{Y}}^{\mathtt{md}}_{\bf Q_*} \\  {\bf \tilde{X}}^{\mathtt{md}}_{\bf Q_*} \ea {\bf \Phi}_{11}^{-1} \ba{cc}  {\tilde {\bf M}^{\mathtt{md}}}+\Delta_{\bf \tilde{M}} &  {\tilde {\bf N}^{\mathtt{md}}}+\Delta_{\bf \tilde{N}}  \ea   \ba{c}  {\tilde {\bf Y}}^{\mathtt{opt}} \\  {\tilde {\bf X}}^{\mathtt{opt}} \ea  \ba{cc}  {\tilde {\bf M}^{\mathtt{md}}}+\Delta_{\bf \tilde{M}} &  {\tilde {\bf N}^{\mathtt{md}}}+\Delta_{\bf \tilde{N}}  \ea \Bigg\|_{\mathcal{H}_{2}}\\
 & \leq \Bigg\|    \ba{c}    {\bf \tilde{Y}}^{\mathtt{md}}_{\bf Q_*} \\  {\bf \tilde{X}}^{\mathtt{md}}_{\bf Q_*} \ea {\bf \Phi}_{11}^{-1}  \ba{cc}  {\tilde {\bf M}^{\mathtt{md}}}+\Delta_{\bf \tilde{M}} &  {\tilde {\bf N}^{\mathtt{md}}}+\Delta_{\bf \tilde{N}}  \ea \Bigg\|_{\infty} \Bigg\|  \ba{c}  {\tilde {\bf Y}}^{\mathtt{opt}} \\  {\tilde {\bf X}}^{\mathtt{opt}} \ea  \ba{cc}  {\tilde {\bf M}^{\mathtt{md}}}+\Delta_{\bf \tilde{M}} &  {\tilde {\bf N}^{\mathtt{md}}}+\Delta_{\bf \tilde{N}}  \ea  \Bigg\|_{\mathcal{H}_{2}}\\
  & \leq \Bigg\| \ba{c}    {\bf \tilde{Y}}^{\mathtt{md}}_{\bf Q_*} \\  {\bf \tilde{X}}^{\mathtt{md}}_{\bf Q_*} \ea {\bf \Phi}_{11}^{-1}  \ba{cc}  {\tilde {\bf M}^{\mathtt{md}}}+\Delta_{\bf \tilde{M}} &  {\tilde {\bf N}^{\mathtt{md}}}+\Delta_{\bf \tilde{N}}  \ea \Bigg\|_{\infty}  \times \mathcal{H}({\bf G}^{\mathtt{pt}}, {\bf K}^{\mathtt{opt}})  \\
\end{aligned}
\end{equation*}

\noindent
Next using same procedure of \propref{LQGcostupperbound2} for $\|.\|_{\infty}$ 
it follows that
\begin{equation*}
    \Bigg\|\ba{c}    {\bf \tilde{Y}}^{\mathtt{md}}_{\bf Q_*} \\  {\bf \tilde{X}}^{\mathtt{md}}_{\bf Q_*} \ea {\bf \Phi}_{11}^{-1}  \ba{cc}  {\tilde {\bf M}^{\mathtt{md}}}+\Delta_{\bf \tilde{M}} &  {\tilde {\bf N}^{\mathtt{md}}}+\Delta_{\bf \tilde{N}}  \ea \Bigg\|_{\infty} \leq \dfrac{1}{1 - \gamma \bigg\| \ba{c} {\bf \tilde{Y}}^{\mathtt{md}}_{\bf Q_*} \\  {\bf \tilde{X}}^{\mathtt{md}}_{\bf Q_*} \ea \Bigg\|_{\infty}} \times {g \bigg( \gamma, \Bigg\| \ba{c}   {\tilde {\bf Y}}^{\mathtt{md}}_{\bf Q_*} \\  {\tilde {\bf X}}^{\mathtt{md}}_{\bf Q_*} \ea \Bigg\|_{\infty} \bigg)}
\end{equation*}

\noindent
with $\small {g \bigg( \gamma, \Bigg\| \ba{c}   {\tilde {\bf Y}}^{\mathtt{md}}_{\bf Q_*} \\  {\tilde {\bf X}}^{\mathtt{md}}_{\bf Q_*} \ea \Bigg\|_{\infty} \bigg)} = \Bigg\| \ba{c}  {\tilde {\bf Y}}^{\mathtt{md}}_{\bf Q_*} \\  {\tilde {\bf X}}^{\mathtt{md}}_{\bf Q_*} \ea \Bigg\|_{\infty} \bigg(1 + {\gamma \Bigg\| \ba{c}   {\tilde {\bf Y}}^{\mathtt{md}}_{\bf Q_*} \\  {\tilde {\bf X}}^{\mathtt{md}}_{\bf Q_*} \ea \Bigg\|_{\infty}  }\bigg)  \bigg( \Bigg\| \ba{cc}  {\tilde {\bf M}^{\mathtt{md}}} &  {\tilde {\bf N}^{\mathtt{md}}}  \ea \Bigg\|_{\infty} + \gamma \bigg)$.\\

\noindent
\begin{equation*}
\scriptsize
    \mathcal{H}({\bf G}^{\mathtt{pt}}, {{\bf K}^{\mathtt{md}}_{\bf Q_*}})^2 -  \mathcal{H}({\bf G}^{\mathtt{pt}}, {\bf K}^{\mathtt{opt}})^2  \leq \Bigg[ \Bigg( \dfrac{1}{1 - \gamma \Bigg\| \ba{c} {\bf \tilde{Y}}^{\mathtt{md}}_{\bf Q_*} \\  {\bf \tilde{X}}^{\mathtt{md}}_{\bf Q_*} \ea \Bigg\|_{\infty}} \times {g \bigg( \gamma, \Bigg\| \ba{c}   {\tilde {\bf Y}}^{\mathtt{md}}_{\bf Q_*} \\  {\tilde {\bf X}}^{\mathtt{md}}_{\bf Q_*} \ea \Bigg\|_{\infty} \bigg)} \Bigg)^2 -1 \Bigg]  \mathcal{H}({\bf G}^{\mathtt{pt}}, {\bf K}^{\mathtt{opt}})^2.
\end{equation*}
\vspace{5pt}
\noindent
Therefore the relative error on the LQG cost becomes
\useshortskip
\begin{equation*}\label{relativeerror3}
\begin{aligned}
\dfrac{ \mathcal{H}({\bf G}^{\mathtt{pt}}, {{\bf K}^{\mathtt{md}}_{\bf Q_*}})^2 -  \mathcal{H}({\bf G}^{\mathtt{pt}}, {\bf K}^{\mathtt{opt}})^2 }{ \mathcal{H}({\bf G}^{\mathtt{pt}}, {\bf K}^{\mathtt{opt}})^2 } & \leq \Bigg[ \dfrac{1}{1 - \gamma \Bigg\| \ba{c} {\bf \tilde{Y}}^{\mathtt{md}}_{\bf Q_*} \\  {\bf \tilde{X}}^{\mathtt{md}}_{\bf Q_*} \ea \Bigg\|_{\infty}} \times {g \bigg( \gamma, \Bigg\| \ba{c}   {\tilde {\bf Y}}^{\mathtt{md}}_{\bf Q_*} \\  {\tilde {\bf X}}^{\mathtt{md}}_{\bf Q_*} \ea \Bigg\|_{\infty} \bigg)} \Bigg]^2 -1,\\
& = \dfrac{1}{\Big(1-\gamma \Bigg\| \ba{c} {\bf \tilde{Y}}^{\mathtt{md}}_{\bf Q_*} \\  {\bf \tilde{X}}^{\mathtt{md}}_{\bf Q_*} \ea \Bigg\|_{\infty} \Big)^2} \times {g \bigg( \gamma, \Bigg\| \ba{c}   {\tilde {\bf Y}}^{\mathtt{md}}_{\bf Q_*} \\  {\tilde {\bf X}}^{\mathtt{md}}_{\bf Q_*} \ea \Bigg\|_{\infty} \bigg)}^2 -1.\\
\end{aligned}
\end{equation*}


\begin{equation*}
    \textrm{with} \hspace{2pt}
    \small {g \bigg( \gamma, \Bigg\| \ba{c}   {\tilde {\bf Y}}^{\mathtt{md}}_{\bf Q_*} \\  {\tilde {\bf X}}^{\mathtt{md}}_{\bf Q_*} \ea \Bigg\|_{\infty} \bigg)} = \Bigg\| \ba{c}  {\tilde {\bf Y}}^{\mathtt{md}}_{\bf Q_*} \\  {\tilde {\bf X}}^{\mathtt{md}}_{\bf Q_*} \ea \Bigg\|_{\infty} \bigg(1 + {\gamma \Bigg\| \ba{c}   {\tilde {\bf Y}}^{\mathtt{md}}_{\bf Q_*} \\  {\tilde {\bf X}}^{\mathtt{md}}_{\bf Q_*} \ea \Bigg\|_{\infty}  }\bigg)  \bigg( \bigg\| \ba{cc}  {\tilde {\bf M}^{\mathtt{md}}} &  {\tilde {\bf N}^{\mathtt{md}}}  \ea \bigg\|_{\infty} + \gamma \bigg) .
\end{equation*}
\end{proof}

\section{Non-asymptotic System Identification }\label{appendixG}

\subsection{Noise Contaminated Plant}\label{appendixG1}

\begin{proof}\textbf{for \lemref{noisecontaminatedplant}}:
The proof is started by writing the Bezout identity for Dual Youla Parametrization using \thmref{DualYoulaaa},
\begin{equation}\label{dualyoulabezoutidentity}
\begin{split}
\ba{cc}  \tilde{\bf M}^{\mathtt{md}}-{\bf R}^{\mathtt{md}}{\bf X}^{\mathtt{md}} & \tilde{\bf N}^{\mathtt{md}}+{\bf R}^{\mathtt{md}}{\bf Y}^{\mathtt{md}}\\ - {\bf X}^{\mathtt{md}} &  {\bf Y}^{\mathtt{md}} \ea
\ba{cc}  \tilde {\bf Y}^{\mathtt{md}} & -({\bf N}^{\mathtt{md}} +  \tilde{\bf Y}^{\mathtt{md}}{\bf R}^{\mathtt{md}}) \\   \tilde {\bf X}^{\mathtt{md}} &  {\bf M}^{\mathtt{md}} - \tilde{\bf X}^{\mathtt{md}}{\bf R}^{\mathtt{md}} \ea = I_{p+m},\\
\ba{cc}   \tilde {\bf Y}^{\mathtt{md}} & -({\bf N}^{\mathtt{md}} +  \tilde{\bf Y}^{\mathtt{md}}{\bf R}^{\mathtt{md}}) \\   \tilde {\bf X}^{\mathtt{md}} &  {\bf M}^{\mathtt{md}} - \tilde{\bf X}^{\mathtt{md}}{\bf R}^{\mathtt{md}} \ea 
\ba{cc}   \tilde{\bf M}^{\mathtt{md}}-{\bf R}^{\mathtt{md}}{\bf X}^{\mathtt{md}} & \tilde{\bf N}^{\mathtt{md}}+{\bf R}^{\mathtt{md}}{\bf Y}^{\mathtt{md}}\\ - {\bf X}^{\mathtt{md}} &  {\bf Y}^{\mathtt{md}} \ea
 =  I_{p+m}.
\end{split}
\end{equation}

\noindent
Next the closed loop maps with $e_1$ is checked in \figref{clsysid2}.
First, for the left part  $e_1 = ({\bf M}^{\mathtt{md}})^{-1} \Big[u + {\bf \tilde{X}}^{\mathtt{md}} \Big( {\bf R}^{\mathtt{md}}e_1 + ({\bf \tilde{M}}^{\mathtt{md}} -  {\bf R}^{\mathtt{md}}{\bf X}^{\mathtt{md}})\nu \Big) \Big]$,
such that $ ({\bf M}^{\mathtt{md}} - {\bf \tilde{X}}^{\mathtt{md}} {\bf R}^{\mathtt{md}})e_1 = u + {\bf \tilde{X}}^{\mathtt{md}} ({\bf \tilde{M}}^{\mathtt{md}} -  {\bf R}^{\mathtt{md}}{\bf X}^{\mathtt{md}})\nu$,
therefore we get 
\useshortskip
\begin{equation}\label{sysidxxx}
    e_1 = ({\bf M}^{\mathtt{md}} - {\bf \tilde{X}}^{\mathtt{md}} {\bf R}^{\mathtt{md}})^{-1} \Big[ u + {\bf \tilde{X}}^{\mathtt{md}} ({\bf \tilde{M}}^{\mathtt{md}} -  {\bf R}^{\mathtt{md}}{\bf X}^{\mathtt{md}})\nu \Big].
\end{equation}

\noindent
Then for the right part $ y = {\bf N}^{\mathtt{md}}e_1 + {\bf \tilde{Y}}^{\mathtt{md}} \Big( {\bf R}^{\mathtt{md}}e_1 + ({\bf \tilde{M}}^{\mathtt{md}} -  {\bf R}^{\mathtt{md}}{\bf X}^{\mathtt{md}})\nu \big)$, 
such that $y = ({\bf N}^{\mathtt{md}} + {\bf \tilde{Y}}^{\mathtt{md}} {\bf R}^{\mathtt{md}})e_1 + {\bf \tilde{Y}}^{\mathtt{md}} ({\bf \tilde{M}}^{\mathtt{md}} -  {\bf R}^{\mathtt{md}}{\bf X}^{\mathtt{md}})\nu$.\\
\noindent
Next using \eqref{dualyoulabezoutidentity} and \eqref{sysidxxx} it follows that
\useshortskip
\begin{equation*}
\small
    \begin{aligned}
    y &=  ({\bf N}^{\mathtt{md}} + {\bf \tilde{Y}}^{\mathtt{md}} {\bf R}^{\mathtt{md}}) ({\bf M}^{\mathtt{md}} - {\bf \tilde{X}}^{\mathtt{md}} {\bf R}^{\mathtt{md}})^{-1} \Big[ u + {\bf \tilde{X}}^{\mathtt{md}} ({\bf \tilde{M}}^{\mathtt{md}} -  {\bf R}^{\mathtt{md}}{\bf X}^{\mathtt{md}})\nu \Big] + {\bf \tilde{Y}}^{\mathtt{md}} ({\bf \tilde{M}}^{\mathtt{md}} -  {\bf R}^{\mathtt{md}}{\bf X}^{\mathtt{md}})\nu\\
     &= ({\bf N}^{\mathtt{md}} + {\bf \tilde{Y}}^{\mathtt{md}} {\bf R}^{\mathtt{md}}) ({\bf M}^{\mathtt{md}} - {\bf \tilde{X}}^{\mathtt{md}} {\bf R}^{\mathtt{md}})^{-1}  u +\\ & + ({\bf N}^{\mathtt{md}} + {\bf \tilde{Y}}^{\mathtt{md}} {\bf R}^{\mathtt{md}}) ({\bf M}^{\mathtt{md}} - {\bf \tilde{X}}^{\mathtt{md}} {\bf R}^{\mathtt{md}})^{-1} {\bf \tilde{X}}^{\mathtt{md}} ({\bf \tilde{M}}^{\mathtt{md}} -  {\bf R}^{\mathtt{md}}{\bf X}^{\mathtt{md}})\nu   + {\bf \tilde{Y}}^{\mathtt{md}} ({\bf \tilde{M}}^{\mathtt{md}} -  {\bf R}^{\mathtt{md}}{\bf X}^{\mathtt{md}})\nu\\
     &= {\bf G}_{\bf R}^{\mathtt{md}} u + ({\bf N}^{\mathtt{md}} + {\bf \tilde{Y}}^{\mathtt{md}} {\bf R}^{\mathtt{md}}) ({\bf M}^{\mathtt{md}} - {\bf \tilde{X}}^{\mathtt{md}} {\bf R}^{\mathtt{md}})^{-1} (\tilde{\bf M}^{\mathtt{md}}-{\bf R}^{\mathtt{md}}{\bf X}^{\mathtt{md}}) {\bf {X}}^{\mathtt{md}} \nu   + {\bf \tilde{Y}}^{\mathtt{md}} ({\bf \tilde{M}}^{\mathtt{md}} -  {\bf R}^{\mathtt{md}}{\bf X}^{\mathtt{md}})\nu\\
     &= {\bf G}_{\bf R}^{\mathtt{md}} u +  ({\bf N}^{\mathtt{md}}+{\bf \tilde{Y}}^{\mathtt{md}} {\bf R}^{\mathtt{md}})  {\bf {X}}^{\mathtt{md}} \nu  + {\bf \tilde{Y}}^{\mathtt{md}} ({\bf \tilde{M}}^{\mathtt{md}} -  {\bf R}^{\mathtt{md}}{\bf X}^{\mathtt{md}})\nu\\
    &= {\bf G}_{\bf R}^{\mathtt{md}} u + \Big(  {\bf \tilde{Y}}^{\mathtt{md}} ({\bf \tilde{M}}^{\mathtt{md}} -  {\bf R}^{\mathtt{md}}{\bf X}^{\mathtt{md}}) + ({\bf N}^{\mathtt{md}}+{\bf \tilde{Y}}^{\mathtt{md}} {\bf R}^{\mathtt{md}})  {\bf {X}}^{\mathtt{md}}  \Big)\nu =  {\bf G}_{\bf R}^{\mathtt{md}} u + \nu.
    \end{aligned}
\end{equation*}
\end{proof}

\noindent
It can observed from \figref{clsysid2} and from proof of \lemref{noisecontaminatedplant} that
\begin{equation} \label{xysignals}
\begin{split}
    ({\bf M}^{\mathtt{md}} - {\bf \tilde{X}}^{\mathtt{md}} {\bf R}^{\mathtt{md}})e_1 & = u + {\bf \tilde{X}}^{\mathtt{md}} ({\bf \tilde{M}}^{\mathtt{md}} -  {\bf R}^{\mathtt{md}}{\bf X}^{\mathtt{md}})\nu \\
   & = w + ({\bf Y}^{\mathtt{md}})^{-1} {\bf X}^{\mathtt{md}} r  - ({\bf Y}^{\mathtt{md}})^{-1} {\bf X}^{\mathtt{md}} y  + {\bf \tilde{X}}^{\mathtt{md}}({\bf \tilde{M}}^{\mathtt{md}} - \mathbf{R}{\bf \tilde{X}}^{\mathtt{md}})\nu; \\\vspace{4pt}
      ({\bf N}^{\mathtt{md}} + {\bf \tilde{Y}}^{\mathtt{md}} {\bf R}^{\mathtt{md}})e_1 & = y - {\bf \tilde{Y}}^{\mathtt{md}} ({\bf \tilde{M}}^{\mathtt{md}} -  {\bf R}^{\mathtt{md}}{\bf X}^{\mathtt{md}})\nu.
\end{split}
\end{equation}

\noindent
Multiplying first part of \eqref{xysignals} by ${\bf Y}^{\mathtt{md}}$ and the second part of \eqref{xysignals} by ${\bf X}^{\mathtt{md}} $ it follows that
\begin{equation}\label{sysidxx}
e_1 = {\bf X}^{\mathtt{md}} r + {\bf Y}^{\mathtt{md}} w
\end{equation}
\noindent
Furthermore
\begin{equation}\label{sysidmx}
{\bf M}^{\mathtt{md}} e_1 = u +{\bf \tilde{X}}^{\mathtt{md}} e_2;  \hspace{5pt}
{\bf N}^{\mathtt{md}} e_1 = y - {\bf \tilde{Y}}^{\mathtt{md}} e_2
\end{equation}
\noindent
Multiplying first part of \eqref{sysidmx} by ${\bf \tilde{N}}^{\mathtt{md}}$ and second part of \eqref{sysidmx} by ${\bf \tilde{M}}^{\mathtt{md}}$ and subtracting it is obtained that
\begin{equation}\label{z}
e_2 = {\bf \tilde{M}}^{\mathtt{md}} y - {\bf \tilde{N}}^{\mathtt{md}} u
\end{equation}
Hence $e_1$ and $e_2$ can be measured directly from \eqref{sysidxx} and \eqref{z} respectively.

\subsection{Choice of Noise}

\begin{lem}
Let ${\bf H}$ be a discrete time dynamic system with input ${\bf X}$ and output ${\bf Y}$ s.t.
${\bf Y}(z) = {\bf H}(z) {\bf X}(z)$.
Let ${x}[n]$ be a stationary stochastic process with mean function $m_{x}[n]$ and spectral density $\phi_{\bf X}(z)$. Similarly,  ${y}[n]$ be a stationary stochastic process with mean function $m_{y}[n]$ and spectral density $\phi_{\bf Y}(z)$ Then $m_{y}[n] = {\bf H}(z)\Big|_{z=1} m_{x}[n]$ and
$\phi_{\bf Y}(z) = {\bf H}(z) \phi_{\bf X}(z) {\bf H}^T(z^{-1})$.
\end{lem}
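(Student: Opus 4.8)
The plan is to pass from the frequency-domain relation $\mathbf{Y}(z) = \mathbf{H}(z)\mathbf{X}(z)$ to the time domain, where it reads as the convolution $y[n] = \sum_{k} h[k]\, x[n-k]$, with $h[k]$ the impulse response recovered as the inverse $\mathbf{Z}$-transform of $\mathbf{H}$. Both claimed identities then follow by direct computation, using only the linearity of expectation together with the stationarity hypotheses; stability of $\mathbf{H}$ guarantees absolute convergence of all sums and hence justifies interchanging summation with expectation throughout.

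First I would settle the mean relation. Taking expectations termwise yields $m_y[n] = \sum_k h[k]\, m_x[n-k]$; since $x[n]$ is stationary its mean function is constant in $n$, so $m_x[n-k]$ can be pulled out of the sum, leaving $m_y[n] = \big(\sum_k h[k]\big)\, m_x[n]$. The (matrix) prefactor is precisely $\mathbf{H}(z)\big|_{z=1}$, because $\mathbf{H}(z) = \sum_k h[k] z^{-k}$ collapses to $\sum_k h[k]$ at $z = 1$. This is the first identity.

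For the spectral density I would compute the output autocovariance $R_{\mathbf{Y}}[\tau] = \mathbb{E}\big[y[n+\tau]\, y^T[n]\big]$. Substituting the convolution into each factor and invoking linearity produces the double sum $R_{\mathbf{Y}}[\tau] = \sum_k \sum_l h[k]\, R_{\mathbf{X}}[\tau - k + l]\, h^T[l]$, where $R_{\mathbf{X}}$ is the input autocovariance. The spectral density is the $\mathbf{Z}$-transform $\phi_{\mathbf{Y}}(z) = \sum_\tau R_{\mathbf{Y}}[\tau]\, z^{-\tau}$; performing the change of index $m = \tau - k + l$ separates the triple sum into the product $\big(\sum_k h[k] z^{-k}\big)\big(\sum_m R_{\mathbf{X}}[m] z^{-m}\big)\big(\sum_l h^T[l] z^{+l}\big)$, which one recognizes as $\mathbf{H}(z)\, \phi_{\mathbf{X}}(z)\, \mathbf{H}^T(z^{-1})$, the desired identity.

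The main obstacle is entirely the bookkeeping in this last step. Because the processes are vector-valued, the matrix ordering must be preserved: $h[k]$ multiplies on the left and $h^T[l]$ on the right, so the two copies of the impulse response cannot be commuted. One must also track that the rightmost sum carries the exponent $z^{+l}$ rather than $z^{-l}$, which is exactly what converts $\sum_l h[l] z^{-l} = \mathbf{H}(z)$ into $\sum_l h^T[l] z^{+l} = \mathbf{H}^T(z^{-1})$ and thereby produces both the transpose and the reflected argument in the final expression.
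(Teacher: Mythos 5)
Your proof is correct. Note that the paper states this lemma without any proof at all --- it is invoked as a classical fact about LTI filtering of stationary processes --- so your contribution is to supply the standard time-domain derivation: pass to the convolution $y[n]=\sum_k h[k]x[n-k]$, take expectations for the mean identity, and compute the output autocovariance and its $\mathbf{Z}$-transform for the spectral identity. Your index change $m=\tau-k+l$ and the careful tracking of the $z^{+l}$ exponent (which produces $\mathbf{H}^T(z^{-1})$ on the right) is exactly the right bookkeeping, and invoking stability of $\mathbf{H}$ to justify the interchange of sums is appropriate. The only cosmetic remark is that you write $R_{\mathbf{Y}}[\tau]=\mathbb{E}\bigl[y[n+\tau]y^T[n]\bigr]$, which is the autocorrelation rather than the autocovariance; for nonzero-mean processes one should center both signals first, but since $y[n]-m_y=\sum_k h[k]\,(x[n-k]-m_x)$ by the mean identity you already established, the identical computation applies to the centered processes, and in the paper's application the noises are zero-mean anyway, so this does not affect the validity of the argument.
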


\begin{lem}[Spectral Factorization]\cite{robustcontrol}
Given any arbitrary rational matrix ${\bf G}(z)$, one can always always find a rational matrix ${\bf V}(z)$, such that\\
\begin{equation}
    {\bf G}(z) {\bf G}^T(z^{-1}) = {\bf V}(z) {\bf V}^T(z^{-1}),
\end{equation}
where ${\bf V}$ is analytic outside the closed unit circle including at infinity.
\end{lem}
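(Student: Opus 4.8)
The plan is to realize ${\bf V}$ as a \emph{stable spectral factor} of the para-Hermitian rational matrix ${\bf \Phi}(z) \overset{def}{=} {\bf G}(z){\bf G}^T(z^{-1})$. First I would record the two structural properties of ${\bf \Phi}$ that make the construction possible: it is para-Hermitian, ${\bf \Phi}^T(z^{-1}) = {\bf \Phi}(z)$, which is immediate from the definition, and it is positive semidefinite on the unit circle, since for $z = e^{j\omega}$ one has $z^{-1} = \overline{z}$ and hence ${\bf \Phi}(e^{j\omega}) = {\bf G}(e^{j\omega}){\bf G}^*(e^{j\omega}) \succeq 0$. With these in hand the lemma reduces to producing a rational ${\bf V}$ that is analytic on the exterior $\{|z|>1\}\cup\{\infty\}$ (equivalently, all poles of ${\bf V}$ lie in the closed unit disk) and satisfies ${\bf \Phi}(z) = {\bf V}(z){\bf V}^T(z^{-1})$. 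I emphasize that the statement asks only for ${\bf V}$ itself to be analytic outside the closed disk and imposes no condition on ${\bf V}^{-1}$, so I need not enforce a minimum-phase (unimodular-inverse) requirement, which considerably lightens the argument.

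I would first settle the scalar case by a pole-zero symmetry argument. By para-Hermitian symmetry the poles and zeros of the scalar $\Phi(z)$ occur in reciprocal pairs $\{z_0, 1/z_0\}$, and, because $\Phi$ is real and nonnegative on $|z|=1$, any zero lying on the unit circle must have even multiplicity. I would then assign to $V(z)$ all poles and zeros of $\Phi$ lying in the open unit disk, together with exactly half (by multiplicity) of those on the circle, leaving the reciprocal set to be collected by $V(z^{-1})$; fixing the remaining positive multiplicative constant by matching the value of $\Phi$ at a convenient point yields $\Phi(z) = V(z)V(z^{-1})$ with $V$ analytic on the exterior.

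For the matrix case I would pass to a minimal state-space realization of ${\bf \Phi}$ and invoke the discrete-time Positive-Real / Kalman--Yakubovich--Popov lemma. Writing ${\bf \Phi}(z) = {\bf \Psi}(z) + {\bf \Psi}^T(z^{-1})$, where ${\bf \Psi}(z) = \tfrac{1}{2}{\bf \Phi}(\infty) + C(zI-A)^{-1}B$ collects the causal part with $A$ Schur, the positivity of ${\bf \Phi}$ on the circle guarantees the existence of a symmetric $P \succeq 0$ solving the associated discrete algebraic Riccati equation; from $P$ one defines $D_v$ with $D_v D_v^T = {\bf \Phi}(\infty) - CPC^T$ and $B_v = (B - APC^T)(D_v^T)^{-1}$, and sets ${\bf V}(z) = D_v + C(zI-A)^{-1}B_v$. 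A direct computation then shows ${\bf V}(z){\bf V}^T(z^{-1}) = {\bf \Phi}(z)$, and since $A$ is Schur its eigenvalues lie in the open unit disk, so ${\bf V}$ is analytic on $\{|z|\geq 1\}\cup\{\infty\}$ as required. An equivalent, more elementary route is to reduce ${\bf \Phi}$ via its Smith--McMillan form, factor each scalar diagonal entry as above, and recombine.

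The main obstacle is the matrix construction's reliance on the \emph{stabilizing} solution of the Riccati equation: its existence must be argued from ${\bf \Phi}(e^{j\omega}) \succeq 0$ (equivalently, that the associated symplectic pencil has no generalized eigenvalues on the unit circle), and the degenerate situation in which ${\bf \Phi}$ has zeros on the circle, so that ${\bf \Phi}(\infty) - CPC^T$ is merely semidefinite and $D_v$ is singular, requires a limiting or rank-reducing argument. Fortunately, since the lemma asks only for ${\bf V}$ to be analytic outside the disk and places no condition on ${\bf V}^{-1}$, these boundary zeros can be absorbed into ${\bf V}$ without obstruction. I would therefore expect the bulk of the work to lie in verifying the KYP/Riccati existence step and the identity ${\bf V}(z){\bf V}^T(z^{-1}) = {\bf \Phi}(z)$, both of which are classical.
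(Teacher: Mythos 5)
The paper does not actually prove this lemma: it is stated as a classical result imported from \cite{robustcontrol}, so there is no in-paper argument to compare against. Your outline is the standard route to that classical theorem --- para-Hermitian symmetry and positivity of ${\bf \Phi}(z)={\bf G}(z){\bf G}^T(z^{-1})$ on the circle, reciprocal pole--zero pairing in the scalar case, and the KYP/Riccati state-space construction in the matrix case --- and in that sense it is the right proof of the intended statement.

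Two points deserve flagging. First, the lemma as written (and hence your starting point) is looser than what the classical theorem covers: for an \emph{arbitrary} rational ${\bf G}$, the product ${\bf \Phi}$ may have poles on the unit circle, in which case ${\bf \Phi}(e^{j\omega})\succeq 0$ is not well posed and the additive decomposition ${\bf \Phi}={\bf \Psi}(z)+{\bf \Psi}^T(z^{-1})$ with Schur $A$ does not exist; the standard hypotheses exclude circle poles, and in the paper's application (${\bf G}=({\tilde{\bf M}}^{\mathtt{md}})^{-1}$ and $({\bf Y}^{\mathtt{md}})^{-1}$) this should be an explicit assumption. Second, the genuinely incomplete step in your construction is the rank-deficient case: whenever ${\bf G}$ fails to have full row rank (e.g.\ $p>m$), ${\bf \Phi}$ is singular as a rational matrix, $D_vD_v^T={\bf \Phi}(\infty)-CPC^T$ admits no invertible square root, and $B_v=(B-APC^T)(D_v^T)^{-1}$ is undefined; the fix is to take a \emph{rectangular} factor ${\bf V}$ of reduced column dimension (as in Youla's original factorization theorem), not a limiting argument. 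Relatedly, the proposed Smith--McMillan shortcut does not work as described: the unimodular transformations in the decomposition do not preserve the para-Hermitian structure, so factoring the diagonal entries and recombining does not produce something of the form ${\bf V}(z){\bf V}^T(z^{-1})$. With a no-poles-on-the-circle hypothesis added and the rectangular-factor version of the Riccati/Youla construction substituted for the square one, the proof goes through.
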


For the identification algorithm, it is needed that $\phi_{{\bf \tilde{M}}^{\mathtt{md}} \nu} (z) = \dfrac{1}{2\pi}I$. Therefore 
$
{\bf \tilde{M}}^{\mathtt{md}}(z) \phi_\nu (z) \times $
$({{\bf \tilde{M}}^{\mathtt{md}}})^T(z^{-1}) = \dfrac{1}{2\pi}I
$
such that $\phi_\nu (z)  = \dfrac{1}{2\pi}  ({{\bf \tilde{M}}^{\mathtt{md}}})^{-1}(z) ({{\bf \tilde{M}}^{\mathtt{md}}})^{-T}(z^{-1})$.
However, $\phi_\nu (z)$ may not necessarily be a density function for a stochastic process since the stability of $({\bf \tilde{M}}^{\mathtt{md}})^{-1}$ cannot be guaranteed. Therefore  we consider the spectral factorization of $\phi_{\nu}(z)$ as:
\begin{equation}
     ({\bf \tilde{M}}^{\mathtt{md}})^{-1}(z) ({\bf \tilde{M}}^{\mathtt{md}})^{-T}(z^{-1}) = I_\nu (z) I_\nu^T (z^{-1}),
\end{equation}
where $I_\nu$ has no poles outside the closed unit circle including at infinity. Next we choose $\{ \nu[n] \}_{n=1}^T$ to be a Gaussian process with zero mean and spectral density $\dfrac{1}{2\pi} I_\nu (z) I_\nu^T (z^{-1})$.
Furthermore,  as the input $u$ = ${\bf X}^{\mathtt{md}} r + {\bf Y}^{\mathtt{md}} w - {\bf X}^{\mathtt{md}} \nu$, and we want to make $\{ u[n] \}$ isotropic, consider the following:
 generate $\{ r[n] \}_{n=1}^T = \{ \nu[n] \}_{n=1}^T$, then $u = {\bf Y}^{\mathtt{md}} w$ such that  ${\bf Y}^{\mathtt{md}}(z) \phi_w (z) ({\bf Y}^{\mathtt{md}})^T(z^{-1}) = \dfrac{1}{2\pi}I$ with  $ \phi_w (z)  = \dfrac{1}{2\pi}  ({\bf Y}^{\mathtt{md}})^{-1}(z) ({\bf Y}^{\mathtt{md}})^{-T}(z^{-1})$. Next compute  $I_w(z)$ by spectral factorization such that
\begin{equation*}
     ({\bf Y}^{\mathtt{md}})^{-1}(z) ({\bf Y}^{\mathtt{md}})^{-T}(z^{-1}) = I_w (z) I_w^T (z^{-1}),
\end{equation*}
and $I_w(z)$ has no poles outside the closed unit circle including at infinity. Finally, we choose $\{ r[n] \}_{n=1}^T = \{ \nu[n] \}_{n=1}^T$ and $\{ w[n] \}_{n=1}^T$ as Gaussian processes with  zero mean  and spectral density $\dfrac{1}{2\pi}  I_w (z) I_w^T (z^{-1})$.

\subsection{Identification Preliminaries}
\noindent
Let's define an LTI system with dynamics as \eqref{stateeq} by its Markov parameters $\mathcal{M} = (C, A, B)$. 
For any matrix $A$, $\sigma_i(A)$ is defined as the $i^{th}$ singular value of $A$ with $\sigma_i(A) \geq \sigma_{i+1}(A)$. 
Similarly let $\rho_i(A)$ = $|\lambda_i(A)|$, where $\lambda_i(A)$ is the eigenvalue of $A$ with $\rho_i(A) \geq \rho_{i+1}(A)$.

\noindent
\textbf{Definition}: A matrix $A \in \mathbb{R}^{n\times n}$ is stable if $\rho(A) < 1$.

\noindent
\textbf{Definition}: The transfer function of $\mathcal{M} = (C, A, B)$ is given by $G(z) = C(zI - A)^{-1}B$, where $z \in \mathbb{C}$.

\noindent
\textbf{Definition}: The $(k, p, q)$ - dimensional Hankel matrix for $\mathcal{M} = (C, A, B)$ is defined as,
\begin{equation}
\small
    \mathcal{H}_{k,p,q}(R) = \begin{bmatrix} C_RA_R^kB_R & C_RA_R^{k+1}B_R & \dots &C_RA_R^{q+k-1}B_R \\C_RA_R^{k+1}B_R & C_RA_R^{k+2}B_R & \dots & C_RA_R^{q+k}B_R\\ \vdots & \vdots & \ddots & \vdots \\  C_RA_R^{p+k-1}B_R & \dots & \dots & C_RA_R^{p+q+k-2}B_R\\
\end{bmatrix} = \mathcal{H}_{k,p,q} \in \mathbb{R}^{p\times q}
\end{equation}

\noindent
and its associated Toeplitz matrix is defined as
$\mathcal{M}$ = $(C, A, B)$ is defined as,
\begin{equation}
\small
\mathcal{T}_{k,d}(R) = \begin{bmatrix} 0 & 0 & \dots & 0 & 0  \\ C_RA_R^{k}B_R & 0 & \dots & 0 & 0\\ \vdots & \vdots & \ddots  & \vdots & \vdots \\ C_RA_R^{d+k-3}B_R & \dots & C_RA_R^kB_R  & 0 & 0\\ C_RA_R^{d+k-2}B_R & C_RA_R^{d+k-3}B_R & \dots & C_RA_R^kB_R & 0\\ 
\end{bmatrix} = \mathcal{T}_{k,d}\in \mathbb{R}^{d\times d}
\end{equation}
\noindent
Also we define the Toeplitz matrix corresponding to the process noise as,
\begin{equation}
\small
    \mathcal{TO}_{k,d}(R) = \begin{bmatrix} 0 & 0 & \dots & 0 & 0  \\ C_RA_R^{k} & 0 & \dots & 0 & 0\\ \vdots & \vdots & \ddots  & \vdots & \vdots \\ C_RA_R^{d+k-3} & \dots &  C_RA_R^k  & 0 & 0\\ C_RA_R^{d+k-2} & C_RA_R^{d+k-3} & \dots & C_RA_R^k & 0\\ \end{bmatrix} = \mathcal{TO}_{k,d} \in \mathbb{R}^{p\times q}
\end{equation}


\noindent
Let's also define, $\Bar{z}_{l,d}^{+}$ = $ \ba{cccc} z_{l}\\ z_{l+1}\\ \vdots \\ z_{l+d-1}  \ea$  and $\Bar{z}_{l,d}^{-}$ = $\ba{cccc} z_l\\ z_{l-1}\\ \vdots \\ z_{l-d+1}  \ea$ \\





\noindent
\subsection{Identification Algorithms }

\noindent
The algorithms for System Identification using the result from \cite{Dahleh2020} are stated below: 

\begin{algorithm}[H]\label{identificationalgorithm}
\SetKwInOut{Input}{Input}
\Input{$T = $ Time horizon for learning, $d =$ Hankel matrix size, $m =$ Input dimension, $p =$ Output dimension} 
\SetAlgoLined
\KwData{signal $\{r_j\}_{j=1}^{2T}, \{w_j\}_{j=1}^{2T}$, signal $\{y_j\}_{j=1}^{2T}, \{u_j\}_{j=1}^{2T}$ }
\KwResult{Estimated Hankel Matrix for ${\bf R}^{\mathtt{md}}$, $\mathcal{\hat{H}}_{0,d,d}$}
\BlankLine
 Generate $2T$ noises, $\{r_t\}_{t=1}^{2T}, \{w_t\}_{t=1}^{2T}$, $\{\nu_t\}_{t=1}^{2T}$ \;
  Select input, $\{u_t\}_{t=1}^{2T} =  X\{r_t\}_{t=1}^{2T} + Y\{w_t\}_{t=1}^{2T} - X\{\nu_t\}_{t=1}^{2T} \in \mathbb{R}^{m \times 2T}$\;
  Collect $2T$ input-output pairs, $\{u_t, z_t\}_{t=1}^{2T}$\;
  Do $\mathcal{\hat{H}}_{0,d,d} = \arg \min \limits_{\mathcal{H}} \sum \limits_{l=0}^{T-1} \Big\|\bar{z}_{l+d+1,d}^{+} - \mathcal{H}\bar{u}_{l+d,d}^{-} \Big\|_2^2 $\;
  Return $\mathcal{\hat{H}}_{0,d,d} $.
\caption{Derive Hankel Matrix for ${\bf R}^{\mathtt{md}}$}
\end{algorithm}
\vspace{5pt}

\noindent
By running Algorithm-\ref{identificationalgorithm} for $T$ times the desired $\mathcal{\hat{H}}_{0,d,d}$ is obtained.
It can be shown that, 
\begin{equation*}
    \mathcal{\hat{H}}_{0,d,d} = \bigg( \sum\limits_{l=0}^{T-1} \bar{z}_{l+d+1,d}^{+} (\bar{U}_{l+d,d}^{-})^{T} \bigg) \bigg( \sum\limits_{l=0}^{T-1} \bar{U}_{l+d,d}^{-} (\bar{U}_{l+d,d}^{-})^{T}  \bigg)
\end{equation*}

\noindent
Parameters needed before Algorithm-\ref{identificationalgorithm2}:
$m$: Input Dimension, $p$: Output Dimension, $\delta $:  Error probability,
$\beta $: Upper bound on $\|\mathcal{R}\|_{\infty}$, $c, C$: Known absolute constants, $\mathcal{R}$: upper bound on  $\dfrac{\|\mathcal{TO}_{k,d}\|_2}{\|\mathcal{T}_{0,\infty}\|_2}$.\\

\begin{algorithm}[H]\label{identificationalgorithm2}
\SetKwInOut{Output}{Output}
\Output{$\hat{d}, \mathcal{\hat{H}}_{0,d,d}$}
\SetAlgoLined
\BlankLine
  Define $\mathcal{D}(T) = \{ d \in \mathbb{N} | d \leq \dfrac{T}{cm^2 log^3(Tm/\delta)} \} $, $\alpha(h) = \sqrt{h}.\big(\sqrt{\dfrac{m+hp+log(T/\delta)}{T}}\big)$\;
  Define $d_0(T, \delta) = \inf \limits_{l \in \mathbb{N}} \{l \big| \Big\| \mathcal{\hat{H}}_{0,l,l} - \mathcal{\hat{H}}_{0,h,h}\Big\|_2 \leq 16\beta R(\alpha (h) + 2 \alpha (l) ), \forall h \in D(T), h \geq l \}$\;
  Choose $\hat{d} = \max \{d_0(T, \delta), \left \lceil(log(T/\delta)) \right \rceil \} $ \;
  Return $\hat{d}$, $\mathcal{\hat{H}}_{0,\hat{d},\hat{d}}$.
\caption{Choice of $d$}
\end{algorithm}

\vspace{10pt}

\noindent
{\bf Remarks}: $d_*(T,\delta), T_*(\delta)$ is defined in Subsection-\ref{probabilisticgurantee}. $d_*(T,\delta)$ depends on $ \Big\| \mathcal{\hat{H}}_{0,\hat{d},\hat{d}} - \mathcal{\hat{H}}_{0,\infty,\infty}\Big\|_2 $, which is unknown. 
\noindent
In such case, $\hat{d}$ can be an approximation for $d_*(T,\delta)$.
Algorithm-\ref{identificationalgorithm2} does not require any prior knowledge on $d_*(T,\delta)$, the unknown parameter only appear in $T_*(\delta)$ in order to give a theoretical guarantee.
Finally Algorithm-\ref{identificationalgorithm3} is described at the top of next page to retrieve the state space realization parameters of ${\bf R}(z)$ , which is a balanced truncation on $\mathcal{\hat{H}}_{0,\hat{d},\hat{d}}$ with $\hat{d}$ obtained from Algorithm-\ref{identificationalgorithm2}.

\vspace{10pt}



\begin{algorithm}[H]\label{identificationalgorithm3}
\SetKwInOut{Input}{Input}

\Input{$T = $ Time horizon for learning, $\hat{d} =$ Hankel size dimension, $m =$ Input dimension, $p =$ Output dimension} 
\SetKwInOut{Output}{Output}
\Output{System Parameters: $(\hat{C}_{\hat{d}}, \hat{A}_{\hat{d}}, \hat{B}_{\hat{d}})$}
\SetAlgoLined
\BlankLine
  $\mathcal{H} = \mathcal{H}_{0, \hat{d}, \hat{d}}$\;
  Pad $\mathcal{H}$ with zeros to ake of dimension $4p \hat{d} \times 4 m \hat{d} $\;
  $U, \Sigma, V \xleftarrow{}$ SVD of $\mathcal{H}$\;
  $U_{\hat{d}}$, $V_{\hat{d}} \xleftarrow{}$ top $\hat{d}$ singular vectors\;
  $\hat{C}_{\hat{d}} \xleftarrow{}$ first $p$ rows of $U_{\hat{d}} \Sigma_{\hat{d}}^{1/2}$\;
  $\hat{B}_{\hat{d}} \xleftarrow{}$ first $m$ columns of $\Sigma_{\hat{d}}^{1/2} V_{\hat{d}}^T$\;
  $Z_0 = [ U_{\hat{d}} \Sigma_{\hat{d}}^{1/2} ]_{1:4p\hat{d}-p} $, $Z_1 = [\Sigma_{\hat{d}}^{1/2} V_{\hat{d}}^T]_{p+1} $\;
  $\hat{C}_{\hat{d}} \xleftarrow{} (Z_0^T Z_0)^{-1} Z_0^T Z_1$\;
  Return $(\hat{C}_{\hat{d}}, \hat{A}_{\hat{d}}, \hat{B}_{\hat{d}})$.
\caption{Retrieving Matrices $A$, $B$, $C$ of ${\bf R}^{\mathtt{md}}$}
\end{algorithm}

\subsection{End to End Complexity Proof}\label{appendixG5}

\begin{proof}\textbf{of \lemref{HRbound}}:
    Let ${{ G}^{\mathtt{md}}_i}$ represents the $i^{th}$ Markov parameter of ${\bf R}^{\mathtt{md}}$, i.e.   ${{ G}^{\mathtt{md}}_i}= {C}^{\mathtt{md}}({A}^{\mathtt{md}})^i {B}^{\mathtt{md}}$ \\ and ${{\bf G}^{\mathtt{pt}}_i}$ represents the $i^{th}$ Markov parameter of ${\bf R}^{\mathtt{pt}}$, i.e.   ${{ G}^{\mathtt{pt}}_i} = {C}^{\mathtt{pt}}({A}^{\mathtt{pt}})^i{B}^{\mathtt{pt}}$.
Then ${\bf R}^{\mathtt{md}} - {\bf R}^{\mathtt{pt}}$ = $\sum \limits_{i=0}^{\infty} ({{G}^{\mathtt{md}}_i} - {{G}^{\mathtt{pt}}_i})z^{-i}$.
For a fixed $i$, $\Big\| ({{G}^{\mathtt{md}}_i} - {{G}^{\mathtt{pt}}_i})z^{-i} \Big\|_{\infty}^2$ = $\max\limits_{\sum \limits_{k=0}^{\infty} u_k^T u_k = 1} \sum \limits_{k=0}^{\infty} y_k^T y_k$ = $\max\limits_{\sum \limits_{k=0}^{\infty} u_k^T u_k = 1} \sum \limits_{k=0}^{\infty} \Big\| ({{G}^{\mathtt{md}}_i} - {{G}^{\mathtt{pt}}_i})u_{k-i} \Big\|_{2}^2$, 
such that $\Big\| ({{G}^{\mathtt{md}}_i} - {{G}^{\mathtt{pt}}_i})z^{-i} \Big\|_{\infty}^2 \leq \|({{G}^{\mathtt{md}}_i} - {{G}^{\mathtt{pt}}_i})\|_2^2 \max\limits_{\sum \limits_{k=0}^{\infty} u_k^T u_k = 1} \sum \limits_{k=0}^{\infty} \Big\| u_{k-i} \Big\|_{2}^2$ = $ \|({{G}^{\mathtt{md}}_i} - {{G}^{\mathtt{pt}}_i})\|_2^2 $,
by defining  $u_{-1} = u_{-2} = u_{-3} = .... = 0$ for convenience. \\ Furthermore
$\Big\| {\bf R}^{\mathtt{md}}  - {\bf R}^{\mathtt{pt}} \Big\|_{\infty}^2 = \Big\|  \sum \limits_{i=0}^{\infty} ({{G}^{\mathtt{md}}_i} - {{G}^{\mathtt{pt}}_i})z^{-i} \Big\|_{\infty}^2 \leq    \sum \limits_{i=0}^{\infty} \Big\|  ({{G}^{\mathtt{md}}_i} - {{G}^{\mathtt{pt}}_i})z^{-i} \Big\|_{\infty}^2  $  (due to the convexity of $\|.\|_{\infty}^2$).
Then by using the two inequalities above it follows that  $\Big\| {\bf R}^{\mathtt{md}}  - {\bf R}^{\mathtt{pt}} \Big\|_{\infty}^2 \leq   \sum \limits_{i=0}^{\infty} \Big\|  ({{G}^{\mathtt{md}}_i} - {{G}^{\mathtt{pt}}_i}) \Big\|_{2}^2  = \sum \limits_{k=0}^\infty \| {C}^{\mathtt{md}}({A}^{\mathtt{md}})^k {B}^{\mathtt{md}} - {C}^{\mathtt{pt}} ({A}^{\mathtt{pt}})^k {B}^{\mathtt{pt}} \|_{2}^2 \leq \Big\| \mathcal{\hat{H}}_{0,d,d} - \mathcal{\hat{H}}_{0,\infty,\infty}\Big\|_2^2$.\\
Finally by using \eqref{probabilitybound}, 
$\Big\| {\bf R}^{\mathtt{md}}  - {\bf R}^{\mathtt{pt}} \Big\|_{\infty}^2 \leq 144 c^2 \beta^2 \mathcal{R}^2 \bigg({\dfrac{m\hat{d}+p\hat{d}^2+ \hat{d}log(T/\delta)}{T}}\bigg)$.
\end{proof}

\end{document}